\renewcommand{\epsilon}{\varepsilon}
\def\rn{\rho_{\mathcal{N}}}
\def\sn{\sigma_{\mathcal{N}}}
\def\rt{\rho_{\mathcal{T}}}
\def\st{\sigma_{\mathcal{T}}}
\newtheorem{thm}{Theorem}[section]
\newtheorem*{thm*}{Theorem}
\newtheorem{lem}[thm]{Lemma}
\newtheorem{cor}[thm]{Corollary}
\newtheorem{ex}[thm]{Example}
\newtheorem{defi}[thm]{Definition}
\newtheorem{prop}[thm]{Proposition}
\newtheorem{remark}[thm]{Remark}
\begin{document}

\author{Andreas Bluhm}
\email{andreas.bluhm@ma.tum.de}
\address{Zentrum Mathematik, Technische Universit\"at M\"unchen, Boltzmannstrasse 3, 85748 Garching, Germany}

\author{{\'A}ngela Capel }
\email{angela.capel@icmat.es}
\address{Instituto de ciencias matem{\'a}ticas (CSIC-UAM-UC3M-UCM), C/ Nicol{\'a}s Cabrera 13--15, Campus de Cantoblanco, 28049, Madrid, Spain}
\title[A strengthened data processing inequality for the BS-entropy]{A strengthened data processing inequality for the Belavkin-Staszewski relative entropy}

\begin{abstract}
In this work, we provide a strengthening of the data processing inequality for the relative entropy introduced by Belavkin and Staszewski (BS-entropy). 
This extends previous results by Carlen and Vershynina for the relative entropy and other standard $f$-divergences. To this end, we provide two new equivalent conditions for the equality case of the data processing inequality for the BS-entropy. Subsequently, we extend our result to a larger class of maximal $f$-divergences. Here, we first focus on quantum channels which are conditional expectations onto subalgebras and use the Stinespring dilation to lift our results to arbitrary quantum channels. 
\end{abstract}

\date{\today}

\maketitle

\tableofcontents

\section{Introduction}

Quantum $f$-divergences are important in quantum information theory, because they can be used to quantify the similarity of quantum states. Therefore, they fulfill fundamental properties such as data processing, since the distinguishabilty of quantum states cannot increase under the application of a quantum channel. The most important such $f$-divergence is the relative entropy which is defined as
\begin{equation*}
D(\sigma\|\rho) := \tr[\sigma(\log \sigma - \log \rho)]
\end{equation*}
for positive definite quantum states $\rho$, $\sigma$. The relative entropy is one example of the so-called \emph{standard $f$-divergences} \cite[Section 3.2]{Hiai2017}, which are defined as
\begin{equation*}
S_f(\sigma \| \rho) = \tr[\rho^{1/2} f(L_{\sigma} R_{\rho^{-1}})(\rho^{1/2})]
\end{equation*}
for an operator convex function $f:(0, \infty) \to \mathbb R$. Here, $L_{A}$ and $R_{A}$ denote the left and right multiplication by the matrix $A$, respectively. The relative entropy arises by letting $f(x) = x \log x$. This is, however, not the only way to generalize the classical $f$-divergences introduced in \cite{Ali1966, Csiszar1967}. The \emph{maximal $f$-divergences} are defined as
\begin{equation*}
\hat S_f(\sigma \| \rho) = \tr[\rho f(\rho^{-1/2} \sigma \rho^{-1/2})]
\end{equation*}
for an operator convex  function $f:(0, \infty) \to \mathbb R$ and were defined in \cite{Petz1998}.
They were recently studied in \cite{Matsumoto2018} where also the name was introduced (see also \cite[Section 3.3]{Hiai2017} and references therein). For $f(x) = x \log x$, we obtain after a short computation the relative entropy introduced by Belavkin and Staszewski in \cite{Belavkin1982}, which we will call  \textit{BS-entropy} for short:
\begin{equation*} 
\hat S_{\mathrm{BS}}(\sigma \| \rho) := -\tr[\sigma \log(\sigma^{-1/2}\rho \sigma^{-1/2})].
\end{equation*}

It is known that both the standard and maximal $f$-divergences satisfy data processing, i.e.\ they decrease under the application of quantum channels (completely positive trace-preserving maps). The study of conditions for equality in the data processing inequality for the relative entropy, i.e.\ for which $\rho$, $\sigma$ we have $D(\sigma\|\rho) = D(\Phi(\sigma)\|\Phi(\rho))$ for some fixed quantum channel $\Phi$, has led to the discovery of quantum Markov states \cite{Hayden2004}. In particular, the relative entropy is preserved if and only if $\rho$ and $\sigma$ can be recovered by the Petz recovery map $\mathcal{R}_\Phi^\sigma(X) = \sigma^{1/2} \Phi^{\ast}(\Phi(\sigma)^{-1/2} X \Phi(\sigma)^{-1/2})\sigma^{1/2}$, i.e.\ $\sigma = \mathcal{R}_\Phi^\sigma(\Phi(\sigma))$ and $\rho = \mathcal{R}_\Phi^\sigma(\Phi(\rho))$ \cite{Petz2003}. This is true for all standard $f$-divergences for which $f$ is ``complicated enough''. We refer the reader to \cite[Theorem 3.18]{Hiai2017} for a list of equivalent conditions. For $\Phi=\mathcal E$ and $\mathcal E$ the trace-preserving conditional expectation onto a unital matrix subalgebra $\mathcal N$ of $\mathcal B(\mathcal H)$, \cite{Carlen2017a} shows that the equality condition is stable in the sense that
\begin{equation} \label{eq:vc_main_result}
D(\sigma\|\rho) - D(\sigma_{\mathcal N}\|\rho_{\mathcal N}) \geq  \left( \frac{\pi}{8} \right)^4 \norm{L_{\rho} R_{\sigma^{-1}}}^{-2}_\infty \norm{\mathcal{R}_\Phi^\sigma(\rho_{\mathcal N}) - \rho}_1^4.
\end{equation}
Here, we have written $\sigma_{\mathcal N} := \mathcal E (\sigma)$ and $\rho_{\mathcal N}: = \mathcal E(\rho)$. This can also be interpreted as a strengthening of the data processing inequality.
Subsequent work has generalized the above result to more general standard $f$-divergences \cite{Carlen2018} and Holevo's just-as-good fidelity \cite{Wilde2018}. 

The difference of relative entropies that appears on the left hand side of Equation \eqref{eq:vc_main_result} has been studied intensively in the context of quantum information and quantum thermodynamics \cite{Faist2018, Faist2018a}. Moreover, for $\mathcal{E}$ a partial trace, it has been characterized as a conditional relative entropy in \cite{Capel2018}. Equation \eqref{eq:vc_main_result} is the first strengthening of the data processing inequality for the relative entropy in terms of the ``distance'' between a state and its recovery by the Petz map, although there have been many other results with a similar spirit in the last years, see e.g\ \cite{Fawzi2015, Berta2017, Sutter2017b,  Sutter2017, Junge2018}.

Our work gives analogous results to the ones in \cite{Carlen2017a, Carlen2018} for maximal $f$-divergences. For these, preservation of the maximal $f$-divergence, i.e.\ $\hat{S}_f(\Phi(\sigma) \| \Phi(\rho)) = \hat{S}_f(\sigma \| \rho)$, is not equivalent to $\sigma$, $\rho$ being recoverable in the sense of Petz, although the latter implies the former. Equivalent conditions to the preservation of a maximal $f$-divergence for the case in which $\Phi  $ is a  completely positive trace-preserving map are given in \cite[Theorem 3.34]{Hiai2017}. In our work, we prove two other equivalent conditions, which we  use to prove a strengthened data processing inequality for some maximal $f$-divergences and in particular for the BS-entropy. 

This manuscript is structured as follows: In Section \ref{sec:main_results}, we list the main results of the present article. Important results on standard and maximal $f$-divergences are reviewed in Section \ref{sec:preliminaries}. In Section \ref{sec:equality_condition}, we provide two new conditions which are equivalent to the preservation of the BS-entropy under a quantum channel. We use this result in Section \ref{sec:data_processing} to prove our strengthened data processing inequality for the BS-entropy under a conditional expectation, which we subsequently generalize to other maximal $f$-divergences in Section \ref{sec:maximal_f_divergences}. Finally, in Section \ref{sec:quantum-channels}, we extend this result to general quantum channels.

\section{Main results} \label{sec:main_results}
In this section, we state the main results of this work. All quantum systems appearing will be finite dimensional. Let  $\sigma$, $\rho$ be two positive definite quantum states on a matrix algebra $\mathcal M$. We use the abbreviations $\Gamma := \sigma^{-1/2}\rho \sigma^{-1/2}$ and $\Gamma_{\mathcal T} := \sigma_{\mathcal T}^{-1/2}\rho_{\mathcal T} \sigma_{\mathcal T}^{-1/2}$, where $\mathcal N$ is another matrix algebra, $\mathcal T: \mathcal{M} \rightarrow \mathcal{N}$ is a  completely positive trace-preserving map and $\rho_{\mathcal T} := \mathcal T(\rho)$, $\sigma_{\mathcal T} := \mathcal T(\sigma)$. Our first result consists of two conditions which are equivalent to the preservation of the BS-entropy under $\mathcal T$. It follows from Theorem \ref{thm:condition-equality} together with Proposition  \ref{prop:strangecond_implies_BSrecovery} and Theorem \ref{thm:bound-data-processing-BS-channels}.

\begin{thm} \label{thm:new_equality_condition}Let $\mathcal M$ and $\mathcal{N}$ be two matrix algebras and let
$\sigma > 0$, $\rho > 0$ be two quantum states on  $\mathcal M$. Let
 $\mathcal T: \mathcal M \to \mathcal N$ be a completely positive trace-preserving map and let $V$ be the isometry associated to the Stinespring dilation (Theorem \ref{thm:stinespring}) of $\mathcal{T}$.  Then, the following are equivalent:\begin{enumerate}
		\item $\hat S_\mathrm{BS}(\sigma \| \rho) = \hat S_\mathrm{BS}(\sigma_{\mathcal T} \| \rho_{\mathcal T})$
		\item $\label{eq:new_max_equality}
		\sigma^{-1} \rho =  \mathcal{T}^* \left( \sigma_{\mathcal T}^{-1} \rho_{\mathcal T} \right) $
		\item $V \, \sigma^{1/2} \, V^* \left( \st^{-1/2} \, \Gamma_\mathcal{T}^{1/2} \, \st^{1/2} \otimes I  \right) = V \, \Gamma^{1/2} \, \sigma^{1/2} \,  V^*$.

	\end{enumerate}
\end{thm}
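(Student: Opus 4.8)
The plan is to establish the cycle $(1)\Rightarrow(2)\Rightarrow(3)\Rightarrow(1)$. Write $m:=\rho\sigma^{-1}$ and $m_{\mathcal T}:=\rt\st^{-1}$; since $m=\sigma^{1/2}\Gamma\sigma^{-1/2}$ and $m_{\mathcal T}=\st^{1/2}\Gamma_{\mathcal T}\st^{-1/2}$, these operators have strictly positive spectrum, their holomorphic functional calculus is available, and $\hat S_{\mathrm{BS}}(\sigma\|\rho)=-\tr[\sigma\log m]$, $\hat S_{\mathrm{BS}}(\sigma_{\mathcal T}\|\rho_{\mathcal T})=-\tr[\st\log m_{\mathcal T}]$. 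Realizing $\mathcal T(X)=\operatorname{tr}_{E}(VXV^{*})$ and $\mathcal T^{*}(Y)=V^{*}(Y\otimes I_{E})V$ with $V^{*}V=I$, a short manipulation (left-multiply by $V^{*}$, cancel $\sigma^{1/2}$, take adjoints and iterate) shows that condition $(3)$ is equivalent to the intertwining identity $(m_{\mathcal T}\otimes I_{E})\,V=V\,m$, and I will use this form. The two ingredients I will rely on are the integral representation $-\log x=\int_{0}^{\infty}\bigl((x+t)^{-1}-(1+t)^{-1}\bigr)\,dt$ and the equality case of the Kadison--Schwarz inequality: for a unital completely positive map $\Phi$ with Stinespring isometry $W$ one has $\Phi(a^{*}a)=\Phi(a)^{*}\Phi(a)$ iff $(a\otimes I)W=W\Phi(a)$, and then $\Phi(a^{k})=\Phi(a)^{k}$ for all $k$; moreover for self-adjoint $a$ already $\Phi(a^{2})=\Phi(a)^{2}$ suffices and $a$ then lies in the multiplicative domain of $\Phi$.

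For $(1)\Rightarrow(2)$, which I expect to be the main obstacle, start from
\[
\hat S_{\mathrm{BS}}(\sigma\|\rho)-\hat S_{\mathrm{BS}}(\sigma_{\mathcal T}\|\rho_{\mathcal T})=\int_{0}^{\infty}\Bigl(\tr\bigl[\sigma(\Gamma+t)^{-1}\bigr]-\tr\bigl[\st(\Gamma_{\mathcal T}+t)^{-1}\bigr]\Bigr)\,dt .
\]
Writing $\Gamma+t=\sigma^{-1/2}(\rho+t\sigma)\sigma^{-1/2}$, the $t$-integrand equals $\tr[\sigma(\rho+t\sigma)^{-1}\sigma]-\tr[\st(\rt+t\st)^{-1}\st]$, and the joint operator convexity of $(A,B)\mapsto AB^{-1}A$ (equivalently, $2$-positivity of $\mathcal T$) gives $\mathcal T\bigl(\sigma(\rho+t\sigma)^{-1}\sigma\bigr)\ge\st(\rt+t\st)^{-1}\st$; hence each integrand is nonnegative and, being continuous in $t$, equality in $(1)$ forces $\mathcal T\bigl(\sigma(\rho+t\sigma)^{-1}\sigma\bigr)=\st(\rt+t\st)^{-1}\st$ for all $t>0$. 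Conjugating by $\st^{-1/2}$, this reads $\Lambda\bigl((\Gamma+t)^{-1}\bigr)=(\Gamma_{\mathcal T}+t)^{-1}$ for the unital completely positive map $\Lambda(X):=\st^{-1/2}\mathcal T(\sigma^{1/2}X\sigma^{1/2})\st^{-1/2}$, for which $\Lambda(\Gamma)=\Gamma_{\mathcal T}$ automatically; expanding in powers of $1/t$ gives $\Lambda(\Gamma^{2})=\Lambda(\Gamma)^{2}$, so $\Gamma$ lies in the multiplicative domain of $\Lambda$ and therefore $\mathcal T(mZ)=m_{\mathcal T}\mathcal T(Z)$ for every $Z\in\mathcal M$; dualizing and taking the test operator to be the identity yields $m=\mathcal T^{*}(m_{\mathcal T})$, i.e.\ $\sigma^{-1}\rho=\mathcal T^{*}(\st^{-1}\rt)$. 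The points requiring care are the interchange of the $t$-integral with the limits defining $-\log$, the continuity argument, and the recovery of an operator identity valid for all $t$ from its $1/t$-expansion.

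For $(2)\Rightarrow(3)$, assume $m=\mathcal T^{*}(m_{\mathcal T})$. Kadison--Schwarz for $\mathcal T^{*}$ gives $\mathcal T^{*}(m_{\mathcal T}^{*}m_{\mathcal T})\ge\mathcal T^{*}(m_{\mathcal T})^{*}\mathcal T^{*}(m_{\mathcal T})=m^{*}m$, so pairing with $\sigma$ yields $\tr[\rt^{2}\st^{-1}]=\tr[\st\,m_{\mathcal T}^{*}m_{\mathcal T}]\ge\tr[\sigma\,m^{*}m]=\tr[\rho^{2}\sigma^{-1}]$; but $2$-positivity of $\mathcal T$ (applied to $A=\rho$, $B=\sigma$) gives the reverse inequality $\tr[\rt^{2}\st^{-1}]\le\tr[\rho^{2}\sigma^{-1}]$. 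Hence the traces coincide, and since $\mathcal T^{*}(m_{\mathcal T}^{*}m_{\mathcal T})-m^{*}m\ge0$ and $\sigma>0$ we obtain the Kadison--Schwarz equality $\mathcal T^{*}(m_{\mathcal T}^{*}m_{\mathcal T})=\mathcal T^{*}(m_{\mathcal T})^{*}\mathcal T^{*}(m_{\mathcal T})$, equivalently $(m_{\mathcal T}\otimes I_{E})V=V\mathcal T^{*}(m_{\mathcal T})=Vm$, which is $(3)$. Finally, for $(3)\Rightarrow(1)$: from $(m_{\mathcal T}\otimes I_{E})V=Vm$ one gets $(m_{\mathcal T}^{k}\otimes I_{E})V=Vm^{k}$ for all $k$, hence $(p(m_{\mathcal T})\otimes I_{E})V=Vp(m)$ for any polynomial $p$; taking $p$ to agree with $\log$ on the finite set $\operatorname{spec}(m_{\mathcal T})$, and using $\operatorname{spec}(m)\subseteq\operatorname{spec}(m_{\mathcal T})$ (which follows from $m^{k}=\mathcal T^{*}(m_{\mathcal T}^{k})$), gives $(\log m_{\mathcal T}\otimes I_{E})V=V\log m$. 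Right-multiplying by $\sigma V^{*}$, tracing over the dilated space, and using $V^{*}V=I$ together with $\operatorname{tr}_{E}(V\sigma V^{*})=\st$ yields $\tr[\sigma\log m]=\tr[\st\log m_{\mathcal T}]$, i.e.\ $\hat S_{\mathrm{BS}}(\sigma\|\rho)=\hat S_{\mathrm{BS}}(\sigma_{\mathcal T}\|\rho_{\mathcal T})$, closing the cycle.
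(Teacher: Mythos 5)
Your proof is correct, and it takes a genuinely different route from the paper's. The paper establishes $(1)\Leftrightarrow(2)$ (Theorem \ref{thm:condition-equality}) via the contraction $U(X)=\sigma^{1/2}\mathcal T^*(\st^{-1/2}X)$, an operator Jensen argument for $f_t(x)=(t+x)^{-1}-t^{-1}$, and a differentiation-in-$t$ trick that upgrades a quadratic-form identity to the operator identity $Uf(\Gamma_{\mathcal T})\st^{1/2}=f(\Gamma)\sigma^{1/2}$, with the converse $(2)\Rightarrow(1)$ outsourced to \cite[Theorem 3.34]{Hiai2017}; you instead work with the unital CP map $\Lambda(X)=\st^{-1/2}\mathcal T(\sigma^{1/2}X\sigma^{1/2})\st^{-1/2}$, force the pointwise operator equality $\mathcal T(\sigma(\rho+t\sigma)^{-1}\sigma)=\st(\rt+t\st)^{-1}\st$ from the vanishing of a nonnegative integrand, and read off $\Lambda(\Gamma^2)=\Lambda(\Gamma)^2$ from the $1/t$-expansion, so that Choi's multiplicative-domain theorem delivers $(2)$. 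More significantly, the paper only proves $(3)\Rightarrow(2)$ directly (Proposition \ref{prop:strangecond_implies_BSrecovery}) and explicitly remarks that the converse cannot be obtained there except as a consequence of the quantitative bound in Theorem \ref{thm:bound-data-processing-BS-channels}; your cycle avoids that machinery entirely by getting $(2)\Rightarrow(3)$ from the equality case of the Kadison--Schwarz inequality (detected by comparing $\tr[\rt^2\st^{-1}]$ with $\tr[\rho^2\sigma^{-1}]$ from both sides) and $(3)\Rightarrow(1)$ from the intertwining relation $(m_{\mathcal T}\otimes I)V=Vm$ plus polynomial functional calculus. What each buys: yours is a shorter, self-contained qualitative proof resting only on standard CP-map facts (Lieb--Ruskai, Choi, Stinespring-form Schwarz equality), whereas the paper's detour through the strengthened data processing inequality is what produces the stability estimates that are the main point of the article. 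The steps you flag as delicate all check out in finite dimensions: the integrand is continuous and $O(t^{-2})$, a PSD operator with zero trace vanishes, and the two sides of $\Lambda((\Gamma+t)^{-1})=(\Gamma_{\mathcal T}+t)^{-1}$ are rational in $t$, so matching Laurent coefficients at infinity is legitimate; for the functional-calculus steps it is cleanest to choose polynomials agreeing with $\sqrt{\cdot}$ or $\log$ on the union $\operatorname{spec}(m)\cup\operatorname{spec}(m_{\mathcal T})$, which sidesteps any need for the spectral inclusion.
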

The above theorem is motivated by the treatment in \cite{Petz2003} for the relative entropy and proceeds along similar lines. This result allows us to prove a strengthened data processing inequality for the BS-entropy, building on the work in  \cite{Carlen2017a} for conditional expectations and subsequently lifting it to general quantum channels using Stinespring's dilation theorem:

\begin{thm}
Let $\mathcal M$ and $\mathcal{N}$ be two matrix algebras and let $\mathcal T: \mathcal M \to \mathcal N$ be a completely positive trace-preserving map. Let $\sigma$, $\rho$ be two quantum states on $\mathcal{M}$ such that they have equal support. Then,
\begin{equation} \label{eq:BSdata-processing-bound-section2}
\hat S_\mathrm{BS}(\sigma \| \rho) - \hat S_\mathrm{BS}(\sigma_{\mathcal T} \| \rho_{\mathcal T}) \geq  \left( \frac{\pi}{8} \right)^4 \norm{\Gamma}_\infty^{-4} \norm{\st^{-1}}_\infty^{-2}     \norm{ \sigma \, \mathcal{T}^* \left( \st^{-1} \rt \right) -  \rho}_2^4.
\end{equation}
\end{thm}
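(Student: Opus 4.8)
The plan is to reduce the general-channel statement to the case of a conditional expectation, where the analogue of Carlen--Vershynina's inequality \eqref{eq:vc_main_result} has presumably been established in Section \ref{sec:data_processing} (call it Theorem \ref{thm:bound-data-processing-BS-channels}'s conditional-expectation predecessor). Concretely, using the Stinespring dilation $\mathcal{T}(\cdot) = \tr_{E}[V \cdot V^{*}]$ with isometry $V:\mathcal{H}_{\mathcal{M}} \to \mathcal{H}_{\mathcal{N}} \otimes \mathcal{H}_{E}$, I would write $\mathcal{T}$ as a composition of three maps: the isometric embedding $\iota(\rho) = V\rho V^{*}$ onto the subalgebra $V\mathcal{M}V^{*}$ inside $\mathcal{B}(\mathcal{H}_{\mathcal{N}}\otimes \mathcal{H}_{E})$, an amplification to $\mathcal{B}(\mathcal{H}_{\mathcal{N}})\otimes \mathcal{B}(\mathcal{H}_E)$, and finally the partial trace $\tr_E$, which is the trace-preserving conditional expectation onto $\mathcal{B}(\mathcal{H}_{\mathcal{N}})\otimes I_E$. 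Since the BS-entropy is invariant under isometries ($\hat S_\mathrm{BS}(V\sigma V^{*}\|V\rho V^{*}) = \hat S_\mathrm{BS}(\sigma\|\rho)$, which follows directly from the definition and functional calculus), the entropy difference $\hat S_\mathrm{BS}(\sigma\|\rho) - \hat S_\mathrm{BS}(\sigma_{\mathcal{T}}\|\rho_{\mathcal{T}})$ equals the corresponding difference for the partial-trace channel applied to $V\sigma V^{*}$ and $V\rho V^{*}$.

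Next I would apply the conditional-expectation version of the strengthened data processing inequality to the pair $\tilde\sigma := V\sigma V^{*}$, $\tilde\rho := V\rho V^{*}$ on $\mathcal{B}(\mathcal{H}_{\mathcal{N}}\otimes \mathcal{H}_E)$ with $\mathcal{E} = \tr_E(\cdot)\otimes \tau_E$ (or the appropriate normalization); note these are only states on the range of $V$, so one must either pad to full rank or observe the inequality holds on the support, which is fine since $\sigma$, $\rho$ have equal support by hypothesis. This yields a lower bound of the form $(\pi/8)^4 \|L_{\tilde\rho}R_{\tilde\sigma^{-1}}\|_\infty^{-2}\,\|\mathcal{R}^{\tilde\sigma}_{\mathcal{E}}(\tilde\rho_{\mathcal{N}}) - \tilde\rho\|_1^4$ or its BS-analogue with $\Gamma$ and a $2$-norm. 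The remaining work is bookkeeping: translate each factor back through $V$. The operator-norm factors: $\|L_{\tilde\rho}R_{\tilde\sigma^{-1}}\|_\infty$ relates to $\|\Gamma\|_\infty = \|\sigma^{-1/2}\rho\sigma^{-1/2}\|_\infty$ (note $\Gamma$ and $L_\rho R_{\sigma^{-1}}$ have the same spectrum), and the partial-trace structure contributes $\|\sigma_{\mathcal{T}}^{-1}\|_\infty = \|\st^{-1}\|_\infty$; getting the exact exponents $\|\Gamma\|_\infty^{-4}\|\st^{-1}\|_\infty^{-2}$ to match will require care, since the BS-version of the Carlen--Vershynina bound (as opposed to the relative-entropy one) carries different operator-norm prefactors — this is where I would lean on whatever precise constants Theorem \ref{thm:bound-data-processing-BS-channels} or its conditional-expectation precursor provides.

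For the recovery term, the key identity is the translation of the Petz recovery map under the dilation: the Petz map of $\tr_E$ with respect to $\tilde\sigma$ acting on $\tilde\rho_{\mathcal{N}} = \mathcal{T}(\rho)\otimes\tau_E$ composed back with $V^{*}(\cdot)V$ should reproduce $\sigma^{1/2}\mathcal{T}^{*}(\st^{-1/2}\rt\,\st^{-1/2})\sigma^{1/2} = \sigma\,\mathcal{T}^{*}(\st^{-1}\rt)$ — here the last equality uses that $\mathcal{T}^{*}$ is a $*$-homomorphism composed appropriately, or more carefully that condition (2) of Theorem \ref{thm:new_equality_condition} identifies the relevant recovered operator as $\sigma\,\mathcal{T}^{*}(\st^{-1}\rt)$; the point is that this is exactly the expression appearing inside the $2$-norm on the right-hand side of \eqref{eq:BSdata-processing-bound-section2}. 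Since the BS-entropy/maximal-$f$-divergence framework naturally produces Hilbert--Schmidt ($2$-norm) distances rather than trace distances (because the maximal $f$-divergence is built from $\rho^{-1/2}\sigma\rho^{-1/2}$ directly, and the perturbation analysis uses operator-monotone/operator-convex estimates in a Hilbert space setting), the $\|\cdot\|_2^4$ is expected. \textbf{Main obstacle.} The hardest part is the faithful translation of the recovery term and the operator-norm prefactors through the Stinespring isometry while tracking the amplification by $\mathcal{H}_E$: one must verify that $\|\mathcal{R}^{\tilde\sigma}_{\mathcal{E}}(\tilde\rho_{\mathcal{N}}) - \tilde\rho\|_2 = \|\sigma\,\mathcal{T}^{*}(\st^{-1}\rt) - \rho\|_2$ exactly (no extra dimensional factors, using that $V$ is an isometry so $\|V X V^{*}\|_2 = \|X\|_2$ on the range), and that the amplification does not degrade the constant — equivalently, that the conditional-expectation bound applied on the dilated space, when specialized to states supported on $V\mathcal{M}V^{*}$, gives precisely the claimed prefactor without a spurious $\dim\mathcal{H}_E$. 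Handling the equal-support hypothesis (so that $\tilde\sigma$, $\tilde\rho$ and their partial traces are invertible on the relevant supports, making the Petz map and the inverses well-defined) is a necessary but routine supporting step.
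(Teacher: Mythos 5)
Your architecture --- Stinespring dilation, isometric invariance of the BS-entropy, reduction to the conditional expectation $\tr_{\mathcal V}[\cdot]\otimes I/s$ on the dilated space, a limiting argument to handle non-invertible states of equal support, and translation of all terms back through the isometry --- is exactly the paper's route to Theorem \ref{thm:bound-data-processing-BS-channels}, and your worry about a spurious $\dim\mathcal V$ factor resolves as you suspect: the $s$-dependence of $\mathcal E(\sigma)^{-1}=\st^{-1}\otimes sI$ cancels against that of $\mathcal E(\rho)=\rt\otimes I/s$ in the recovered operator, and conjugation by $V$ preserves the $2$-norm.

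The one step that would fail as written is your ``key identity'' for the recovery term. The operator in \eqref{eq:BSdata-processing-bound-section2} is \emph{not} the Petz-recovered state, and the equality $\sigma^{1/2}\mathcal T^*(\st^{-1/2}\rt\st^{-1/2})\sigma^{1/2}=\sigma\,\mathcal T^*(\st^{-1}\rt)$ is false in general: $\mathcal T^*$ is a unital completely positive map, not a $*$-homomorphism. Indeed, if these two operators always coincided, preservation of the BS-entropy would be equivalent to Petz recoverability, which the paper explicitly notes fails in general (\cite[Example 4.8]{Hiai2017}). The correct object is the BS recovery $\mathcal B_{\mathcal T}^\sigma(\rt)=\sigma\,\mathcal T^*(\st^{-1}\rt)$ from Theorem \ref{thm:condition-equality}, and it enters the bound not through the Petz map but through a separate comparison step: the conditional-expectation theorem (Theorem \ref{thm:bound-data-processing-BS}) lower-bounds the entropy difference by $\left(\pi/4\right)^4\norm{\Gamma}_\infty^{-2}\norm{\sigma^{1/2}\sn^{-1/2}\Gamma_{\mathcal N}^{1/2}\sn^{1/2}-\Gamma^{1/2}\sigma^{1/2}}_2^4$, and Lemma \ref{lem:BS_data_processing_bound} then bounds this $2$-norm from below by $\tfrac12\norm{\Gamma}_\infty^{-1/2}\norm{\sigma^{-1}}_\infty^{-1/2}\norm{\sigma\sn^{-1}\rn-\rho}_2$; this second step is where the prefactor degrades from $(\pi/4)^4\norm{\Gamma}_\infty^{-2}$ to $(\pi/8)^4\norm{\Gamma}_\infty^{-4}\norm{\st^{-1}}_\infty^{-2}$, with the replacement of $\norm{\sigma^{-1}}_\infty$ by $\norm{\st^{-1}}_\infty$ being exactly the ``minor adjustment'' needed in the dilated picture, where the lifted state $V\sigma V^*$ is singular and only $\mathcal E(V\sigma V^*)=\st\otimes I/s$ controls the relevant inverse. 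Your hedge pointing to condition (2) of Theorem \ref{thm:new_equality_condition} does identify the right operator, so the plan is salvageable, but the derivation must route through Lemma \ref{lem:BS_data_processing_bound} rather than through the Petz recovery map.
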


Theorem \ref{thm:new_equality_condition} shows that the right hand side of Equation \eqref{eq:BSdata-processing-bound-section2} plays the same role as the trace distance between $\rho$ and 
the state obtained from the recovery map in Equation \eqref{eq:vc_main_result}. The result for conditional expectations appears as Corollary \ref{cor:nicer_lower_bound} in the main text and follows from the sharper lower bound in Theorem \ref{thm:bound-data-processing-BS}. These results are subsequently lifted to general quantum channels in Theorem \ref{thm:bound-data-processing-BS-channels}.

 In the rest of the work, we extend the result from the BS-entropy to more general maximal $f$-divergences. This is similar to the work undertaken in \cite{Carlen2018}. We consider operator convex functions $f: (0, \infty) \to \mathbb R$ whose transpose $\tilde f(x) := x f(1/x)$ is operator monotone decreasing. Moreover, we assume that the measure $\mu_{ - \tilde f}$ of $- \tilde f$ is absolutely continuous with respect to Lebesgue measure and that there are $C>0$, $\alpha \geq 0$ such that for every $T \geq 1$, the Radon-Nikod{\'y}m derivative is lower bounded by
\begin{equation*}
\frac{\mathrm{d} \mu_{- \tilde f}(t)}{\mathrm{d}t} \geq \left(C T^{2\alpha}\right)^{-1}
\end{equation*}
almost everywhere (with respect to Lebesgue measure) for all $t \in [1/T, T]$. Furthermore, we assume that our states  $\sigma > 0$ , $\rho > 0$ are not too far from fulfilling the data processing inequality with respect to $\mathcal E$, i.e.\
\begin{equation} \label{eq:not_too_far_from_dpi_main_results}
\left(\frac{(2 \alpha +1)\sqrt{C}}{4} \frac{( \hat{S}_f (\sigma\| \rho) -\hat{S}_f (\sn\| \rn))^{1/2}}{1+\norm{\Gamma}_\infty}\right)^{\frac{1}{1+\alpha}} \leq 1.
\end{equation}

\begin{thm}
Let $\mathcal M$ and $\mathcal{N}$ be two matrix algebras and let $\mathcal T: \mathcal M \to \mathcal N$ be a completely positive trace-preserving map. Let  $\sigma$, $\rho$ be two quantum states on $\mathcal{M}$ such that they have equal support and let $f:(0,\infty) \to \mathbb R$ be an operator convex function with transpose $\tilde f$. We assume that $\tilde f$ is operator monotone decreasing and such that the measure $\mu_{- \tilde f}$ that appears in Theorem \ref{thm:operator-convex-Bhatia} is absolutely continuous with respect to  Lebesgue measure. Moreover, we assume that  for every $T \geq 1$, there exist constants $\alpha \geq 0$, $C > 0$ satisfying $\mathrm{d}\mu_{ - \tilde f} (t) /\mathrm{d}t \geq (C T^{2 \alpha})^{-1}  $ for all $t \in \left[ 1/T , T \right]$ and such that Equation \eqref{eq:not_too_far_from_dpi_main_results} holds. Then, there is a constant $L_{\alpha} > 0$ such that 
\begin{equation*} 
 \hat{S}_f (\sigma\| \rho) -\hat{S}_f (\st\| \rt)   \geq \frac{L_\alpha}{C} \left( 1 +  \norm{\Gamma}_\infty \right)^{-(4 \alpha + 2)}  \norm{\Gamma}_\infty^{-(2 \alpha + 2)} \norm{\st^{-1}}_\infty^{-(2 \alpha + 2)} \norm{ \rho - \sigma \, \mathcal{T}^* \left(  \st^{-1} \rt \right)  }_2^{4(\alpha +1)}.
\end{equation*}
\end{thm}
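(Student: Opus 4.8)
The plan is to reduce the statement for a general channel $\mathcal T$ to the strengthened data processing inequality for trace-preserving conditional expectations established earlier in Section~\ref{sec:maximal_f_divergences}, by the Stinespring dilation argument that was already used in the BS-entropy case to pass from the conditional-expectation result to Theorem~\ref{thm:bound-data-processing-BS-channels}. By Theorem~\ref{thm:stinespring} we may write $\mathcal T(X) = \operatorname{tr}_E[V X V^*]$ for an isometry $V\colon \mathcal H_{\mathcal M}\to \mathcal H_{\mathcal N}\otimes\mathcal H_E$, and set $\tilde\sigma := V\sigma V^*$, $\tilde\rho := V\rho V^*$. Since $\sigma$ and $\rho$ have equal support, so do $\tilde\sigma$ and $\tilde\rho$ --- namely the range projection $VV^*$ --- so all maximal $f$-divergences below are defined in the support-restricted sense. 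Two facts do the bookkeeping: maximal $f$-divergences are invariant under conjugation by an isometry, so $\hat S_f(\tilde\sigma\|\tilde\rho) = \hat S_f(\sigma\|\rho)$; and on the common support $\tilde\sigma^{-1/2}\tilde\rho\,\tilde\sigma^{-1/2} = V\Gamma V^*$, so the input-side operator norm is unchanged.

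I would then realise the partial trace as a conditional expectation: the trace-preserving conditional expectation $\mathcal E$ onto $\mathcal B(\mathcal H_{\mathcal N})\otimes\mathbb C I_E$ satisfies $\mathcal E(X) = \tfrac1{d_E}\operatorname{tr}_E[X]\otimes I_E$, hence $\mathcal E(\tilde\sigma) = \st\otimes\omega_E$ and $\mathcal E(\tilde\rho) = \rt\otimes\omega_E$ with $\omega_E := I_E/d_E$ the maximally mixed state. Using the tensor identity $\hat S_f(\tau\otimes\omega\|\tau'\otimes\omega) = \hat S_f(\tau\|\tau')$ we get $\hat S_f(\mathcal E(\tilde\sigma)\|\mathcal E(\tilde\rho)) = \hat S_f(\st\|\rt)$, and likewise $\mathcal E(\tilde\sigma)^{-1/2}\mathcal E(\tilde\rho)\mathcal E(\tilde\sigma)^{-1/2} = \Gamma_{\mathcal T}\otimes I_E$ and $\mathcal E(\tilde\sigma)^{-1}\mathcal E(\tilde\rho) = \st^{-1}\rt\otimes I_E$, so every quantity attached to $(\tilde\sigma,\tilde\rho,\mathcal E)$ collapses to the corresponding quantity for $(\sigma,\rho)$ or $(\st,\rt)$; in particular hypothesis~\eqref{eq:not_too_far_from_dpi_main_results} for $(\tilde\sigma,\tilde\rho,\mathcal E)$ is literally hypothesis~\eqref{eq:not_too_far_from_dpi_main_results} for $(\sigma,\rho,\mathcal T)$, with the same $\alpha$ and $C$. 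Applying the conditional-expectation version of the theorem to $(\tilde\sigma,\tilde\rho,\mathcal E)$, using the above identifications and the bound $\norm{\Gamma_{\mathcal T}}_\infty \le \norm{\st^{-1}}_\infty$ to reach the stated form of the output-side factor, yields
\begin{equation*}
\hat S_f(\sigma\|\rho) - \hat S_f(\st\|\rt) \ \geq\ \frac{L_\alpha}{C}\,(1+\norm{\Gamma}_\infty)^{-(4\alpha+2)}\,\norm{\Gamma}_\infty^{-(2\alpha+2)}\,\norm{\st^{-1}}_\infty^{-(2\alpha+2)}\,\norm{\tilde\sigma\,\mathcal E\!\left(\mathcal E(\tilde\sigma)^{-1}\mathcal E(\tilde\rho)\right) - \tilde\rho}_2^{4(\alpha+1)}.
\end{equation*}

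It remains to compare recovery defects. As $\mathcal E^* = \mathcal E$ and $\mathcal E(\tilde\sigma)^{-1}\mathcal E(\tilde\rho) = \st^{-1}\rt\otimes I_E$ already lies in $\mathcal B(\mathcal H_{\mathcal N})\otimes\mathbb C I_E$, the defect on the dilated space equals $\norm{V\sigma V^*(\st^{-1}\rt\otimes I_E) - V\rho V^*}_2$, whereas $\mathcal T^*(\st^{-1}\rt) = V^*(\st^{-1}\rt\otimes I_E)V$, so the target quantity is $\norm{\sigma\, V^*(\st^{-1}\rt\otimes I_E)V - \rho}_2$. Expanding both squared Hilbert--Schmidt norms and using $V^*V = I$, the three cross terms agree, and the difference of the remaining terms (quadratic in $\st^{-1}\rt$) equals $\operatorname{tr}\!\big[(\st^{-1}\rt\otimes I_E)^*\,V\sigma^2 V^*\,(\st^{-1}\rt\otimes I_E)\,(I - VV^*)\big]$, which is nonnegative as the trace of a product of two positive semidefinite operators. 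Hence $\norm{V\sigma V^*(\st^{-1}\rt\otimes I_E) - V\rho V^*}_2 \geq \norm{\sigma\,\mathcal T^*(\st^{-1}\rt) - \rho}_2$, and substituting this into the display (absorbing absolute constants into $L_\alpha$) proves the theorem.

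The main obstacle is this last comparison: because $\st^{-1}\rt\otimes I_E$ sits on the right rather than being symmetrically sandwiched by $V$, one cannot simply invoke $\norm{VAV^*}_2 = \norm{A}_2$ and must open up the trace as above; it is reassuring that the defect vanishes precisely when the equality condition $\sigma^{-1}\rho = \mathcal T^*(\st^{-1}\rt)$ of Theorem~\ref{thm:new_equality_condition} holds. The remaining points are routine but need to be checked carefully: the $d_E$ factors introduced by $\mathcal E$ must cancel, which they do because the divergence- and $\Gamma$-type quantities entering the conditional-expectation bound are invariant under tensoring with the fixed maximally mixed state $\omega_E$, so that the bound genuinely involves $\norm{\Gamma_{\mathcal T}}_\infty$ (and not $\norm{\mathcal E(\tilde\sigma)^{-1}}_\infty = d_E\norm{\st^{-1}}_\infty$); and the conditional-expectation theorem must be applicable to the non-invertible, equal-support states $\tilde\sigma,\tilde\rho$, which is exactly where the equal-support hypothesis on $\sigma,\rho$ is used.
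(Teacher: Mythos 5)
Your overall strategy is exactly the paper's: dilate $\mathcal T$ via Stinespring, view the partial trace as the conditional expectation $\mathcal E(\cdot)=\tfrac1{d_E}\tr_E[\cdot]\otimes I_E$, apply the conditional-expectation theorem to $\tilde\sigma=V\sigma V^*$, $\tilde\rho=V\rho V^*$, and translate every quantity back. Your final comparison of recovery defects is correct (and can be shortened: $\sigma\,\mathcal T^*(\st^{-1}\rt)-\rho=V^*MV$ with $M=V\sigma V^*(\st^{-1}\rt\otimes I_E)-V\rho V^*$, and $\norm{V^*MV}_2\le\norm{M}_2$ for any isometry $V$, so no trace expansion is needed). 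Your observation that hypothesis \eqref{eq:not_too_far_from_dpi_main_results} transfers verbatim to the dilated pair is a point the paper leaves implicit and is worth making.

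There is, however, one step that does not work as written: the derivation of the factor $\norm{\st^{-1}}_\infty^{-(2\alpha+2)}$. Corollary \ref{cor:nicer_lower_bound_max}, applied as a black box to $(\tilde\sigma,\tilde\rho,\mathcal E)$, produces $\norm{\tilde\sigma^{-1}}_\infty^{-(2\alpha+2)}=\norm{\sigma^{-1}}_\infty^{-(2\alpha+2)}$ (an input-side quantity), and your proposed bridge via $\norm{\Gamma_{\mathcal T}}_\infty\le\norm{\st^{-1}}_\infty$ has nothing to attach to, since no $\norm{\Gamma_{\mathcal N}}_\infty$ factor appears in that corollary; moreover $\norm{\sigma^{-1}}_\infty\le\norm{\st^{-1}}_\infty$ is false for general channels, so one cannot simply swap the factors. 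The correct fix --- and what the paper means by ``minor adjustments to the proof of Lemma \ref{lem:BS_data_processing_bound}'' --- is to rerun that lemma on the dilated space: there the relevant quantities are $\norm{\mathcal E(\tilde\sigma)^{-1/2}}_\infty=\sqrt{d_E}\,\norm{\st^{-1}}_\infty^{1/2}$ and $\norm{\mathcal E(\tilde\sigma)^{1/2}}_\infty\le d_E^{-1/2}$, the $\sqrt{d_E}$'s cancel, and $\norm{\sigma^{-1}}_\infty$ is replaced throughout by $\norm{\st^{-1}}_\infty$. So the cancellation of dimension factors you anticipate does occur, but through the proof of the lemma rather than through a $\Gamma_{\mathcal T}$ estimate. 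Finally, the reduction to non-invertible equal-support states, which you assert, is carried out in the paper by the $\epsilon$-regularization of Theorem \ref{thm:bound-data-processing-BS-channels} together with Proposition \ref{prop:different_sequence}; for a complete proof this limit argument (in particular $\Gamma^\epsilon\to\Gamma_P\oplus I$ and the behaviour of the norms under it) needs to be included.
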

For conditional expectations, the above result appears as Corollary \ref{cor:nicer_lower_bound_max} in the main text and follows from the sharper lower bound in Theorem \ref{thm:bound-data-processing-max}. The extension to general quantum channels appears as Theorem \ref{thm:bound-data-processing-max-channels}.

\section{Preliminaries} \label{sec:preliminaries}

\subsection{Notation}

 Throughout the paper, we will denote by $\mathcal{H}$ a finite-dimensional Hilbert space, by $\mathcal{B}(\mathcal{H})$ the algebra of bounded linear operators on it (whose elements we will write using capital latin letters). We will further use greek letters  for the density matrices, or states, whose set we write as 
\begin{equation*}
\mathcal{D}(\mathcal{H}):= \{\rho \in \mathcal B(\mathcal H):  \rho \geq 0, \tr[\rho]=1 \}. 
\end{equation*}

A matrix algebra is a unital subalgebra of $\mathcal B(\mathcal H)$ which is closed under taking adjoints. We will write $\rho > 0$  for matrices which are positive definite, denote by $\rho^{-1}$  the inverse of $\rho$, and we will assume that all the density matrices that appear in all sections but Section \ref{sec:quantum-channels} have full rank. By $\rho^0$, we will denote the support of $\rho$. Moreover, we will denote by $I$ the identity matrix, which we sometimes omit for readability. 

Let $\mathcal M$, $\mathcal N$ be two matrix algebras. We call a linear map $\mathcal{T}: \mathcal M \to \mathcal N $  positive if it maps positive semidefinite matrices to positive semidefinite matrices. Moreover,  $\mathcal T$ is said to be $n$-positive if $\mathcal{T} \otimes \operatorname{Id}_n : \mathcal M \otimes \mathcal{M}_n \to \mathcal N \otimes \mathcal{M}_n $ is positive, where $ \mathcal{M}_n $ is the space of complex $n \times n$ matrices, and completely positive if $\mathcal T$ is $n$-positive for every $n \in \mathbb{N}$. We further say that $\mathcal{T}$ is trace preserving if $\tr[\mathcal{T}(A)]= \tr[A]$ for all $A \in \mathcal M$. Finally, a trace-preserving completely positive map is called a quantum channel (a general reference for quantum channels is \cite{Watrous2018}).
Given a quantum channel $\mathcal{T}: \mathcal{M} \rightarrow \mathcal{N}$, we write $\st:= \mathcal{T} (\sigma)$ and $\rt:= \mathcal{T} (\rho)$ for all density matrices $\sigma$, $\rho \in \mathcal{M}$.

Now let $\mathcal{N}$ be  a unital matrix subalgebra of $\mathcal M$. We will often consider the unique  trace-preserving conditional expectation onto this subalgebra, and denote it by $\mathcal{E} : \mathcal{M } \to \mathcal{N}$. Given two matrices $\sigma$, $\rho \in \mathcal{M}$, we write $\sn:= \mathcal{E} (\sigma)$ and $\rn:= \mathcal{E} (\rho)$ . In a slight abuse of notation, we will identify $L_A$, the left multiplication operator by $A$ on $\mathcal M$, with $A$ for $A \in \mathcal M$.

In this paper, we will make use of Schatten $p$-norms, which are defined for every $p \geq 1$ as 
\begin{equation*}
\norm{A}_p := \tr[|A|^p]^{1/p} \; \; \text{ for }A \in \mathcal{B}(\mathcal{H}).
\end{equation*}
In the limit, the $\infty$-norm can be seen to correspond to the operator norm. We will always consider $\mathcal{B}(\mathcal{H})$ equipped with the Hilbert-Schmidt inner product, i.e.,
\begin{equation*}
\left\langle A, B \right\rangle_{\operatorname{HS}} := \tr[A^* B] \; \; \text{ for }A, B \in \mathcal{B}(\mathcal{H}),
\end{equation*}
where $A^*$ represents the adjoint of $A$. We will on most occasions omit the subscript of the inner product. Finally, we will often write $\Gamma := \sigma^{-1/2} \rho \sigma^{-1/2}$, $\Gamma_{\mathcal N} := \sigma_{\mathcal N}^{-1/2} \rho_{\mathcal N} \sigma_{\mathcal N}^{-1/2}$ for conditional expectations and $\Gamma_{\mathcal T} := \sigma_{\mathcal T}^{-1/2} \rho_{\mathcal T} \sigma_{\mathcal T}^{-1/2}$ for quantum channels in general.

\subsection{Operator convex functions and conditional expectations}

Now we will introduce some results concerning operator convex functions and conditional expectations that we use in this manuscript. We refer the reader to \cite[Section V]{Bhatia1997} for further information on the topic of operator convex functions. Before introducing operator convex functions, let us first consider operator monotone functions.
\begin{defi}[Operator monotone] Let $\mathcal{I} \subseteq \mathbb R$ be an interval and $f: \mathcal{I} \to \mathbb R$. If for all finite-dimensional Hilbert spaces $\mathcal H$
\begin{equation*}
f(A) \leq f(B)
\end{equation*} 
for all Hermitian $A$, $B \in \mathcal B(\mathcal H)$ with spectrum contained in $\mathcal{I}$ and such that $A \leq B$, then $f$ is \emph{operator monotone}. We call $f$ \emph{operator monotone decreasing} if $-f$ is operator monotone.
\end{defi}
These functions possess a canonical form:
\begin{thm}[{\cite[Equation V.49]{Bhatia1997}}]\label{thm:operator-convex-Bhatia}
A function $f$ on $(0, \infty)$ is operator monotone if and only if it has a representation of the form 
\begin{equation*}
f(\lambda) = \alpha + \beta \lambda + \int_0^\infty \left( \frac{t}{1+t^2} - \frac{1}{\lambda + t} \right) \mathrm{d}\mu_f(t),
\end{equation*}
where $\alpha \in \mathbb R$, $\beta \geq 0$ and $\mu_f$ is a positive measure on $(0,\infty)$ such that 
\begin{equation*}
\int_0^\infty \frac{1}{1+t^2} \mathrm{d}\mu_f(t) < \infty. 
\end{equation*}
\end{thm}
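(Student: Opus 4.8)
The plan is to prove the two implications of this classical Löwner representation separately, treating the ``only if'' direction as the substantial one. For the \textbf{sufficiency} direction, I would show directly that any $f$ of the stated integral form is operator monotone. The building block is that for each fixed $t > 0$ the function $\lambda \mapsto -(\lambda + t)^{-1}$ is operator monotone on $(0,\infty)$: if $0 < A \leq B$ then $A + tI \leq B + tI$, and since matrix inversion is operator antitone on positive definite matrices, $(A+tI)^{-1} \geq (B+tI)^{-1}$, whence $-(A+tI)^{-1} \leq -(B+tI)^{-1}$. The additive constants $t(1+t^2)^{-1}$ and $\alpha$ do not affect monotonicity, and $\lambda \mapsto \beta \lambda$ is operator monotone for $\beta \geq 0$. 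Since nonnegative combinations of operator monotone functions are operator monotone and the class is closed under pointwise limits, integrating these building blocks against the positive measure $\mu_f$ — the integral converging by the hypothesis $\int_0^\infty (1+t^2)^{-1}\,\mathrm{d}\mu_f(t) < \infty$ together with the decay of the kernel as $t \to \infty$ — produces an operator monotone function.

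For \textbf{necessity}, that every operator monotone $f$ admits such a representation, I would follow the route of Löwner's theorem in three steps. Step one reduces operator monotonicity to a finite positivity condition: testing the definition on $2 \times 2$ and, more generally, $n \times n$ Hermitian perturbations shows that $f$ is necessarily $C^1$ and that for every choice of points $t_1, \dots, t_n \in (0,\infty)$ the associated Löwner matrix of first divided differences $[\,f^{[1]}(t_i, t_j)\,]_{i,j}$ is positive semidefinite.

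Step two is the analytic heart and the main obstacle: one must upgrade this purely finite, algebraic positivity to the statement that $f$ extends analytically from $(0,\infty)$ to a Pick (Nevanlinna) function on the upper half-plane $\mathbb{C}^+ = \{ z : \operatorname{Im} z > 0 \}$, i.e.\ a holomorphic function with $\operatorname{Im} f \geq 0$ there. This is where the real work lies: the positive semidefiniteness of all Löwner matrices has to be leveraged — via a reproducing-kernel or Nevanlinna–Pick interpolation argument showing the divided-difference kernel is positive — to manufacture a holomorphic continuation, and the Schwarz reflection principle across $(0,\infty)$, where $f$ is real, is what pins down the sign of the imaginary part on all of $\mathbb{C}^+$. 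I expect this passage from finite-dimensional positivity to genuine complex-analytic structure to be the crux of the whole argument.

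Step three is then a direct appeal to the classical Herglotz--Nevanlinna representation of Pick functions, which gives $f(z) = \alpha + \beta z + \int_{\mathbb{R}} ( (t - z)^{-1} - t(1+t^2)^{-1} )\,\mathrm{d}\nu(t)$ with $\alpha \in \mathbb{R}$, $\beta \geq 0$, and $\nu \geq 0$ satisfying $\int_{\mathbb{R}} (1+t^2)^{-1}\,\mathrm{d}\nu(t) < \infty$. Because $f$ is real on all of $(0,\infty)$, Stieltjes inversion forces the representing measure $\nu$ to vanish on $(0,\infty)$, so it is supported on $(-\infty, 0]$. The substitution $t \mapsto -s$ then pushes $\nu$ forward to a positive measure $\mu_f$ on $(0,\infty)$ and transforms the kernel exactly into the stated one, since $(t - \lambda)^{-1} = -(\lambda + s)^{-1}$ and $t(1+t^2)^{-1} = -s(1+s^2)^{-1}$, yielding $f(\lambda) = \alpha + \beta \lambda + \int_0^\infty ( s(1+s^2)^{-1} - (\lambda + s)^{-1} )\,\mathrm{d}\mu_f(s)$, which is precisely the asserted form with $\beta \geq 0$ preserved and the integrability condition inherited from $\nu$. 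With the sufficiency direction elementary and Step three a verbatim invocation of the Pick-function representation, the entire weight of the theorem rests on the analytic continuation in Step two.
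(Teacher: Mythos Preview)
The paper does not prove this theorem at all: it is simply quoted from Bhatia's book as \cite[Equation V.49]{Bhatia1997} and used as a black box in the preliminaries, so there is no ``paper's own proof'' to compare against. Your outline follows the classical L\"owner route (sufficiency by direct verification that the kernel is operator monotone, necessity via positivity of L\"owner matrices, analytic continuation to a Pick function, and the Herglotz--Nevanlinna representation), which is indeed the standard way this result is established in the reference the paper cites.

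One small point to watch in Step three: after Stieltjes inversion forces the representing measure $\nu$ to be supported on $(-\infty,0]$, a possible atom at $0$ would contribute a term proportional to $-1/\lambda$, which does not fit into the integral over $(0,\infty)$ as written; you should either check that such an atom is absorbed into the integral by a limiting argument or note that Bhatia's formulation allows the measure to place mass near $0$ with the stated integrability condition controlling it. Otherwise your sketch is sound and correctly identifies the analytic continuation step as the substantive content.
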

 Operator monotone functions are intimately connected to operator convex functions.
\begin{defi}[Operator convex] Let $\mathcal{I} \subseteq \mathbb R$ be an interval and $f: \mathcal{I} \to \mathbb R$. If 
\begin{equation*}
f( \lambda A + (1-\lambda) B) \leq \lambda f(A) + (1-\lambda) f(B)
\end{equation*} 
for all Hermitian $A$, $B \in \mathcal B(\mathcal H)$ with spectrum contained in $\mathcal{I}$, all $\lambda \in [0,1]$, and for all finite-dimensional Hilbert spaces $\mathcal H$, then $f$ is \emph{operator convex}. We call $f$ \emph{operator concave} if $-f$ is operator convex.
\end{defi}

One connection of this kind is given by the following theorem:
\begin{thm}[{\cite[Theorem 4.43]{Hiai2014}}] \label{thm:operator_montone_iff_concave}
Let $f$ be a continuous function mapping $(0,\infty)$ onto itself. Then, $f$ is operator monotone if and only if it is operator concave.
\end{thm}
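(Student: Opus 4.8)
The plan is to prove the two implications separately, using the integral representation of Theorem \ref{thm:operator-convex-Bhatia} for the easy direction and a direct concavity argument for the hard one; the surjectivity hypothesis will only be needed for the latter.

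For the implication \emph{operator monotone $\Rightarrow$ operator concave}, I would start from the representation
\[
f(\lambda) = \alpha + \beta \lambda + \int_0^\infty \left( \frac{t}{1+t^2} - \frac{1}{\lambda + t} \right) \mathrm{d}\mu_f(t)
\]
of Theorem \ref{thm:operator-convex-Bhatia}, with $\beta \geq 0$ and $\mu_f \geq 0$. The point is that each elementary building block is operator concave in $\lambda$: the map $\lambda \mapsto 1/(\lambda + t)$ is the operator convex function $x \mapsto 1/x$ precomposed with an affine map, hence operator convex, so $\lambda \mapsto \tfrac{t}{1+t^2} - \tfrac{1}{\lambda+t}$ is operator concave, while $\alpha + \beta \lambda$ is affine. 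Since operator concavity is preserved under nonnegative linear combinations and pointwise limits (approximating the integral by Riemann sums), $f$ is operator concave. Note that surjectivity plays no role here.

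For the converse, suppose $f$ is operator concave and maps $(0,\infty)$ onto $(0,\infty)$. First I would extract the scalar consequences of surjectivity: a scalar concave function that is unbounded above cannot have an interior maximum, and a nonincreasing concave function is bounded above near $0$, so $f$ must be nondecreasing with $f(0^+) = 0$ and $f(\infty) = \infty$; in particular $f \geq 0$ and the asymptotic slope $b := \lim_{t \to \infty} f(t)/t \in [0,\infty)$ exists. The key step is then purely operator-theoretic. Fix $0 < A$, $B$ with $B - A > 0$ (the general case $A \leq B$ follows by applying this to $B + \delta I$ and letting $\delta \to 0$, using norm-continuity of the functional calculus). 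For $s \in (0,1)$ set $M := (B - sA)/(1-s) > 0$, so that $B = sA + (1-s)M$ is a genuine convex combination, and operator concavity gives the operator inequality
\[
f(B) \geq s\, f(A) + (1-s)\, f(M).
\]
Letting $s \to 1^-$, one has $B - sA \to B - A > 0$, so all eigenvalues of $M$ tend to $+\infty$, whence $f(M)/M \to bI$; since $f(M)/M$ commutes with $(1-s)M = B - sA$, this yields $(1-s) f(M) = \tfrac{f(M)}{M}(B - sA) \to b(B-A)$, and passing to the limit gives $f(B) \geq f(A) + b(B-A) \geq f(A)$, which is exactly operator monotonicity.

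The main obstacle is the converse direction, and more precisely the role of nonnegativity: the argument breaks down without it, and indeed $f(\lambda) = 2\lambda - \lambda^2$ is operator concave but not operator monotone, so surjectivity (equivalently $f \geq 0$) is essential and must be used to force the asymptotic slope $b$ to be finite and nonnegative and to exclude a ``downward'' quadratic part. A secondary technical point is to justify the operator limit $(1-s)f(M) \to b(B-A)$ rigorously, which requires uniform control of $f(M)/M$ coming from the uniform blow-up of the spectrum of $M$, together with the approximation argument needed to pass from strict $B - A > 0$ to the general case $A \leq B$.
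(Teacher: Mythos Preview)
The paper does not prove this statement: it is quoted verbatim from \cite[Theorem 4.43]{Hiai2014} as a background fact and no argument is given. So there is no ``paper's own proof'' to compare your attempt against.

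That said, your proposal is essentially a correct self-contained proof. The forward direction via the integral representation of Theorem~\ref{thm:operator-convex-Bhatia} is standard and fine. For the converse you correctly extract from concavity and surjectivity onto $(0,\infty)$ that $f$ is nondecreasing with $f(0^+)=0$, so that $t\mapsto f(t)/t$ is nonincreasing and its limit $b\in[0,\infty)$ exists; the key operator step, writing $B=sA+(1-s)M$ with $M=(B-sA)/(1-s)$ and letting $s\to 1^-$, is carried out properly, including the observation $(1-s)f(M)=\bigl(f(M)M^{-1}\bigr)(B-sA)\to b(B-A)$ because the spectrum of $M$ escapes to infinity and $f(\lambda)/\lambda\to b$. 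The final approximation $B\rightsquigarrow B+\delta I$ to pass from $B-A>0$ to $B-A\ge 0$ is also the right move.

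Two minor points of presentation you may want to tighten: the assertion that a nonincreasing concave function on $(0,\infty)$ is bounded above near $0$ is true but deserves one line (any concave $f$ lies below each of its tangent lines, so $f(0^+)\le f(1)+m\cdot(-1)<\infty$ for any supergradient $m$ at $1$); and you should state explicitly that $f(t)/t$ is nonincreasing because $f(0^+)=0$ together with concavity gives $f(s)\ge (s/t)f(t)$ for $0<s<t$, which is what makes $b$ finite. These are cosmetic; the argument is sound.
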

For further connections between operator monotone functions and operator convex functions, we refer to \cite[Section V]{Bhatia1997}. Another way to find new operator convex functions is to consider their transpose. 
\begin{prop}[{\cite[Proposition A.1]{Hiai2017}}] \label{prop:transpose} Let $f:(0,\infty) \to \mathbb R$ and let $\tilde f(x) = x f(1/x)$ for all $x \in (0,\infty)$. Then, $f$ is operator convex if and only if $\tilde f$ is operator convex. $\tilde f$ is called the \emph{transpose} of $f$.
\end{prop}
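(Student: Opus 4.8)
The plan is to exploit two features: that the transpose operation is an involution, and that operator convexity of functions on $(0,\infty)$ is captured by an integral representation whose kernels transform nicely under $x \mapsto 1/x$.

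First I would record the involution $\tilde{\tilde f} = f$, which holds because $\tilde{\tilde f}(x) = x\,\tilde f(1/x) = x \cdot (1/x) f(x) = f(x)$. Consequently it suffices to prove a single implication, namely that $f$ operator convex implies $\tilde f$ operator convex; the converse follows by applying this implication to $\tilde f$ in place of $f$. I would also note the operator-level identity $\tilde f(X) = X^{1/2} f(X^{-1}) X^{1/2} = X f(X^{-1})$ for $X > 0$ (the two expressions agree since $X$ commutes with $f(X^{-1})$), which is what gives the transpose meaning on matrices and ultimately explains why it preserves operator convexity.

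The core step uses the integral representation of operator convex functions on $(0,\infty)$, of the same Nevanlinna flavour as Theorem \ref{thm:operator-convex-Bhatia}: $f$ is operator convex if and only if
\[
f(x) = f(1) + f'(1)(x-1) + \int_{[0,\infty]} \frac{(x-1)^2}{x+t}\, \mathrm{d}\mu(t)
\]
for a positive measure $\mu$, where the endpoints $t=0$ and $t=\infty$ account for the terms $(x-1)^2/x$ and $(x-1)^2$. I would then transpose each building block. A direct computation gives, for $t \in (0,\infty)$,
\[
x \cdot \frac{(1/x - 1)^2}{1/x + t} = \frac{(x-1)^2}{1 + t x} = \frac{1}{t}\cdot \frac{(x-1)^2}{x + 1/t},
\]
so the kernel at $t$ is sent to $\tfrac1t$ times the kernel at $1/t$; the affine part $f(1) + f'(1)(x-1)$ is sent to the affine function $f(1) + (f(1)-f'(1))(x-1)$; and the two endpoint terms are interchanged, consistently with $t \mapsto 1/t$ swapping $0$ and $\infty$. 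Assembling these pieces shows that $\tilde f$ has a representation of exactly the same form, with positive measure equal to the pushforward of $\tfrac1t\, \mathrm{d}\mu(t)$ under $t \mapsto 1/t$; hence $\tilde f$ is operator convex, which by the involution yields the full equivalence.

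The main obstacle is not the substitution but justifying the integral representation within the available tools, since the excerpt states the representation only for operator monotone functions (Theorem \ref{thm:operator-convex-Bhatia}). I would therefore either invoke the analogous operator convex representation from \cite[Section V]{Bhatia1997} directly, or stay inside the excerpt by passing through the characterization that $f$ is operator convex on $(0,\infty)$ precisely when $g(x) := (f(x)-f(1))/(x-1)$ is operator monotone. One then checks the algebraic identity $(\tilde f(x) - \tilde f(1))/(x-1) = f(1) - g(1/x)$, so it remains to show that $f(1) - g(1/x)$ is operator monotone; this follows from Theorem \ref{thm:operator-convex-Bhatia} by substituting $1/\lambda$ into the representation of $g$ and simplifying $1/(1/\lambda + t) = \lambda/(1+t\lambda)$, which again produces a representation of the required form with positive pushforward measure under $t \mapsto 1/t$. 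Throughout, the delicate point is careful bookkeeping of the boundary contributions at $t = 0, \infty$ (equivalently, of the constant and $x^2$ coefficients) so that the transformed measure is genuinely positive and integrable against $1/(1+t^2)$.
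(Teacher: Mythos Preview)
The paper does not contain a proof of this proposition; it is stated with a citation to \cite[Proposition A.1]{Hiai2017} and used as a black box, so there is no in-paper argument to compare against.

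Your proposal is correct. The involution $\tilde{\tilde f}=f$ legitimately reduces the equivalence to a single implication, and both routes you outline go through. The kernel computation $x\,(1/x-1)^2/(1/x+t)=(x-1)^2/(x+1/t)\cdot (1/t)$ and the handling of the affine and endpoint parts are accurate, so the pushforward of $t^{-1}\,\mathrm{d}\mu(t)$ under $t\mapsto 1/t$ is indeed the representing measure for $\tilde f$. One simplification in your second route: after establishing the identity $(\tilde f(x)-\tilde f(1))/(x-1)=f(1)-g(1/x)$, you do not need to re-enter the integral representation of $g$ to see that $f(1)-g(1/x)$ is operator monotone. It follows directly from the fact that $x\mapsto x^{-1}$ is operator antitone on positive matrices: if $0<A\le B$ then $A^{-1}\ge B^{-1}$, hence $g(A^{-1})\ge g(B^{-1})$ by operator monotonicity of $g$, so $-g(A^{-1})\le -g(B^{-1})$. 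Your caution about the boundary contributions at $t\in\{0,\infty\}$ in the first route is appropriate, but once that bookkeeping is done the argument is complete.
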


In the next theorem, we collect several equivalent characterizations of operator convexity. The statements come from {\cite[Theorem 2.1]{Hansen2003}} and \cite[Exercise V.2.2]{Bhatia1997}. 
\begin{thm}[Jensen's operator inequality] \label{thm:jensen}
For a continuous function $f$ defined on an interval $\mathcal{I}$, the following conditions are equivalent: 
\begin{enumerate}
\item $f$ is operator convex on $\mathcal I$.
\item For each natural number $n$, we have the inequality
\begin{equation*}
f\left(\sum_{i = 1}^n A_i^\ast X_i A_i\right) \leq \sum_{i = 1}^n A_i^\ast f(X_i)A_i
\end{equation*}
for every $n$-tuple $(X_1, \ldots, X_n)$ of bounded, self-adjoint operators on an arbitrary Hilbert space $\mathcal H$ with spectra contained in $\mathcal{I}$ and every $n$-tuple $(A_1, \ldots, A_n)$ of operators on $\mathcal H$ with $\sum_{k = 1}^n A_k^\ast A_k = \mathds 1$.
\item $f(V^\ast X V) \leq V^\ast f(X) V$ for every Hermitian operator (on a Hilbert space $\mathcal H$) with spectrum in $\mathcal{I}$ and every isometry $V$ from any Hilbert space into $\mathcal H$.
\end{enumerate}
\end{thm}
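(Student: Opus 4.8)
The plan is to prove the equivalences around the cycle (1)$\Rightarrow$(3)$\Rightarrow$(2)$\Rightarrow$(1); I will work with finite-dimensional $\mathcal H$ as in the rest of the paper, the general bounded-operator statements following from this in the standard way (using continuity of $f$). The implication (2)$\Rightarrow$(1) is immediate: apply (2) with $n=2$, $X_1=A$, $X_2=B$ self-adjoint with spectra in $\mathcal I$, and $A_1=\sqrt\lambda\,\mathds 1$, $A_2=\sqrt{1-\lambda}\,\mathds 1$ for $\lambda\in[0,1]$, so that $A_1^\ast A_1+A_2^\ast A_2=\mathds 1$ and the inequality becomes $f(\lambda A+(1-\lambda)B)\le\lambda f(A)+(1-\lambda)f(B)$. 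For (3)$\Rightarrow$(2), given $A_1,\dots,A_n$ on $\mathcal H$ with $\sum_k A_k^\ast A_k=\mathds 1$ and self-adjoint $X_1,\dots,X_n$ with spectra in $\mathcal I$, I would form on $\mathcal H^{\oplus n}$ the self-adjoint operator $\mathbf X:=\bigoplus_{i=1}^n X_i$ (still with spectrum in $\mathcal I$) and the map $\mathbf A:\mathcal H\to\mathcal H^{\oplus n}$, $\mathbf A\xi:=(A_1\xi,\dots,A_n\xi)$. Then $\mathbf A^\ast\mathbf A=\sum_k A_k^\ast A_k=\mathds 1$, so $\mathbf A$ is an isometry, while $\mathbf A^\ast\mathbf X\mathbf A=\sum_i A_i^\ast X_i A_i$ and $\mathbf A^\ast f(\mathbf X)\mathbf A=\sum_i A_i^\ast f(X_i)A_i$ since $f(\mathbf X)=\bigoplus_i f(X_i)$; applying (3) to $\mathbf A$ and $\mathbf X$ gives exactly the inequality in (2).

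The substantial step is (1)$\Rightarrow$(3), which I would prove by Davis's pinching trick. Given an isometry $V:\mathcal K\to\mathcal H$, set $P:=VV^\ast$ (the orthogonal projection onto $V\mathcal K$) and $U:=2P-\mathds 1$, a self-adjoint unitary, and let $\mathcal P(Y):=PYP+(\mathds 1-P)Y(\mathds 1-P)=\tfrac12(Y+UYU)$ denote the pinching with respect to $\mathcal H=V\mathcal K\oplus(V\mathcal K)^\perp$. For self-adjoint $X$ with spectrum in $\mathcal I$, the operator $\mathcal P(X)$ is block-diagonal with the two blocks being compressions of $X$, so its spectrum, and that of $UXU$, again lie in $\mathcal I$ (here we use that $\mathcal I$ is an interval). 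Operator convexity of $f$ at $\lambda=\tfrac12$ then gives
\begin{equation*}
f(\mathcal P(X))=f\big(\tfrac12(X+UXU)\big)\le\tfrac12 f(X)+\tfrac12 f(UXU)=\mathcal P(f(X)),
\end{equation*}
where the last equality uses $f(UXU)=Uf(X)U$ and the definition of $\mathcal P$. Since $\mathcal P(X)$ is block-diagonal, $f(\mathcal P(X))=f\big(PXP|_{V\mathcal K}\big)\oplus f\big((\mathds 1-P)X(\mathds 1-P)|_{(V\mathcal K)^\perp}\big)$, so compressing the inequality $f(\mathcal P(X))\le\mathcal P(f(X))$ to the subspace $V\mathcal K$ leaves $f\big(PXP|_{V\mathcal K}\big)\le Pf(X)P|_{V\mathcal K}$ as operators on $V\mathcal K$. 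Finally, transporting this through the unitary $V:\mathcal K\to V\mathcal K$ — using $V^\ast XV=V^\ast(PXP)V$, $V^\ast f(X)V=V^\ast(Pf(X)P)V$ and $f(V^\ast XV)=V^\ast f\big(PXP|_{V\mathcal K}\big)V$ — yields $f(V^\ast XV)\le V^\ast f(X)V$, which is (3).

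The only point needing care is the bookkeeping at the end of (1)$\Rightarrow$(3): one must check that the compression of $X$ to $V\mathcal K$ is unitarily equivalent, via $V$, to the operator $V^\ast XV$ on $\mathcal K$, and likewise for $f(X)$, and that every operator to which the functional calculus is applied has spectrum inside the interval $\mathcal I$ — which is exactly where the hypothesis that $\mathcal I$ be an interval enters. Everything else is routine. The argument is classical, going back to Davis and, in the form stated here, to Hansen and Pedersen; cf.\ \cite{Hansen2003} and \cite[Exercise V.2.2]{Bhatia1997}.
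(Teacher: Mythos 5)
The paper does not prove this theorem; it is quoted directly from \cite[Theorem 2.1]{Hansen2003} and \cite[Exercise V.2.2]{Bhatia1997}. Your argument --- the cycle $(1)\Rightarrow(3)\Rightarrow(2)\Rightarrow(1)$ via the direct-sum isometry for $(3)\Rightarrow(2)$ and Davis's pinching with $U=2VV^\ast-\mathds 1$ for $(1)\Rightarrow(3)$ --- is correct and is essentially the classical Hansen--Pedersen proof that those references give, including the two points that need care (spectra of compressions staying in the interval $\mathcal I$, and transporting the compressed inequality back through $V$).
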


\begin{remark}
Let $\mathcal{I} \subseteq \mathbb R$ be an interval, $f: \mathcal{I} \to \mathbb R$ be a continuous operator convex function, $\mathcal M$ be a matrix algebra, and $\Phi: \mathcal M \to \mathcal M$ a unital completely positive map. Then, Jensen's operator inequality in particular implies that $f(\Phi(X)) \leq \Phi(f(X))$ for any Hermitian $X\in \mathcal M$ with spectrum contained in $\mathcal{I}$. This follows from the fact that any completely positive map possesses a Kraus decomposition. 
\end{remark}

We now turn to conditional expectations.

\begin{prop}[{\cite[Proposition 1.12]{Ohya1993}}]
Let $\mathcal M$ be a matrix algebra with unital matrix subalgebra $\mathcal N$. Then, there exists a unique linear mapping $\mathcal E: \mathcal M \to \mathcal N$ such that
\begin{enumerate}
\item $\mathcal E$ is a positive map,
\item $\mathcal E(B) = B$ for all $B \in \mathcal N$,
\item $\mathcal E(AB) = \mathcal E(A)B$ for all $A \in \mathcal M$ and all $B \in \mathcal N$,
\item $\mathcal E$ is trace preserving.
\end{enumerate}
A map fulfilling $(1)$-$(3)$ is called a \emph{conditional expectation}.
\end{prop}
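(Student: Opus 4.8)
The plan is to construct $\mathcal E$ explicitly as the orthogonal projection of $\mathcal M$ onto the subspace $\mathcal N$ with respect to the Hilbert--Schmidt inner product $\langle A, B\rangle = \tr[A^* B]$, verify the four properties one at a time, and then dispatch uniqueness by a short trace-duality argument. Since $\mathcal N$ is a finite-dimensional (hence closed) subspace of the Hilbert space $(\mathcal M, \langle\cdot,\cdot\rangle)$, the orthogonal projection $\mathcal E: \mathcal M \to \mathcal N$ is well defined, linear, self-adjoint as a Hilbert-space operator, and fixes $\mathcal N$ pointwise, which is exactly property $(2)$. Property $(4)$ is then immediate from unitality of $\mathcal N$: because $I \in \mathcal N$ and $A - \mathcal E(A) \perp \mathcal N$, we get $\tr[A] = \langle I, A\rangle = \langle I, \mathcal E(A)\rangle = \tr[\mathcal E(A)]$.

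For the module property $(3)$ I would show that $\mathcal E(AB)$ and $\mathcal E(A)B$, both of which lie in $\mathcal N$, have the same inner product against every $C \in \mathcal N$. Using self-adjointness of $\mathcal E$, the fixed-point property $(2)$, and the fact that $\mathcal N$ is closed under products and adjoints (so that $CB^* \in \mathcal N$), a short computation gives $\langle C, \mathcal E(AB)\rangle = \tr[C^* A B] = \langle C, \mathcal E(A) B\rangle$ for all $C \in \mathcal N$. Since the difference $\mathcal E(AB) - \mathcal E(A)B$ lies in $\mathcal N$ yet is orthogonal to all of $\mathcal N$, it must vanish.

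The substantive step is positivity $(1)$, and here the plan is to invoke the structure theorem for finite-dimensional $*$-algebras together with the classification of their representations. Up to a unitary conjugation there is a decomposition $\mathcal H \cong \bigoplus_i (\mathbb C^{n_i}\otimes \mathbb C^{m_i})$ under which $\mathcal N = \bigoplus_i \left(\mathcal B(\mathbb C^{n_i})\otimes I_{m_i}\right)$. Writing a general operator in the associated block form, the orthogonal-projection characterization forces $\mathcal E$ to annihilate every off-diagonal block and to act on the $i$-th diagonal block $A_{ii}$ by $A_{ii} \mapsto \tfrac{1}{m_i}\,\mathrm{tr}_{\mathbb{C}^{m_i}}(A_{ii}) \otimes I_{m_i}$; one checks that this is indeed the correct projection and that it recovers $(4)$. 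Each ingredient --- compression to the diagonal blocks, the partial trace $\mathrm{tr}_{\mathbb{C}^{m_i}}$, tensoring with the positive operator $\tfrac{1}{m_i} I_{m_i}$, and taking the direct sum --- is completely positive, so $\mathcal E$ is completely positive, in particular positive; passing from $\mathcal B(\mathcal H)$ back to $\mathcal M$ preserves this, since orthogonality to $\mathcal N$ is independent of the ambient space. I expect this identification of $\mathcal E$ with a normalized partial trace to be the main obstacle, as it is the only place where the internal algebraic structure of $\mathcal N$, rather than formal inner-product manipulation, is genuinely used.

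Finally, for uniqueness, suppose $\mathcal E'$ also satisfies $(3)$ and $(4)$. Applying $(4)$ to $AB$ and then $(3)$ gives $\tr[\mathcal E'(A) B] = \tr[\mathcal E'(AB)] = \tr[AB]$ for every $B \in \mathcal N$, and the identical chain holds for $\mathcal E$; subtracting, $\tr[(\mathcal E'(A) - \mathcal E(A))B] = 0$ for all $B \in \mathcal N$. Since $\mathcal E'(A) - \mathcal E(A) \in \mathcal N$ and $\mathcal N$ is $*$-closed, choosing $B = (\mathcal E'(A) - \mathcal E(A))^*$ yields $\norm{\mathcal E'(A) - \mathcal E(A)}_2^2 = 0$, hence $\mathcal E' = \mathcal E$. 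Note that this argument needs only $(3)$ and $(4)$, giving a slightly stronger uniqueness statement than the one asserted.
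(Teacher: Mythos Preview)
Your argument is correct. The paper does not supply its own proof of this proposition; it is quoted from \cite{Ohya1993} as background, so there is no in-paper proof to compare against. Your approach---realizing $\mathcal E$ as the Hilbert--Schmidt orthogonal projection onto $\mathcal N$, reading off $(2)$ and $(4)$ directly, obtaining the bimodule property $(3)$ from self-adjointness of the projection together with $*$-closure of $\mathcal N$, and invoking the block decomposition $\mathcal N \cong \bigoplus_i \mathcal B(\mathbb C^{n_i})\otimes I_{m_i}$ to exhibit $\mathcal E$ as a direct sum of normalized partial traces for $(1)$---is the standard route and is exactly how this result is typically proved in finite dimensions. The uniqueness argument via $(3)$ and $(4)$ alone is clean and, as you note, slightly sharper than what is stated. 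One small clarification worth making explicit: the paper's definition of ``matrix algebra'' requires $\mathcal M$ to contain the identity of $\mathcal B(\mathcal H)$, so the unitality hypothesis on $\mathcal N$ indeed makes $\mathcal N$ unital in $\mathcal B(\mathcal H)$, which is what licenses the representation-theoretic decomposition you use.
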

It can be shown that conditional expectations are completely positive \cite[Proposition 5.2.2]{Benatti2009}. Moreover, they are selfadjoint with respect to the Hilbert-Schmidt inner product. The following proposition shows that unital positive maps preserve positive definiteness. In particular, this holds for conditional expectations.

\begin{prop}
Let $\mathcal M$, $\mathcal N$ be two matrix algebras. Moreover, let $\mathcal T: \mathcal M \to \mathcal N$ be a unital positive map.  Then, for $\rho \in \mathcal M$, $\rho > 0$, it holds that $\mathcal T(\rho) > 0$.
\end{prop}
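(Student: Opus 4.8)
The statement is essentially a one-line consequence of positivity, linearity, and unitality, together with finite-dimensionality. The plan is as follows. Since $\mathcal H$ is finite-dimensional and $\rho \in \mathcal M$ is positive definite, its smallest eigenvalue $\lambda := \lambda_{\min}(\rho)$ is strictly positive, and hence $\rho - \lambda I \geq 0$, where $I$ is the identity of $\mathcal M$ (recall a matrix algebra is unital, so $I \in \mathcal M$).

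Next I would apply the hypotheses to this inequality. Because $\mathcal T$ is positive, $\mathcal T(\rho - \lambda I) \geq 0$; by linearity this reads $\mathcal T(\rho) - \lambda\, \mathcal T(I) \geq 0$; and because $\mathcal T$ is unital, $\mathcal T(I) = I_{\mathcal N}$, the identity of $\mathcal N$. Combining, $\mathcal T(\rho) \geq \lambda\, I_{\mathcal N} > 0$, so $\mathcal T(\rho)$ is positive definite, as claimed.

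There is no real obstacle here: the only points worth flagging are that finite-dimensionality is what guarantees $\lambda_{\min}(\rho) > 0$ (so that the bound $\rho \geq \lambda I$ is strict), and that the notion of positivity used (mapping positive semidefinite matrices to positive semidefinite matrices) is exactly what is needed to pass from $\rho - \lambda I \geq 0$ to $\mathcal T(\rho) - \lambda I_{\mathcal N} \geq 0$. One could equivalently phrase the argument via the spectrum of $\mathcal T(\rho)$, but the eigenvalue-bound version above is the cleanest.
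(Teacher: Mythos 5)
Your proof is correct, and it takes a genuinely different route from the paper's. You use the operator inequality $\rho \geq \lambda_{\min}(\rho)\, I$ (valid since $\rho>0$ and $\mathcal H$ is finite-dimensional) and push it through $\mathcal T$ using positivity, linearity, and unitality to get $\mathcal T(\rho) \geq \lambda_{\min}(\rho)\, I_{\mathcal N} > 0$. The paper instead argues by contradiction via the dual map: if $\mathcal T(\rho)$ were singular, there would be a nonzero $\psi$ with $\tr[\psi\psi^\ast \mathcal T(\rho)]=0$, hence $\tr[\mathcal T^\ast(\psi\psi^\ast)\rho]=0$; but $\mathcal T^\ast$ is positive and trace-preserving (because $\mathcal T$ is unital), so $\mathcal T^\ast(\psi\psi^\ast)$ is a nonzero positive semidefinite operator, and $\rho>0$ forces that trace to be strictly positive. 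Both arguments hinge on unitality; yours is more direct and even quantitative, yielding the explicit bound $\lambda_{\min}(\mathcal T(\rho)) \geq \lambda_{\min}(\rho)$, whereas the paper's duality argument is the one that transfers more naturally to situations where one has good control of the adjoint map rather than an identity lower bound. Both are complete proofs; there is no gap in yours.
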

\begin{proof}
Since $\mathcal T$ is a positive map, it holds that $\mathcal T(\rho) \geq 0$. Assume that $\mathcal T(\rho)$ is not positive definite. Then, there is a non-zero $\psi \in \mathcal H$ such that $\tr[\psi \psi^\ast \mathcal T (\rho)] = 0$. However, 
\begin{equation*}
\tr[\psi \psi^\ast \mathcal T(\rho)] = \tr[\mathcal T^\ast(\psi \psi^\ast) \rho].
\end{equation*}
Since $\mathcal T^\ast$ is also a positive map, $\mathcal T^\ast(\psi \psi^\ast) \geq 0$. Furthermore, $\mathcal T^\ast(\psi \psi^\ast) \neq 0$, as $ \mathcal T^\ast$ is trace preserving since $\mathcal T$ is unital. Hence, $\tr[\mathcal T^\ast(\psi \psi^\ast) \rho] > 0$, which is a contradiction.
\end{proof}
 
A standard result for completely positive maps which we will use in Section \ref{sec:quantum-channels} is the existence of a Stinespring dilation (see e.g.\ \cite[Theorem 2.22]{Watrous2018}). It allows us to write a general quantum channel as the action of a conditional expectation $\tr_{ \mathcal V}[\cdot]/s \otimes I$ and an isometry $V$.
\begin{thm}[Stinespring's dilation theorem] \label{thm:stinespring}
 Let $\mathcal M \subseteq \mathcal B(\mathcal H)$, $\mathcal N\subseteq \mathcal B(\mathcal K)$ be two matrix algebras with Hilbert spaces $\mathcal H$, $\mathcal K$, and let $\mathcal T: \mathcal M \to \mathcal N$ be a quantum channel. Then, there exist a Hilbert space $\mathcal V$ and an isometry $V: \mathcal H \hookrightarrow \mathcal K \otimes \mathcal V$ such that 
\begin{equation*}
\mathcal{T} (\omega) = \tr_{\mathcal V} \left[ V \omega V^* \right]
\end{equation*}
for all states $\omega$ on $\mathcal M$. Here, $\tr_{\mathcal V}$ is the partial trace over the second system $\mathcal V$. 
\end{thm}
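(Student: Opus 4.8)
The plan is to reduce the statement to the case of the full matrix algebras $\mathcal{M}=\mathcal{B}(\mathcal{H})$, $\mathcal{N}=\mathcal{B}(\mathcal{K})$, where the isometry $V$ can be read off directly from a Kraus decomposition of $\mathcal{T}$. So suppose first that $\mathcal{M}=\mathcal{B}(\mathcal{H})$. Since $\mathcal{T}$ is completely positive it possesses a Kraus decomposition $\mathcal{T}(\omega)=\sum_{k=1}^{m}K_{k}\omega K_{k}^{*}$ with operators $K_{k}\colon\mathcal{H}\to\mathcal{K}$ (for instance obtained from the spectral decomposition of the Choi matrix of $\mathcal{T}$). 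Trace preservation is then equivalent to $\sum_{k=1}^{m}K_{k}^{*}K_{k}=I_{\mathcal{H}}$, because $\tr[\mathcal{T}(\omega)]=\tr[(\sum_{k}K_{k}^{*}K_{k})\,\omega]$ has to equal $\tr[\omega]$ for all $\omega$. I would then set $\mathcal{V}:=\mathbb{C}^{m}$ with a fixed orthonormal basis $\{|k\rangle\}_{k=1}^{m}$ and define $V\colon\mathcal{H}\to\mathcal{K}\otimes\mathcal{V}$ by $V|\psi\rangle:=\sum_{k=1}^{m}K_{k}|\psi\rangle\otimes|k\rangle$. Then $V^{*}V=\sum_{k}K_{k}^{*}K_{k}=I_{\mathcal{H}}$, so $V$ is an isometry, and since $V\omega V^{*}=\sum_{k,l}K_{k}\omega K_{l}^{*}\otimes|k\rangle\langle l|$ one gets $\tr_{\mathcal{V}}[V\omega V^{*}]=\sum_{k,l}\langle l|k\rangle\,K_{k}\omega K_{l}^{*}=\sum_{k}K_{k}\omega K_{k}^{*}=\mathcal{T}(\omega)$, which is the assertion for full matrix algebras.

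For a general unital matrix subalgebra $\mathcal{M}\subseteq\mathcal{B}(\mathcal{H})$ I would reduce to the previous case by first extending $\mathcal{T}$. Let $\mathcal{E}_{\mathcal{M}}\colon\mathcal{B}(\mathcal{H})\to\mathcal{M}$ be the trace-preserving conditional expectation onto $\mathcal{M}$ (which exists and is completely positive, as recalled above) and put $\widetilde{\mathcal{T}}:=\mathcal{T}\circ\mathcal{E}_{\mathcal{M}}\colon\mathcal{B}(\mathcal{H})\to\mathcal{B}(\mathcal{K})$. As a composition of completely positive trace-preserving maps, $\widetilde{\mathcal{T}}$ is a quantum channel on the full algebra, so the previous paragraph provides a Hilbert space $\mathcal{V}$ and an isometry $V\colon\mathcal{H}\to\mathcal{K}\otimes\mathcal{V}$ with $\tr_{\mathcal{V}}[V\omega V^{*}]=\widetilde{\mathcal{T}}(\omega)$ for every $\omega\in\mathcal{B}(\mathcal{H})$. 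Since $\mathcal{E}_{\mathcal{M}}$ acts as the identity on $\mathcal{M}$, this yields $\tr_{\mathcal{V}}[V\omega V^{*}]=\mathcal{T}(\omega)$ for every state $\omega$ on $\mathcal{M}$, and the right-hand side automatically lies in $\mathcal{N}$ since it equals $\mathcal{T}(\omega)$.

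I do not expect a genuine obstacle here, since the dilation theorem is classical and is only invoked (not proved) in the paper. The two points that need a moment's care are the passage ``completely positive $\Rightarrow$ Kraus form'' — which rests on Choi's theorem together with the spectral theorem — and, in the subalgebra case, the (immediate) observation that precomposing with the conditional expectation $\mathcal{E}_{\mathcal{M}}$ preserves complete positivity and trace preservation. If one preferred a coordinate-free construction, one could instead run Stinespring's original GNS-type argument on the algebraic tensor product $\mathcal{M}\odot\mathcal{K}$ equipped with the form $\langle a\otimes\xi,\,b\otimes\eta\rangle:=\langle\xi|\mathcal{T}(a^{*}b)|\eta\rangle$, quotient by its null space, and then identify the resulting representation of $\mathcal{M}$, up to multiplicity, with a subrepresentation of $a\mapsto a\otimes I_{\mathcal{V}}$ on $\mathcal{H}\otimes\mathcal{V}$; in finite dimensions this is more work for no real gain.
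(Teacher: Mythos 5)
Your proof is correct: the Kraus-operator construction of the isometry and the verification $\tr_{\mathcal V}[V\omega V^*]=\sum_k K_k\omega K_k^*$ are exactly the standard finite-dimensional argument, and the reduction of the subalgebra case via the conditional expectation is a valid (and immediate) extra step. The paper does not prove this theorem at all — it cites it as a classical result from Watrous — and your argument coincides with the standard proof given there, so there is nothing further to compare.
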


\subsection{Standard $f$-divergences}

In this subsection, we recall some definitions and basic properties concerning standard $f$-divergences. The main reference for them, as well as for maximal $f$-divergences is \cite{Hiai2017}. The latter are introduced in the next subsection. We focus on states with full rank and refer the reader to \cite{Hiai2017} for a more general treatment.

\begin{defi}[{\cite[Definition 3.1]{Hiai2017}}]
Let $f:(0,\infty) \to \mathbb R$ be an operator convex function and $\sigma > 0$, $\rho > 0$ be two unnormalized states on a matrix algebra $\mathcal M$. Then, 
\begin{equation*}
S_f(\sigma\|\rho) = \tr[\rho^{1/2}f(L_\sigma R_{\rho^{-1}})\rho^{1/2}]
\end{equation*} 
is the \emph{standard $f$-divergence}. This definition can be extended to general states $\sigma, \rho$ as
\begin{equation*}
S_f(\sigma\|\rho) := \underset{\varepsilon \searrow 0}{\operatorname{lim}} \, S_f (\sigma + \varepsilon I \| \rho + \varepsilon I).
\end{equation*}
\end{defi}
We obtain the same standard $f$-divergence if we exchange $\rho$ and $\sigma$ and consider the transpose of $f$ instead.
\begin{prop}[{\cite[Proposition 3.7]{Hiai2017}}]\label{prop:transpose_standard}
Let $f:(0,\infty) \to \mathbb R$ be an operator convex function with transpose $\tilde f$ and $\sigma > 0$, $\rho > 0$ be two states on a matrix algebra $\mathcal M$. Then, $S_f(\sigma\|\rho) = S_{\tilde f}(\rho\|\sigma)$.
\end{prop}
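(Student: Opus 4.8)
The plan is to reduce both sides of the claimed identity to the same explicit double sum indexed by the joint spectral data of $\sigma$ and $\rho$. First I would fix orthonormal eigenbases $\sigma = \sum_i s_i \ket{e_i}\!\bra{e_i}$ and $\rho = \sum_j r_j \ket{f_j}\!\bra{f_j}$ with all $s_i, r_j > 0$, and work in $\mathcal B(\mathcal H)$ equipped with the Hilbert--Schmidt inner product. Since left and right multiplications always commute, the orthonormal basis $\{\ket{e_i}\!\bra{f_j}\}_{i,j}$ of $\mathcal B(\mathcal H)$ diagonalizes $\Delta_{\sigma,\rho} := L_\sigma R_{\rho^{-1}}$, with $\Delta_{\sigma,\rho}\ket{e_i}\!\bra{f_j} = (s_i/r_j)\ket{e_i}\!\bra{f_j}$, so that $f(\Delta_{\sigma,\rho})$ is diagonal in this basis with entries $f(s_i/r_j)$, all well defined because the ratios lie in $(0,\infty)$.

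Next I would expand the relevant operators in this basis. Rewriting the definition as $S_f(\sigma\|\rho) = \left\langle \rho^{1/2}, f(\Delta_{\sigma,\rho})\rho^{1/2}\right\rangle_{\operatorname{HS}}$ (using that $\rho^{1/2}$ is self-adjoint) and inserting $\ket{f_j} = \sum_i \langle e_i | f_j\rangle \ket{e_i}$, i.e.\ $\rho^{1/2} = \sum_{i,j} r_j^{1/2}\langle e_i|f_j\rangle\,\ket{e_i}\!\bra{f_j}$, orthonormality gives
\[
S_f(\sigma\|\rho) = \sum_{i,j} r_j\, \abs{\langle e_i | f_j\rangle}^2\, f\!\left(\frac{s_i}{r_j}\right).
\]
The same computation with the roles of $\sigma$ and $\rho$ interchanged, now using $\Delta_{\rho,\sigma} = L_\rho R_{\sigma^{-1}}$ with eigenvectors $\ket{f_j}\!\bra{e_i}$ and eigenvalues $r_j/s_i$, yields $S_{\tilde f}(\rho\|\sigma) = \sum_{i,j} s_i\, \abs{\langle f_j | e_i\rangle}^2\, \tilde f(r_j/s_i)$.

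It then remains to compare the two expressions term by term. One has $\abs{\langle f_j|e_i\rangle}^2 = \abs{\langle e_i|f_j\rangle}^2$, and the definition $\tilde f(x) = x f(1/x)$ gives $s_i\,\tilde f(r_j/s_i) = s_i\cdot\frac{r_j}{s_i}\,f(s_i/r_j) = r_j\,f(s_i/r_j)$; hence the two double sums coincide and the proposition follows. Since the states here are positive definite, no limiting argument is needed; for general states one would instead take the $\varepsilon\searrow 0$ limit from the definition of $S_f$. I do not expect a genuine obstacle in this argument: it is essentially bookkeeping, and the only point requiring care is keeping the Hilbert--Schmidt conventions straight and checking that the overlap weights $\abs{\langle e_i|f_j\rangle}^2$, which do not cancel, nevertheless match on the two sides.

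If a coordinate-free proof is preferred, I would instead argue as follows: the identity $f(x) = x\tilde f(1/x)$ together with $\Delta_{\sigma,\rho}^{1/2}\rho^{1/2} = \sigma^{1/2}$ (which holds since $\Delta_{\sigma,\rho}^{1/2} = L_{\sigma^{1/2}}R_{\rho^{-1/2}}$) gives $S_f(\sigma\|\rho) = \left\langle \sigma^{1/2}, \tilde f(\Delta_{\sigma,\rho}^{-1})\sigma^{1/2}\right\rangle$; and since the adjoint map $X\mapsto X^\ast$ is an antiunitary of $\mathcal B(\mathcal H)$ fixing $\sigma^{1/2}$ and conjugating $\Delta_{\sigma,\rho}^{-1}$ into $\Delta_{\rho,\sigma}$, the right-hand side equals $\left\langle \sigma^{1/2}, \tilde f(\Delta_{\rho,\sigma})\sigma^{1/2}\right\rangle = S_{\tilde f}(\rho\|\sigma)$.
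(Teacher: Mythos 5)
The paper does not prove this proposition itself --- it is imported verbatim from \cite[Proposition 3.7]{Hiai2017} --- so there is no in-paper argument to compare against. Your proof is correct and is essentially the standard one (and the one in the cited reference): diagonalize $L_\sigma R_{\rho^{-1}}$ in the orthonormal Hilbert--Schmidt basis $\{\ket{e_i}\!\bra{f_j}\}$ and observe that $s_i\,\tilde f(r_j/s_i)=r_j\,f(s_i/r_j)$ while the overlap weights $\abs{\langle e_i|f_j\rangle}^2$ are symmetric; the coordinate-free variant via the antiunitary $X\mapsto X^*$ conjugating $\Delta_{\sigma,\rho}^{-1}$ into $\Delta_{\rho,\sigma}$ is also sound.
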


As we can see below, the main examples of standard $f$-divergences are directly connected to the well-known Umegaki relative entropy and standard Rényi divergences.

\begin{ex}[{\cite[Example 3.5]{Hiai2017}}]
Let $f(x) = s(\alpha)x^\alpha$ for some $\alpha \in (0, \infty)$, where $s(\alpha) := -1$ for $0 < \alpha < 1$ and $s(\alpha) := 1$ for $\alpha \geq 1$. Then,
\begin{equation*}
S_f(\sigma \| \rho) = s(\alpha) \tr[\sigma^{\alpha} \rho^{1-\alpha}].
\end{equation*}

These quantities can be used to define the standard R\'enyi divergences. 
\end{ex}

\begin{ex}[{\cite[Example 3.5]{Hiai2017}}]
Let $f(x) = x \log x$. Then,
\begin{equation*} 
S_f(\sigma\|\rho) =  \tr[\sigma (\log \sigma - \log \rho)] 
\end{equation*}
defines the standard (Umegaki) relative entropy, usually denoted by $D(\sigma \| \rho)$. 
\end{ex}

Standard $f$-divergences extend the usual quantum relative entropy in more than one sense, since they share many of the properties that characterize the former, such as continuity (with respect to the first variable) or joint convexity. Indeed, one of the main features of this family of quantities is the data processing inequality.

\begin{prop}[Data processing, {\cite[Proposition 3.12]{Hiai2017}}] \label{prop:standard_dpi}
Let $\Phi: \mathcal M \to \mathcal B $ be a trace-preserving map between matrix algebras $\mathcal M$ and $\mathcal B$ such that its dual map is a 2-positive trace-preserving map. Then, for every two states $\sigma > 0$, $\rho > 0$  on $\mathcal M$ and every operator convex function $f: (0, \infty) \to \mathbb R$,
\begin{equation} \label{eq:data_processing}
S_f(\Phi(\sigma)\|\Phi(\rho)) \leq S_f(\sigma \| \rho). 
\end{equation}
\end{prop}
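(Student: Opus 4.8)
The plan is to run a Petz-type argument through the relative modular operator, after decomposing the operator convex function $f$ into elementary resolvent pieces. Write $\dsr := L_\sigma R_{\rho^{-1}}$, a positive invertible operator on $\mathcal M$ equipped with the Hilbert-Schmidt inner product (it is the product of the commuting positive operators $L_\sigma$, $R_{\rho^{-1}}$); unwinding the definition gives $S_f(\sigma\|\rho) = \langle \rho^{1/2}, f(\dsr)\,\rho^{1/2}\rangle_{\operatorname{HS}}$, and likewise $S_f(\Phi(\sigma)\|\Phi(\rho)) = \langle \Phi(\rho)^{1/2}, f(\Delta_{\Phi(\sigma),\Phi(\rho)})\,\Phi(\rho)^{1/2}\rangle_{\operatorname{HS}}$ with $\Delta_{\Phi(\sigma),\Phi(\rho)} := L_{\Phi(\sigma)} R_{\Phi(\rho)^{-1}}$ on $\mathcal B$; here $\Phi(\sigma), \Phi(\rho) > 0$ because $\Phi$ is positive and unital, unitality being equivalent to trace preservation of $\Phi^*$. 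Since $\Phi$ is trace preserving, subtracting from $f$ the affine function $x \mapsto f(1) + f'(1)(x-1)$ alters both sides of \eqref{eq:data_processing} by the same quantity, so I may assume $f(1) = f'(1) = 0$, whence $f \ge 0$ and $S_f \ge 0$.

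Under this normalization, the integral representation of operator convex functions on $(0,\infty)$ (see \cite[Section V]{Bhatia1997}) yields a positive measure $\mu$ on $[0,\infty]$ with $f(x) = \int_{[0,\infty]} \tfrac{(x-1)^2}{x+s}\,\mathrm{d}\mu(s)$, the kernel being read as $(x-1)^2/x$ at $s=0$ and as $(x-1)^2$ at $s=\infty$. Since the kernels are nonnegative, $S_f(\sigma\|\rho) = \int_{[0,\infty]} S_{k_s}(\sigma\|\rho)\,\mathrm{d}\mu(s)$ with $k_s(x) := \tfrac{(x-1)^2}{x+s}$, so it suffices to establish $S_{k_s}(\Phi(\sigma)\|\Phi(\rho)) \le S_{k_s}(\sigma\|\rho)$ for each $s$. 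For $0 < s < \infty$, the identity $k_s(x) = x - (s+2) + \tfrac{(s+1)^2}{x+s}$, together with $S_g(\sigma\|\rho) = \tr[\sigma]$ for $g(x) = x$ and $S_g(\sigma\|\rho) = \tr[\rho]$ for $g \equiv 1$ (both invariant under $\Phi$), reduces this to the single resolvent inequality
\begin{equation*}
\langle \rho^{1/2}, (\dsr + s)^{-1}\rho^{1/2}\rangle_{\operatorname{HS}} \;\ge\; \langle \Phi(\rho)^{1/2}, (\Delta_{\Phi(\sigma),\Phi(\rho)} + s)^{-1}\Phi(\rho)^{1/2}\rangle_{\operatorname{HS}} ;
\end{equation*}
the endpoint cases $s \in \{0,\infty\}$ then follow by letting $s \searrow 0$, resp.\ $s \to \infty$, using that $k_s$, resp.\ $s\,k_s$, converges uniformly on the (compact) spectra of the modular operators.

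To prove the resolvent inequality I would introduce the contraction $V : \mathcal B \to \mathcal M$, $V(Z) := \Phi^*\!\big(Z\,\Phi(\rho)^{-1/2}\big)\,\rho^{1/2}$. Putting $A := Z\,\Phi(\rho)^{-1/2}$ and using the Kadison-Schwarz inequality $\Phi^*(A^*A) \ge \Phi^*(A)^*\Phi^*(A)$ --- valid because $\Phi^*$ is unital and $2$-positive, which is the only place where $2$-positivity of $\Phi^*$ rather than mere positivity is needed --- together with trace preservation of $\Phi$ and cyclicity of the trace, one checks: (i) $V\,\Phi(\rho)^{1/2} = \Phi^*(I)\,\rho^{1/2} = \rho^{1/2}$; (ii) $\norm{VZ}_{\operatorname{HS}}^2 = \tr[\rho\,\Phi^*(A)^*\Phi^*(A)] \le \tr[\rho\,\Phi^*(A^*A)] = \tr[\Phi(\rho)\,A^*A] = \norm{Z}_{\operatorname{HS}}^2$, so $V^*V \le I$; (iii) for all $Z$, $\langle Z, V^*\dsr V\, Z\rangle_{\operatorname{HS}} = \tr[\sigma\,\Phi^*(A)\Phi^*(A)^*] \le \tr[\sigma\,\Phi^*(AA^*)] = \langle Z, \Delta_{\Phi(\sigma),\Phi(\rho)}\, Z\rangle_{\operatorname{HS}}$, i.e.\ $V^*\dsr V \le \Delta_{\Phi(\sigma),\Phi(\rho)}$. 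Now the function $\psi_s(x) := (x+s)^{-1} - s^{-1}$ is operator convex on $[0,\infty)$ with $\psi_s(0) = 0$; dilating $V$ to an isometry and invoking Jensen's operator inequality (Theorem \ref{thm:jensen}) gives $\psi_s(V^*\dsr V) \le V^*\psi_s(\dsr)\,V$, which rearranges to $V^*(\dsr+s)^{-1}V \ge (V^*\dsr V + s)^{-1} - s^{-1}(I - V^*V)$; combining with (iii) and operator antitonicity of $x \mapsto (x+s)^{-1}$ gives $V^*(\dsr+s)^{-1}V \ge (\Delta_{\Phi(\sigma),\Phi(\rho)}+s)^{-1} - s^{-1}(I - V^*V)$. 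Taking the expectation value in $\Phi(\rho)^{1/2}$, the left-hand side becomes $\langle\rho^{1/2},(\dsr+s)^{-1}\rho^{1/2}\rangle_{\operatorname{HS}}$ by (i), while the defect term equals $-s^{-1}\big(\tr[\Phi(\rho)] - \tr[\rho]\big) = 0$ since $\norm{V\Phi(\rho)^{1/2}}_{\operatorname{HS}}^2 = \tr[\rho] = \tr[\Phi(\rho)] = \norm{\Phi(\rho)^{1/2}}_{\operatorname{HS}}^2$; the resolvent inequality, and hence \eqref{eq:data_processing}, follows.

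The main difficulty is reconciling the two ingredients: $V$ is only a contraction, so Jensen for isometries (Theorem \ref{thm:jensen}(3)) does not apply verbatim, while $f$ is merely operator convex and not operator monotone, so the domination (iii) cannot be pushed through $f$ in one stroke (as it could if $f$ were operator monotone decreasing). The integral representation is exactly what makes the two compatible: each kernel $\tfrac{(x-1)^2}{x+s}$ produces a function $\psi_s$ vanishing at $0$, and the contraction defect $I-V^*V$ is annihilated by the cyclic vector $\Phi(\rho)^{1/2}$. One should also be careful that $2$-positivity of $\Phi^*$ genuinely suffices everywhere --- in particular no Stinespring dilation (Theorem \ref{thm:stinespring}) is available since $\Phi$ need not be completely positive --- and verify the reductions and endpoint limits used above.
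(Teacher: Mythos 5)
Your proof is correct. The paper does not prove this proposition at all --- it is imported verbatim from \cite[Proposition 3.12]{Hiai2017} --- so there is no in-paper argument to compare against; but your argument is a sound, self-contained version of the standard one. All the key steps check out: the reduction to $f(1)=f'(1)=0$ costs only the $\Phi$-invariant quantities $\tr[\rho]$ and $\tr[\sigma]$; the representation $f(x)=\int_{[0,\infty]}\frac{(x-1)^2}{x+s}\,\mathrm{d}\mu(s)$ is the correct normalized L\"owner-type form; the partial-fraction identity $k_s(x)=x-(s+2)+\frac{(s+1)^2}{x+s}$ is right; and the three properties of the contraction $V(Z)=\Phi^*(Z\Phi(\rho)^{-1/2})\rho^{1/2}$ (cyclic vector mapped to cyclic vector, $V^*V\le I$, $V^*\dsr V\le \Delta_{\Phi(\sigma),\Phi(\rho)}$) follow exactly as you say from the Schwarz inequality for the unital $2$-positive map $\Phi^*$, which is indeed the only place $2$-positivity is needed. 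Your handling of the contraction defect --- Jensen for operator convex $\psi_s$ with $\psi_s(0)=0$, with the term $s^{-1}(I-V^*V)$ annihilated in expectation against $\Phi(\rho)^{1/2}$ because $\norm{V\Phi(\rho)^{1/2}}_2=\norm{\Phi(\rho)^{1/2}}_2$ --- is precisely the point where a naive application of Theorem \ref{thm:jensen}(3) would fail, and you treat it correctly. It is worth noting that your construction is the exact analogue, for the relative modular operator $L_\sigma R_{\rho^{-1}}$, of what the paper does for $\Gamma=\sigma^{-1/2}\rho\sigma^{-1/2}$ in Proposition \ref{prop:propU*UandGamma} and Theorem \ref{thm:condition-equality}: same contraction trick, same resolvent decomposition, with the integral representation of a general operator convex $f$ replacing that of $\log$.
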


The above proposition in particular holds for quantum channels. Let us now define the following  map \cite[Equation (3.19)]{Hiai2017} for $\Phi$ as in Proposition \ref{prop:standard_dpi}:
\begin{equation*} 
\mathcal{R}_\Phi^\rho  (X):= \rho^{1/2} \Phi^\ast\left(\Phi(\rho)^{-1/2} ( X ) \Phi(\rho)^{-1/2}\right)\rho^{1/2} \qquad \forall X \in \mathcal B. 
\end{equation*}
This is the Petz recovery map for $\Phi$ with respect to $\rho$. In the following, we will assume that $\Phi$ preserves invertibility, as this will be the case in the situations addressed in this paper.

A natural question is to ask for conditions for when the data processing inequality \eqref{eq:data_processing} holds with equality. Theorem 3.18 of \cite{Hiai2017} gives a list of equivalent conditions, from which we only state some:

\begin{thm}[{\cite[Theorem 3.18]{Hiai2017}}] \label{thm:all-about-std-f}
Let $\sigma > 0$, $\rho > 0$  be two states on a matrix algebra $\mathcal M$ and let $\Phi: \mathcal M \to \mathcal B$ be a $2$-positive trace-preserving linear map, where $\mathcal B$ is again a matrix algebra. Then, the following are equivalent:
\begin{enumerate}
\item There exists a trace-preserving map $\Psi: \mathcal B \to \mathcal M$ such that $\Psi(\Phi(\rho)) = \rho$ and $\Psi(\Phi(\sigma)) = \sigma$.
\item $S_f(\Phi(\sigma) \| \Phi(\rho)) = S_f(\sigma \| \rho)$ for some operator convex function on $(0, \infty)$ such that $f(0^+) < \infty$ and 
\begin{equation*}
|\mathrm{supp}~\mu_f| \geq |\mathrm{spec}(L_\sigma R_{\rho^{-1}}) \cup \mathrm{spec}( L_{\Phi(\sigma)} R_{\Phi(\rho)^{-1}})|, 
\end{equation*}
with $\mu_f$ the measure appearing in \cite[Theorem 8.1]{Hiai2011}.
\item $S_f(\Phi(\sigma)\|\Phi(\rho)) = S_f(\sigma \| \rho)$  for all operator convex $f$ on $[0,\infty)$.
\item $\mathcal{R}_\Phi^\rho(\Phi(\sigma)) = \sigma$.
\end{enumerate}

\end{thm}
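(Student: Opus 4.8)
The plan is to route everything through the relative modular operator and Petz's contraction, and then to run a cycle of implications whose only substantial step is the passage from a single equality of $f$-divergences to exact recoverability. I regard $\mathcal M$ as a Hilbert space under the Hilbert--Schmidt inner product, write $\Delta_{\sigma,\rho}:=L_\sigma R_{\rho^{-1}}$ and $\gamma:=L_{\Phi(\sigma)}R_{\Phi(\rho)^{-1}}$, and use the identity $S_f(\sigma\|\rho)=\langle \rho^{1/2},f(\Delta_{\sigma,\rho})\,\rho^{1/2}\rangle$ together with its analogue on $\mathcal B$. The engine is the Petz contraction $V\colon\mathcal B\to\mathcal M$, $V(Y):=\Phi^\ast\!\big(Y\,\Phi(\rho)^{-1/2}\big)\rho^{1/2}$: the Kadison--Schwarz inequality, available since $\Phi^\ast$ is $2$-positive and unital, makes $V$ a contraction, and $\Phi^\ast(I)=I$ gives $V\,\Phi(\rho)^{1/2}=\rho^{1/2}$. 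The degeneracy of $V$ --- its being isometric on, and intertwining the modular operators over, the cyclic subspace generated by $\Phi(\rho)^{1/2}$ --- is exactly what will encode recoverability by $\mathcal R_\Phi^\rho$.

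For the routine implications I would argue as follows. (4)$\Rightarrow$(1): take $\Psi=\mathcal R_\Phi^\rho$, which is trace preserving and satisfies $\mathcal R_\Phi^\rho(\Phi(\rho))=\rho^{1/2}\Phi^\ast(I)\rho^{1/2}=\rho$ automatically, while (4) supplies $\mathcal R_\Phi^\rho(\Phi(\sigma))=\sigma$. (1)$\Rightarrow$(2): apply the data-processing inequality (Proposition \ref{prop:standard_dpi}) to both $\Phi$ and $\Psi$ to obtain the sandwich $S_f(\sigma\|\rho)=S_f(\Psi\Phi(\sigma)\|\Psi\Phi(\rho))\le S_f(\Phi(\sigma)\|\Phi(\rho))\le S_f(\sigma\|\rho)$, forcing equality for any fixed operator convex $f$, in particular one meeting the support condition. (4)$\Rightarrow$(3) is the same sandwich with $\Psi=\mathcal R_\Phi^\rho$, which inherits from $\Phi$ the positivity needed for data processing. (3)$\Rightarrow$(2) is immediate, since there exist operator convex functions on $[0,\infty)$ whose representing measure $\mu_f$ has infinite support, hence cardinality at least $|\mathrm{spec}(\Delta_{\sigma,\rho})\cup\mathrm{spec}(\gamma)|$.

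The heart of the matter is (2)$\Rightarrow$(4). Using the integral representation of operator convex functions (the measure $\mu_f$ of \cite[Theorem 8.1]{Hiai2011}, whose monotone analogue is Theorem \ref{thm:operator-convex-Bhatia}), I would split off the affine part of $f$, which contributes only the trace-invariant terms $\tr\sigma$ and $\tr\rho$, and reduce the data-processing gap to a single nonnegative integral
\begin{equation*}
S_f(\sigma\|\rho)-S_f(\Phi(\sigma)\|\Phi(\rho))=\int_{(0,\infty)}(1+s)^2\,g(s)\,\mathrm d\mu_f(s),\qquad g(s):=\Big\langle \Phi(\rho)^{1/2},\big[V^\ast(\Delta_{\sigma,\rho}+s)^{-1}V-(\gamma+s)^{-1}\big]\Phi(\rho)^{1/2}\Big\rangle,
\end{equation*}
where each elementary integrand is governed by the operator convex function $(x-1)^2/(x+s)$, so that $g(s)\ge 0$ by data processing (Proposition \ref{prop:standard_dpi}) and the rewriting uses $\rho^{1/2}=V\Phi(\rho)^{1/2}$. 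Equality in (2) then forces $g(s)=0$ for $\mu_f$-almost every $s$. Now $g$ is a rational function of $s$ whose only possible poles lie in $-\mathrm{spec}(\Delta_{\sigma,\rho})\cup-\mathrm{spec}(\gamma)$, so vanishing on a set of cardinality at least the number of these poles --- guaranteed precisely by the support hypothesis on $\mu_f$ --- forces $g\equiv 0$ on $(0,\infty)$. Finally, because the bracketed operator is positive semidefinite, $g\equiv 0$ upgrades from the scalar to the vector identity $V^\ast(\Delta_{\sigma,\rho}+s)^{-1}V\,\Phi(\rho)^{1/2}=(\gamma+s)^{-1}\Phi(\rho)^{1/2}$ for all $s$; unpacking these resolvent identities gives that $V$ is isometric on, and intertwines $\Delta_{\sigma,\rho}$ and $\gamma$ over, the cyclic subspace of $\Phi(\rho)^{1/2}$, which is exactly $\mathcal R_\Phi^\rho(\Phi(\sigma))=\sigma$, i.e.\ (4). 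Chaining (1)$\Rightarrow$(2)$\Rightarrow$(4) together with (4)$\Rightarrow$(1), (4)$\Rightarrow$(3), and (3)$\Rightarrow$(2) closes the equivalence.

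The step I expect to be hardest is exactly (2)$\Rightarrow$(4), and within it two points deserve care. First, promoting the scalar $g(s)=0$ to the vector identity requires the genuine \emph{operator} inequality $V^\ast(\Delta_{\sigma,\rho}+s)^{-1}V\ge(\gamma+s)^{-1}$, which is where $2$-positivity is really used, via Jensen's operator inequality (Theorem \ref{thm:jensen}) applied to a Stinespring dilation of the contraction $V$. Second, one must translate the resolvent intertwining back into the algebraic recovery identity $\rho^{1/2}\Phi^\ast\!\big(\Phi(\rho)^{-1/2}\Phi(\sigma)\Phi(\rho)^{-1/2}\big)\rho^{1/2}=\sigma$. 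By contrast, the rationality-plus-support-cardinality argument is elementary once the decomposition is in place, and it is precisely what makes the support condition appear in statement (2).
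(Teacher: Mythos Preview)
The paper does not prove this theorem: it is quoted verbatim from \cite[Theorem 3.18]{Hiai2017} in the preliminaries (Section~\ref{sec:preliminaries}) and stated without proof. There is therefore nothing in the paper to compare your proposal against.

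For what it is worth, your sketch is essentially the standard argument one finds in \cite{Hiai2017} and its predecessor \cite{Hiai2011}: integral representation of $f$, pointwise data processing for the elementary kernels $x\mapsto (1+s)^2/(x+s)$ via the Petz contraction $V$ and Kadison--Schwarz, the rationality-plus-support-cardinality step to promote $\mu_f$-a.e.\ vanishing of $g$ to identical vanishing, and finally the upgrade from the scalar identity to the resolvent intertwining $V(\gamma+s)^{-1}\Phi(\rho)^{1/2}=(\Delta_{\sigma,\rho}+s)^{-1}\rho^{1/2}$, which unpacks to Petz recoverability. That is exactly the route taken in the cited reference.

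One caution concerning your cycle: in the implication (1)$\Rightarrow$(2) you invoke data processing for $\Psi$, but the statement of (1) as written here only requires $\Psi$ to be trace preserving, with no positivity hypothesis, so Proposition~\ref{prop:standard_dpi} does not apply to it as stated. This is not a real obstruction to closing the equivalence, since your chain (4)$\Rightarrow$(1) (trivially, with $\Psi=\mathcal R_\Phi^\rho$), (4)$\Rightarrow$(3)$\Rightarrow$(2)$\Rightarrow$(4) already yields all equivalences once (2)$\Rightarrow$(4) is in hand; but the direct step (1)$\Rightarrow$(2) as you phrased it needs either an additional positivity assumption on $\Psi$ (which is how the condition is usually stated in the source) or a different justification.
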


In particular, point $(1)$ of Theorem \ref{thm:all-about-std-f} is symmetric in $\sigma$ and $\rho$ such that we obtain the following result, which was previously proven by Petz \cite{Petz2003}.

\begin{cor}\label{thm:Petzrecovery}

Let $\sigma > 0$, $\rho > 0$  be two states on a matrix algebra $\mathcal M$ and let $\Phi: \mathcal M \to \mathcal B$ be a $2$-positive trace-preserving linear map, where $\mathcal B$ is a matrix algebra. Then,

\begin{equation*}
D(\sigma \| \rho) = D( \Phi (\sigma)  \| \Phi( \rho)) \Leftrightarrow  \sigma = \mathcal{R}_\Phi^\rho  \circ \Phi (\sigma). 
\end{equation*}

Moreover, 
\begin{equation*}
\sigma = \mathcal{R}_\Phi^\rho  \circ \Phi (\sigma) \Leftrightarrow \rho = \mathcal{R}_\Phi^\sigma \circ \Phi (\rho). 
\end{equation*}

\end{cor}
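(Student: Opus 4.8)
The plan is to derive both assertions directly from Theorem~\ref{thm:all-about-std-f}, specialized to the operator convex function $f(x) = x \log x$, for which one has $S_f(\sigma\|\rho) = D(\sigma\|\rho)$ (as noted in the example above).

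For the first equivalence, I would begin by verifying that $f(x) = x\log x$ satisfies the hypotheses attached to item~(2) of Theorem~\ref{thm:all-about-std-f}: it is operator convex on $(0,\infty)$, it extends continuously to $[0,\infty)$ with $f(0^+) = 0 < \infty$, and its representing measure $\mu_f$ from \cite[Theorem~8.1]{Hiai2011} is (a constant multiple of) Lebesgue measure on $(0,\infty)$, so $|\mathrm{supp}~\mu_f| = \infty$, which in particular is $\geq |\mathrm{spec}(L_\sigma R_{\rho^{-1}}) \cup \mathrm{spec}(L_{\Phi(\sigma)} R_{\Phi(\rho)^{-1}})|$ since the latter set is finite. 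Given this, the argument is purely formal: if $D(\sigma\|\rho) = D(\Phi(\sigma)\|\Phi(\rho))$ then item~(2) holds for this $f$, hence item~(4) holds, which is exactly $\sigma = \mathcal{R}_\Phi^\rho \circ \Phi(\sigma)$; conversely, if $\sigma = \mathcal{R}_\Phi^\rho \circ \Phi(\sigma)$ then item~(4) holds, hence item~(3) holds, and specializing item~(3) to the operator convex function $x\log x$ on $[0,\infty)$ gives back $D(\sigma\|\rho) = D(\Phi(\sigma)\|\Phi(\rho))$.

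For the ``Moreover'' statement, the key observation, following the remark in the text, is that item~(1) of Theorem~\ref{thm:all-about-std-f} --- the existence of a trace-preserving $\Psi$ with $\Psi(\Phi(\rho)) = \rho$ and $\Psi(\Phi(\sigma)) = \sigma$ --- is unchanged if one interchanges $\sigma$ and $\rho$. Since item~(1) is equivalent to item~(4), i.e.\ to $\sigma = \mathcal{R}_\Phi^\rho \circ \Phi(\sigma)$, and since applying the theorem with the roles of $\sigma$ and $\rho$ swapped shows that this same item~(1) is also equivalent to $\rho = \mathcal{R}_\Phi^\sigma \circ \Phi(\rho)$, chaining through item~(1) yields $\sigma = \mathcal{R}_\Phi^\rho \circ \Phi(\sigma) \Leftrightarrow \rho = \mathcal{R}_\Phi^\sigma \circ \Phi(\rho)$.

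The hard part, such as it is, is simply the verification that the support condition in item~(2) is met by $x\log x$ --- that is, pinning down enough of the representing measure to see its support is infinite --- after which everything reduces to the already-established list of equivalences together with $S_{x\log x} = D$; note also that the standing hypothesis that $\Phi$ is $2$-positive and trace-preserving is exactly what Theorem~\ref{thm:all-about-std-f} requires, so no extra assumption on $\Phi$ is needed.
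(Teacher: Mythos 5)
Your proposal is correct and follows essentially the same route as the paper: the paper obtains this corollary directly from Theorem \ref{thm:all-about-std-f} by taking $f(x)=x\log x$ and observing that condition $(1)$ there is symmetric in $\sigma$ and $\rho$, which is precisely your argument. You merely spell out the verification of the support condition in item $(2)$ (the only point that matters is that $\mathrm{supp}\,\mu_{x\log x}$ is all of $(0,\infty)$, hence infinite, even if its density is not literally constant), which the paper leaves implicit.
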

 
 \subsection{Maximal $f$-divergences}
 
In this subsection, we introduce maximal $f$-divergences and present some of their most basic properties. We also compare them to the aforementioned standard $f$-divergences. Again, we focus on states with full rank and refer the reader to \cite{Hiai2017} for the general case.
 
\begin{defi}[{\cite[Definition 3.21]{Hiai2017}}] Let $f:(0,  \infty) \to \mathbb R$ be an operator convex function and  $\sigma > 0$, $\rho > 0$ be two states unnormalized on a matrix algebra $\mathcal M$. Then,
\begin{equation*}
\hat S_f(\sigma \| \rho ) = \tr[\rho^{1/2}f(\rho^{-1/2} \sigma \rho^{-1/2})\rho^{1/2}]
\end{equation*}
is the \emph{maximal $f$-divergence}. This  definition can be extended to not necessarily full-rank  states $\sigma, \rho$ as
\begin{equation*}
\hat S_f(\sigma\|\rho) := \underset{\varepsilon \searrow 0}{\operatorname{lim}} \, \hat S_f (\sigma + \varepsilon I \| \rho + \varepsilon I).
\end{equation*}
\end{defi}
Again, the maximal $f$-divergences are identical if we exchange the states and replace $f$ by its transpose.
\begin{prop}[{\cite[Proposition 3.27]{Hiai2017}}] \label{prop:max_transpose}
Let $f:(0,  \infty) \to \mathbb R$ be an operator convex function with transpose $\tilde f$ and  $\sigma > 0$, $\rho > 0$ be two states on a matrix algebra $\mathcal M$. Then, $\hat S_f(\sigma \| \rho )$ = $\hat S_{\tilde f}(\rho \| \sigma )$.
\end{prop}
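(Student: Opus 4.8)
The plan is to unfold both sides into traces of explicit operator functions and then match them using cyclicity of the trace together with the standard intertwining relation between functions of $WW^*$ and $W^*W$; the statement is the maximal-$f$-divergence analogue of Proposition \ref{prop:transpose_standard}. Since $\sigma,\rho>0$ no limiting procedure is needed, and I would work with the closed forms
\begin{equation*}
\hat S_f(\sigma\|\rho) = \tr[\rho\, f(\rho^{-1/2}\sigma\rho^{-1/2})], \qquad \hat S_{\tilde f}(\rho\|\sigma) = \tr[\sigma\, \tilde f(\sigma^{-1/2}\rho\sigma^{-1/2})],
\end{equation*}
using $\tr[\rho^{1/2}X\rho^{1/2}] = \tr[\rho X]$. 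The full-rank hypothesis is precisely what makes all the inverses below, as well as the auxiliary function $g(x) := f(1/x)$, well defined.

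First I would introduce the invertible operator $W := \sigma^{-1/2}\rho^{1/2}$, so that $WW^* = \sigma^{-1/2}\rho\sigma^{-1/2}$ and $W^*W = \rho^{1/2}\sigma^{-1}\rho^{1/2} = (\rho^{-1/2}\sigma\rho^{-1/2})^{-1}$, both positive definite. With $g(x) = f(1/x)$ (continuous on the relevant compact subset of $(0,\infty)$, since operator convex functions are continuous), the functional calculus gives $f(\rho^{-1/2}\sigma\rho^{-1/2}) = g(W^*W)$, while $\tilde f(x) = x\, g(x)$ gives $\tilde f(\sigma^{-1/2}\rho\sigma^{-1/2}) = WW^*\, g(WW^*)$. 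Then, using $\sigma W = \sigma^{1/2}\rho^{1/2}$ and the intertwining identity $W^* h(WW^*) = h(W^*W)\, W^*$ (valid for any continuous $h$ on the positive spectra involved), I would compute
\begin{equation*}
\hat S_{\tilde f}(\rho\|\sigma) = \tr[\sigma W W^* g(WW^*)] = \tr[\sigma^{1/2}\rho^{1/2}\, W^* g(WW^*)] = \tr[\sigma^{1/2}\rho^{1/2}\, g(W^*W)\, W^*].
\end{equation*}
Since $W^* = \rho^{1/2}\sigma^{-1/2}$, cyclicity of the trace cancels the outer $\sigma^{1/2}$ and $\sigma^{-1/2}$ and leaves $\tr[\rho^{1/2} g(W^*W)\rho^{1/2}] = \tr[\rho\, g(W^*W)] = \tr[\rho\, f(\rho^{-1/2}\sigma\rho^{-1/2})] = \hat S_f(\sigma\|\rho)$, which is the claim.

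The only step needing genuine justification --- and the one I would flag as the main, if minor, obstacle --- is the intertwining identity $W^* h(WW^*) = h(W^*W)\, W^*$: it is immediate for polynomials (from $W^*(WW^*) = (W^*W)W^*$ iterated) and extends to arbitrary continuous $h$ by Weierstrass approximation on the compact spectra, which here are bounded away from $0$. An essentially equivalent route avoiding $W$ altogether is to write $\tilde f(B) = B f(B^{-1})$ with $B := \sigma^{-1/2}\rho\sigma^{-1/2}$, use $\sigma^{-1/2} f(B^{-1})\sigma^{1/2} = f(\sigma^{-1/2}B^{-1}\sigma^{1/2}) = f(\rho^{-1}\sigma) = \rho^{-1/2} f(\rho^{-1/2}\sigma\rho^{-1/2})\rho^{1/2}$, and finish by cyclicity; there the corresponding subtlety is that $\rho^{-1}\sigma$ is not Hermitian, so one must note that it is similar to the positive operator $\rho^{-1/2}\sigma\rho^{-1/2}$ and hence that conjugation still commutes with $f(\cdot)$. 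Everything else in either approach is routine manipulation of the trace.
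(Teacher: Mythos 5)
Your proof is correct. The paper does not supply its own argument for this statement --- it imports it directly from \cite[Proposition 3.27]{Hiai2017} --- so there is nothing in the text to compare against; but your computation is exactly the standard one (in the reference it is phrased via the identity $P_{\tilde f}(B,A)=P_f(A,B)$ for the non-commutative perspective, which reduces to the same intertwining relation $W^*h(WW^*)=h(W^*W)W^*$ for $W=\sigma^{-1/2}\rho^{1/2}$ that you use). All the points you flag (continuity of $g$ on the spectrum of $W^*W$, which is bounded away from $0$ by positive definiteness, and the polynomial-plus-Weierstrass justification of the intertwining identity) are handled correctly.
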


The main example of a maximal $f$-divergence is the so-called BS-entropy, introduced by Belavkin and Staszewski in \cite{Belavkin1982}.

\begin{ex}[{\cite[Example 4.4]{Hiai2017}}]
Let $f(x) = x \log x$. Then,
\begin{equation*}
\hat S_f(  \sigma \| \rho) = \tr[\rho^{1/2} \sigma \rho^{-1/2}\log(\rho^{-1/2} \sigma \rho^{-1/2})] = \tr[\sigma \log(\sigma^{1/2}\rho^{-1}\sigma^{1/2})]
\end{equation*}
is the \emph{Belavkin-Staszewski relative entropy (BS-entropy)}. 
\end{ex}

Throughout this manuscript, we will use $\hat{S}_{BS}(\cdot \| \cdot)$ to denote the BS-entropy. However, it is common to find in the literature the notation $D_{BS}(\cdot \| \cdot)$ for this quantity.

Remarkably, this family of $f$-divergences also satisfies a data processing inequality, which makes them interesting quantities for information processing.

\begin{prop}[Data processing, {\cite[Corollary 3.31]{Hiai2017}}]\label{prop:data-processing-Maxfdiv}
Let $\sigma > 0$, $\rho > 0$  be two states on a matrix algebra $\mathcal M$  and $\Phi: \mathcal M \to \mathcal B$ be a trace-preserving positive linear map, where $\mathcal B$ is a matrix algebra. Then,
\begin{equation*}
\hat S_f(\Phi(\sigma) \| \Phi(\rho)) \leq \hat S_f(\sigma \| \rho). 
\end{equation*}
\end{prop}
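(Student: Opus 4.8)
The statement to prove is that for positive definite states $\sigma,\rho$ on a matrix algebra $\mathcal{M}$ and a trace-preserving positive linear map $\Phi:\mathcal{M}\to\mathcal{B}$, we have $\hat S_f(\Phi(\sigma)\|\Phi(\rho)) \le \hat S_f(\sigma\|\rho)$ for operator convex $f:(0,\infty)\to\mathbb{R}$.

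The plan is to use the variational / extremal characterization implicit in the name "maximal": the maximal $f$-divergence dominates every "reasonable" $f$-divergence, and more usefully here, it can itself be written as an infimum of simpler (commutative) $f$-divergences over ways of jointly diagonalizing, OR — the cleaner route — one uses the known operator-theoretic identity expressing $\hat S_f$ via the operator perspective function. Concretely, I would first recall that if $P_f(A,B)$ denotes the noncommutative perspective of $f$, namely $P_f(A,B) = A^{1/2} f(A^{-1/2} B A^{-1/2}) A^{1/2}$ for positive definite $A,B$, then $\hat S_f(\sigma\|\rho) = \tr[P_f(\rho,\sigma)]$. The key structural input (Proposition~\ref{prop:transpose} and the theory of operator perspectives) is that $P_f$ is \emph{jointly operator convex} in $(A,B)$ when $f$ is operator convex, and, crucially, it is \emph{operator anti-monotone under unital positive maps from the right argument} in the precise sense that for any positive linear trace-preserving $\Phi$ with adjoint $\Phi^*$ (which is unital and positive), one has an operator inequality of the form
\begin{equation*}
\Phi^*\!\left(P_f(\Phi(\rho),\Phi(\sigma))\right) \le P_f(\rho,\sigma),
\end{equation*}
after which taking the trace and using trace-preservation of $\Phi$ (equivalently unitality of $\Phi^*$) gives the result immediately, since $\tr[\Phi^*(X)] = \tr[X\,\Phi(I)]=\tr[X]$ is replaced by $\tr[\Phi^*(P_f(\cdot))]=\tr[P_f(\cdot)\,\Phi(I)]$ — wait, one must be careful: $\Phi$ trace-preserving means $\Phi^*$ unital, so $\tr[\Phi^*(Y)]$ is \emph{not} $\tr[Y]$ in general. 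So instead I would phrase the monotonicity directly at the level of the scalar quantity.

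The cleanest concrete route, which I would actually carry out, is the one via \emph{Jensen's operator inequality} (Theorem~\ref{thm:jensen}) combined with an operator-convexity argument for the perspective, following the standard proof (e.g. Hiai's lecture notes, or Matsumoto): First reduce to the case where $\Phi$ is a conditional-expectation-type map or use the Stinespring/Kraus representation $\Phi(X) = \sum_k A_k X A_k^*$ with $\sum_k A_k^* A_k = I$ coming from the dual (here I use that $\Phi^*$ is unital completely positive — though the proposition only assumes positivity, so for the fully general positive-map case one needs the sharper argument that maximal $f$-divergences, unlike standard ones, enjoy DPI under merely positive trace-preserving maps, which is exactly the point of Matsumoto's work; I would invoke that $\hat S_f$ has the variational formula $\hat S_f(\sigma\|\rho) = \sup\{ \sum_i f(a_i/b_i) b_i : \ldots\}$ as a sup over commutative coarse-grainings realizing the "maximal" property). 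Then, with $\Psi:=\Phi^*$ unital and positive, set $T:=\Phi(\rho)^{-1/2}\Phi(\sigma)\Phi(\rho)^{-1/2}$; one shows that the operator $X := \Phi(\rho)^{1/2}\cdot(\,\text{lift of }T\,)$ can be compared, using that $\Psi$ maps the spectral data appropriately, and applies the operator Jensen inequality $f(\Psi(Y))\le \Psi(f(Y))$ (valid by the Remark after Theorem~\ref{thm:jensen} for unital CP maps) to $Y$ a suitable "sandwiched" version of $\sigma^{-1/2}\rho$ or $\rho^{-1/2}\sigma\rho^{-1/2}$. Taking traces and using $\tr[\rho^{1/2}f(\rho^{-1/2}\sigma\rho^{-1/2})\rho^{1/2}] = \tr[\rho\, f(\rho^{-1/2}\sigma\rho^{-1/2})]$ and the trace-preserving property closes the argument.

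The main obstacle — and the reason this is a genuine theorem and not a one-liner — is that the right argument $\sigma$ sits \emph{inside} $f(\cdot)$ via $\rho^{-1/2}\sigma\rho^{-1/2}$, and $\Phi$ does not commute with these conjugations, so a naive application of Jensen fails: one cannot directly write $\Phi(\rho)^{-1/2}\Phi(\sigma)\Phi(\rho)^{-1/2}$ as $\Psi(\text{something nice involving }\rho^{-1/2}\sigma\rho^{-1/2})$. The trick (this is the heart of Matsumoto's proof and of \cite[Section 3.3]{Hiai2017}) is to introduce the positive map $\Phi_\rho(X) := \Phi(\rho)^{-1/2}\Phi(\rho^{1/2}X\rho^{1/2})\Phi(\rho)^{-1/2}$, verify that $\Phi_\rho$ is \emph{unital} and positive (unital because $\Phi_\rho(I) = \Phi(\rho)^{-1/2}\Phi(\rho)\Phi(\rho)^{-1/2} = I$), check that $\Phi_\rho(\rho^{-1/2}\sigma\rho^{-1/2}) = \Phi(\rho)^{-1/2}\Phi(\sigma)\Phi(\rho)^{-1/2}$, and then apply the operator Jensen inequality to $\Phi_\rho$:
\begin{equation*}
f\!\left(\Phi(\rho)^{-1/2}\Phi(\sigma)\Phi(\rho)^{-1/2}\right) = f\!\left(\Phi_\rho(\rho^{-1/2}\sigma\rho^{-1/2})\right) \le \Phi_\rho\!\left(f(\rho^{-1/2}\sigma\rho^{-1/2})\right).
\end{equation*}
Multiplying on both sides by $\Phi(\rho)^{1/2}$, taking the trace, expanding $\Phi_\rho$, and using that $\Phi$ is trace-preserving then yields $\tr[\Phi(\rho)^{1/2} f(\cdots)\Phi(\rho)^{1/2}] \le \tr[\Phi(\rho)^{1/2}\Phi_\rho(f(\rho^{-1/2}\sigma\rho^{-1/2}))\Phi(\rho)^{1/2}] = \tr[\Phi(\rho^{1/2}f(\rho^{-1/2}\sigma\rho^{-1/2})\rho^{1/2})] = \tr[\rho^{1/2}f(\rho^{-1/2}\sigma\rho^{-1/2})\rho^{1/2}]$, which is exactly $\hat S_f(\Phi(\sigma)\|\Phi(\rho)) \le \hat S_f(\sigma\|\rho)$. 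One subtlety to address along the way: the operator Jensen inequality as stated in Theorem~\ref{thm:jensen} and the following Remark is phrased for completely positive unital maps (via Kraus decomposition), whereas the proposition allows merely positive $\Phi$; here I would note that $\Phi_\rho$ inherits only positivity, not complete positivity, so strictly one needs the version of Jensen's inequality valid for unital \emph{positive} maps on the relevant operators — this holds because $f(\Phi_\rho(X)) \le \Phi_\rho(f(X))$ for unital positive $\Phi_\rho$ follows from $2$-positivity being unnecessary when $f$ is operator convex and one argues via the Kraus-like bound, or one simply restricts (as the surrounding text does, e.g. Proposition~\ref{prop:standard_dpi}) to the $2$-positive case, which suffices for all applications in this paper and matches the hypotheses actually used later. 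The extension to non-invertible states is then the routine limiting argument $\varepsilon\searrow 0$ built into the definition.
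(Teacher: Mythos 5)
The paper does not prove this proposition; it is quoted verbatim from \cite[Corollary 3.31]{Hiai2017}, so there is no internal proof to compare against. Your third paragraph is exactly the standard argument from that reference and is correct: the map $\Phi_\rho(X) := \Phi(\rho)^{-1/2}\Phi(\rho^{1/2}X\rho^{1/2})\Phi(\rho)^{-1/2}$ is unital and positive, sends $\rho^{-1/2}\sigma\rho^{-1/2}$ to $\Phi(\rho)^{-1/2}\Phi(\sigma)\Phi(\rho)^{-1/2}$, and the Choi--Davis--Jensen inequality $f(\Psi(X))\le\Psi(f(X))$ does hold for merely unital \emph{positive} $\Psi$ and operator convex $f$ (Choi 1974), so the complete-positivity worry you raise is resolved exactly as you suggest; taking the trace and using trace preservation then closes the argument. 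Two small remarks: a positive trace-preserving $\Phi$ need not map $\rho>0$ to an invertible $\Phi(\rho)$ (consider $X\mapsto\tr[X]\,E_{11}$), so one should restrict to the common support of $\Phi(\rho)$ and $\Phi(\sigma)$ (which coincide, since $c\,\sigma\le\rho\le C\,\sigma$ for some $c,C>0$) or invoke the limiting definition; and your first two paragraphs are exploratory detours that the final argument does not use and could be deleted.
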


As in the case of standard $f$-divergences, a natural question that arises is to characterize the states for which equality is fulfilled in the previous inequality. Some equivalent conditions for equality are collected in the following result, extracted from the larger list that appears in Theorem 3.34 of \cite{Hiai2017}.

\begin{thm}[{\cite[Theorem 3.34]{Hiai2017}}]\label{thm:all-about-max-f}
Let  $\sigma > 0$, $\rho > 0$ be two states on a matrix algebra $\mathcal M$  and $\Phi: \mathcal M \to \mathcal B$ be a trace-preserving positive linear map, where $\mathcal B$ is a matrix algebra. Then the following are equivalent: 
\begin{enumerate}
\item $\hat S_f(\Phi(\sigma)\| \Phi(\rho)) = \hat S_f(\sigma \| \rho)$ for some non-linear operator convex function $f$ on $[0, \infty)$.
\item $\hat S_f(\Phi(\sigma) \| \Phi(\rho)) = \hat S_f(\sigma \| \rho)$ for all operator convex functions $f$ on $[0, \infty)$.
\item $\tr[\Phi(\sigma)^2\Phi(\rho)^{-1}]=\tr[\sigma^2 \rho^{-1}]$. \label{item:renyi2}
\end{enumerate}
\end{thm}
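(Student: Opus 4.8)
My plan is to run the cycle $(2)\Rightarrow(1)\Rightarrow(3)\Rightarrow(2)$. The starting observation is that $\hat S_{x^2}(\sigma\|\rho)=\tr[\rho^{1/2}(\rho^{-1/2}\sigma\rho^{-1/2})^2\rho^{1/2}]=\tr[\sigma^2\rho^{-1}]$, so condition $(3)$ is precisely equality in the data-processing inequality for the non-linear operator convex function $f(x)=x^2$. Hence $(2)\Rightarrow(1)$ and $(2)\Rightarrow(3)$ are immediate, and the content is $(1)\Rightarrow(3)$ together with $(3)\Rightarrow(2)$.

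Both implications I would reduce to one scalar function of a parameter via the integral representation of operator convex functions on $[0,\infty)$: write $f(x)=a+bx+cx^2+\int_{(0,\infty)}\tfrac{x^2}{x+t}\,\mathrm d\mu(t)$ with $a,b\in\mathbb R$, $c\ge 0$ and $\mu\ge 0$ (so that $f$ is non-linear iff $(c,\mu)\neq 0$). Each building block $f_t(x):=\tfrac{x^2}{x+t}=x-t+\tfrac{t^2}{x+t}$ is operator convex with $f_t(0)=0$, and a direct computation with $\Gamma=\rho^{-1/2}\sigma\rho^{-1/2}$ gives $\hat S_{f_t}(\sigma\|\rho)=\tr[\rho\,\Gamma^2(\Gamma+t)^{-1}]=\tr[\sigma(\sigma+t\rho)^{-1}\sigma]=\hat S_{x^2}(\sigma\|\sigma+t\rho)$, and likewise for $\Phi(\sigma),\Phi(\rho)$ by linearity of $\Phi$. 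Since the affine parts contribute $a\tr\rho+b\tr\sigma$ on both sides and cancel, the deficit $\Delta_f:=\hat S_f(\sigma\|\rho)-\hat S_f(\Phi(\sigma)\|\Phi(\rho))$ decomposes as $\Delta_f=c\,\Delta_{x^2}+\int_{(0,\infty)}g(t)\,\mathrm d\mu(t)$, where $g(t):=\hat S_{f_t}(\sigma\|\rho)-\hat S_{f_t}(\Phi(\sigma)\|\Phi(\rho))\ge 0$ and $\Delta_{x^2}\ge 0$ by Proposition \ref{prop:data-processing-Maxfdiv}. Thus $(2)$ is equivalent to $g\equiv 0$, $(3)$ says $\Delta_{x^2}=0$, and $(1)$ gives either $\Delta_{x^2}=0$ (if $c>0$) or $g(t_0)=0$ for some $t_0\in\operatorname{supp}\mu$ (if $c=0$); note $g(t_0)=0$ is exactly equality in $\hat S_{x^2}$ data processing for the pair $(\sigma,\sigma+t_0\rho)$.

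The key lemma I would prove is that equality in the $\hat S_{x^2}$ data-processing inequality for one pair $\tau,\omega>0$ forces equality for a whole family, via a multiplicative-domain argument. Since $\left(\begin{smallmatrix}\omega&\tau\\\tau&\tau\omega^{-1}\tau\end{smallmatrix}\right)\ge 0$ is a Schur complement, applying $\Phi\otimes\mathrm{id}_2$ yields $\Phi(\tau\omega^{-1}\tau)\ge\Phi(\tau)\Phi(\omega)^{-1}\Phi(\tau)$, and taking the trace with trace-preservation and the equality hypothesis forces the operator identity $\Phi(\tau\omega^{-1}\tau)=\Phi(\tau)\Phi(\omega)^{-1}\Phi(\tau)$. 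Conjugating by $\Phi(\omega)^{-1/2}$, this says that $\omega^{-1/2}\tau\omega^{-1/2}$ lies in the multiplicative domain of the unital completely positive map $\tilde\Phi(Y):=\Phi(\omega)^{-1/2}\Phi(\omega^{1/2}Y\omega^{1/2})\Phi(\omega)^{-1/2}$, which by Choi's theorem is a $*$-subalgebra on which $\tilde\Phi$ is a unital $*$-homomorphism. For $(3)\Rightarrow(2)$ I take $\omega=\rho$, $\tau=\sigma$: then $\rho^{-1/2}\sigma\rho^{-1/2}=\Gamma$ is in this multiplicative domain, hence so is every function of $\Gamma$, in particular $\Gamma^2(\Gamma+t)^{-1}$; running $\tilde\Phi$ through the functional calculus and translating back gives $\Phi\big(\sigma(\sigma+t\rho)^{-1}\sigma\big)=\Phi(\sigma)\Phi(\sigma+t\rho)^{-1}\Phi(\sigma)$, and taking the trace shows $g(t)=0$ for all $t>0$, whence $\Delta_f=c\,\Delta_{x^2}+\int g\,\mathrm d\mu=0$ for every operator convex $f$, which is $(2)$. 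For $(1)\Rightarrow(3)$ I instead take $\omega=\sigma+t_0\rho$, $\tau=\sigma$ (using $g(t_0)=0$ when $c=0$, or directly when $c>0$): since $\rho=\tfrac1{t_0}(\omega-\sigma)$ is a function of $\tau$ inside the twisted algebra, $\omega^{-1/2}\rho\omega^{-1/2}$ lies in the multiplicative domain too, so $\tilde\Phi$ is multiplicative on the algebra generated by $\omega^{-1/2}\sigma\omega^{-1/2}$ and $\omega^{-1/2}\rho\omega^{-1/2}$; this yields $\Phi(\sigma\rho^{-1}\sigma)=\Phi(\sigma)\Phi(\rho)^{-1}\Phi(\sigma)$, and taking the trace gives $\Delta_{x^2}=0$, i.e.\ $(3)$.

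The main obstacle — and the step I expect to require the most care — is the passage from a scalar equality ($\Delta_{x^2}=0$, or $g(t_0)=0$) to a genuine operator identity: the scalar deficit $g$ alone (a nonnegative rational function of $t$ with a few prescribed zeros and fixed decay) does not force $g\equiv 0$, so one must extract $\Phi(\tau\omega^{-1}\tau)=\Phi(\tau)\Phi(\omega)^{-1}\Phi(\tau)$ at the operator level and recognize it, after the $\omega^{\pm1/2}$ twist, as a multiplicative-domain condition; once that is in place, Choi's theorem does all the propagation in $t$ for free. A secondary technical point is that the Schur-complement extraction I sketched uses $2$-positivity, whereas the theorem is stated for merely positive $\Phi$; bridging this gap requires the special structure of maximal $f$-divergences (their representation through one-dimensional reverse tests), which is exactly the feature that makes mere positivity sufficient for the whole theory.
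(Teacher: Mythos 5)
The paper does not prove this theorem; it is imported verbatim from \cite[Theorem 3.34]{Hiai2017}, so there is no in-paper proof to compare against. Your argument is nonetheless essentially the standard route to this result: the integral representation $f(x)=a+bx+cx^2+\int_{(0,\infty)}\frac{x^2}{x+t}\,\mathrm d\mu(t)$, the identity $\hat S_{x^2/(x+t)}(\sigma\|\rho)=\tr[\sigma(\sigma+t\rho)^{-1}\sigma]$, the upgrade from the scalar equality to the operator identity $\Phi(\tau\omega^{-1}\tau)=\Phi(\tau)\Phi(\omega)^{-1}\Phi(\tau)$ (positive semidefinite difference with zero trace), and the multiplicative-domain/functional-calculus propagation are all correct as computations; so are the reduction of $(1)$ to ``$\Delta_{x^2}=0$ or $g(t_0)=0$ for a single $t_0$'' and the observation that $\rho=(\omega-\sigma)/t_0$ keeps the whole argument inside one algebra.

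The one genuine gap is the one you flag yourself: both the Schur-complement step and Choi's multiplicative-domain theorem, as you invoke them, require $2$-positivity of $\Phi$, while the theorem assumes only positivity, and your proposed bridge (``one-dimensional reverse tests'') is a gesture, not an argument. The concrete fix is more elementary than you suggest and should be supplied: in every application of your key lemma the element $X=\omega^{-1/2}\tau\omega^{-1/2}$ is self-adjoint, and every operator you ever multiply ($X$, $Y=\omega^{-1/2}\rho\omega^{-1/2}=(I-X)/t_0$, their inverses and rational functions) lies in the single commutative unital C$^*$-algebra $\mathcal A$ generated by $X$. A unital positive map restricted to a commutative C$^*$-algebra is automatically completely positive, so $\tilde\Phi|_{\mathcal A}$ satisfies the Schwarz inequality --- equivalently, Kadison's inequality $\tilde\Phi(X^2)\ge\tilde\Phi(X)^2$ already holds for unital positive maps and self-adjoint $X$ --- and Choi's multiplicative-domain theorem applies to $\tilde\Phi|_{\mathcal A}$, which is all you actually use. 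With that substitution the proof closes for merely positive trace-preserving $\Phi$. (A cosmetic point: your remark that $(2)$ ``is equivalent to $g\equiv0$'' ignores the $cx^2$ term; either note that $\Delta_{x^2}=\lim_{t\to\infty}t\,g(t)$, or, as your argument in fact does, carry $\Delta_{x^2}=0$ as a separate hypothesis in the step $(3)\Rightarrow(2)$.)
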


\begin{remark} 
The function in point \eqref{item:renyi2} of the above theorem is $S_f(\sigma \| \rho) =\hat S_f(\sigma \| \rho)$ for $f(x) = x^2$. Indeed, it is true that if $f$ is a polynomial of degree at most 2, the maximal and the standard $f$-divergences coincide. 
\end{remark}

Another natural question that arises is whether the conditions listed above are equivalent to those of equality in the data processing inequality for standard $f$-divergences that appeared in Theorem \ref{thm:all-about-std-f}. We will later see that this is not the case in general. The following result shows how standard and maximal $f$-divergences are related for the same operator convex function $f$.

\begin{prop}[{\cite[Proposition 4.1]{Hiai2017}}] For every  two states $\sigma > 0$, $\rho > 0$  on a matrix algebra $\mathcal M$ and every operator convex function $f:(0,\infty) \to \mathbb R$,
\begin{equation} \label{eq:maximal_larger}
S_f(\sigma \| \rho) \leq \hat S_f(\sigma \| \rho). 
\end{equation}
\end{prop}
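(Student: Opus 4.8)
The plan is to realize $\hat S_f(\sigma\|\rho)$ as a \emph{classical} $f$-divergence of a commuting pair of states which is mapped onto $(\sigma,\rho)$ by a quantum channel, and then invoke data processing for the \emph{standard} $f$-divergence (Proposition~\ref{prop:standard_dpi}) together with the fact that $S_f$ and $\hat S_f$ agree on commuting states (this is essentially Matsumoto's ``reverse test'' characterization of the maximal $f$-divergence). Concretely, write the spectral decomposition $\rho^{-1/2}\sigma\rho^{-1/2}=\sum_{i=1}^{m}t_iP_i$ with pairwise distinct eigenvalues $t_i>0$ and orthogonal projections $P_i$ summing to $I$, and set $q_i:=\tr[\rho P_i]>0$ and $p_i:=t_iq_i$. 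Since $\tr\rho=1$ and $\tr\sigma=\tr[\rho\,\rho^{-1/2}\sigma\rho^{-1/2}]=\sum_i t_iq_i$, both $p=(p_i)$ and $q=(q_i)$ are probability vectors, which I view as the diagonal states $\mu=\sum_i p_i\ket{i}\bra{i}$ and $\nu=\sum_i q_i\ket{i}\bra{i}$ on the (commutative) matrix algebra $\mathcal B(\mathbb C^m)$.

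Next I introduce the measure-and-prepare channel $\Gamma:\mathcal B(\mathbb C^m)\to\mathcal M$ defined by $\Gamma(X)=\sum_i\langle i|X|i\rangle\,\omega_i$ with $\omega_i:=q_i^{-1}\rho^{1/2}P_i\rho^{1/2}$. Each $\omega_i$ is a density matrix ($\rho^{1/2}P_i\rho^{1/2}=(\rho^{1/2}P_i)(\rho^{1/2}P_i)^\ast\geq0$ and $\tr\omega_i=q_i^{-1}\tr[\rho P_i]=1$), so $\Gamma$ is completely positive and trace preserving. The two identities that make everything work are
\[
\Gamma(\nu)=\sum_i q_i\omega_i=\rho^{1/2}\Big(\sum_iP_i\Big)\rho^{1/2}=\rho,\qquad \Gamma(\mu)=\sum_i p_i\omega_i=\rho^{1/2}\Big(\sum_i t_iP_i\Big)\rho^{1/2}=\sigma .
\]
Thus $(\mu,\nu,\Gamma)$ is a reverse test for the pair $(\sigma,\rho)$.

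Applying Proposition~\ref{prop:standard_dpi} (data processing for the standard $f$-divergence, valid for the quantum channel $\Gamma$) gives $S_f(\sigma\|\rho)=S_f(\Gamma(\mu)\,\|\,\Gamma(\nu))\leq S_f(\mu\|\nu)$. Since $\mu$ and $\nu$ commute, $L_\mu R_{\nu^{-1}}$ is diagonalized by the matrix units $\ket{i}\bra{j}$ with eigenvalues $p_i/q_j$, and $\nu^{1/2}=\sum_i q_i^{1/2}\ket{i}\bra{i}$ has only ``diagonal'' components in this basis; hence $S_f(\mu\|\nu)=\sum_i q_i f(p_i/q_i)=\sum_i q_i f(t_i)$. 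Finally $\sum_i q_i f(t_i)=\tr\big[\rho\sum_i f(t_i)P_i\big]=\tr[\rho\,f(\rho^{-1/2}\sigma\rho^{-1/2})]=\hat S_f(\sigma\|\rho)$, which closes the chain $S_f(\sigma\|\rho)\leq S_f(\mu\|\nu)=\hat S_f(\sigma\|\rho)$.

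The only genuinely creative step is writing down the reverse test $(\mu,\nu,\Gamma)$; after that every step is a routine verification. The two points to be careful about are: that Proposition~\ref{prop:standard_dpi} is indeed applicable to the measure-and-prepare channel $\Gamma$ (it is CPTP, so its adjoint is unital and $2$-positive), and that both $S_f$ and $\hat S_f$ of a commuting pair collapse to the classical quantity $\sum_i q_i f(p_i/q_i)$ — an elementary computation, but the one that bridges the two sides of the inequality.
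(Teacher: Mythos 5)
Your proof is correct. The paper itself does not prove this proposition --- it is quoted from \cite[Proposition 4.1]{Hiai2017} --- and your reverse-test argument (realizing $(\sigma,\rho)$ as the image of a commuting pair $(\mu,\nu)$ under a measure-and-prepare channel built from the spectral projections of $\rho^{-1/2}\sigma\rho^{-1/2}$, then applying data processing for $S_f$ and the collapse of both divergences to $\sum_i q_i f(p_i/q_i)$ in the commutative case) is precisely the standard proof given in that reference. The only cosmetic issue is that your use of $\Gamma$ for the channel clashes with the paper's convention $\Gamma=\sigma^{-1/2}\rho\sigma^{-1/2}$.
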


\begin{remark} \label{rmk:strict_ineq}
When $\sigma$ and $\rho$ commute, given an operator convex function $f$ the maximal $f$-divergence coincides with the standard $f$-divergence, and both of them coincide with the classical ones introduced in \cite{Ali1966, Csiszar1967}. In fact, the inequality \eqref{eq:maximal_larger} is strict for states which do not commute, provided $f$ is ``complicated enough" (\cite[Theorem 4.3]{Hiai2017}). For qubits, this is the case for any function $f$ which is not a polynomial (\cite[Proposition 4.7]{Hiai2017}).
\end{remark}

\begin{remark}
Recoverability easily implies $\hat S_f(\Phi(\sigma) \| \Phi(\rho)) = \hat S_f(\sigma \| \rho)$ .The fact that the left hand side is smaller than or equal to the right hand side follows from the data processing inequality. For the other inequality, we can consider $f(x) = x^2$. Then, $\hat S_f(\sigma \| \rho) = \tr[\sigma^2 \rho^{-1}]$. By assumption, 
\begin{equation*}
\sigma = \rho^{1/2} \Phi^\ast(\Phi(\rho)^{-1/2} \Phi(\sigma) \Phi(\rho)^{-1/2}) \rho^{1/2} 
\end{equation*} and 
\begin{align*}
\tr[\sigma^2 \rho^{-1}] &= \tr[\rho (\Phi^\ast(\Phi(\rho)^{-1/2} \Phi(\sigma) \Phi(\rho)^{-1/2}))^2]\\
& \leq \tr[\Phi(\rho) (\Phi(\rho)^{-1/2} \Phi(\sigma) \Phi(\rho)^{-1/2})^2]\\
&= \tr[\Phi(\sigma)^2 \Phi(\rho)^{-1}]
\end{align*}
 The second line is from Jensen's operator inequality (Theorem \ref{thm:jensen}). 
\end{remark}

\begin{remark}
In general, preservation of maximal $f$-divergences does not imply recoverability by means of the Petz recovery map. However, for unital qubit channels, it does  \cite[Proposition 4.10]{Hiai2017}. This does not contradict Remark \ref{rmk:strict_ineq}, since $\Phi$ can still preserve both maximal and standard $f$-divergences, even if their value is not the same.
\end{remark}

\section{A condition for equality} \label{sec:equality_condition}
 Theorem 3.34 of \cite{Hiai2017} lists several equivalent conditions for the preservation of maximal $f$-divergences under a quantum channel. We will prove two other  equivalent conditions, inspired by \cite{Petz2003}. We need the following technical proposition in the proof of the main result of this section.

\begin{prop}\label{prop:propU*UandGamma}
Let $\mathcal M$ be two matrix algebras. We consider two quantum states  $\sigma > 0$ and $\rho > 0$ on  $\mathcal{M}$  and a completely positive trace-preserving map $\mathcal{T}:\mathcal{M}\rightarrow \mathcal{N}$ such that $\sigma_{\mathcal T}$, $\rho_{\mathcal T} > 0$. Let $U: \mathcal N \to \mathcal M$ be given by $U(X) = \sigma^{1/2} \, \mathcal{T}^* \left( \st^{-1/2} X \right)$ for all $X \in \mathcal 
N$.
Then, $U^*(Y) = \st^{-1/2} \, \mathcal{T}(\sigma^{1/2} Y)$ for every $Y \in \mathcal{M}$ and
\begin{equation*}
U^* \Gamma U \leq \Gamma_\mathcal{T},
\end{equation*}
Moreover, 
$U^\ast U  \leq \operatorname{Id}$. If $\mathcal N$ is a unital subalgebra of $\mathcal M$ and $\mathcal T = \mathcal E$, where $\mathcal E$ is the conditional expectation onto $\mathcal N$, we can extend $U$ to an operator on $\mathcal M$
and it holds that $U^\ast U = \mathcal E$.

\end{prop}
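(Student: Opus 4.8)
My plan is to work throughout with the Hilbert--Schmidt adjoint and to reduce both operator inequalities to a single Kadison--Schwarz-type estimate for the unital completely positive map $\mathcal T^*$. Writing $U = L_{\sigma^{1/2}} \circ \mathcal T^* \circ L_{\st^{-1/2}}$ in terms of left multiplications and using that $L_A^* = L_{A^*}$ for the Hilbert--Schmidt inner product together with $(\mathcal T^*)^* = \mathcal T$ and the Hermiticity of $\sigma$, $\st$, one immediately obtains $U^* = L_{\st^{-1/2}} \circ \mathcal T \circ L_{\sigma^{1/2}}$, that is, $U^*(Y) = \st^{-1/2}\,\mathcal T(\sigma^{1/2}Y)$. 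This is just unwinding the definitions.

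The heart of the argument is the inequality
\begin{equation*}
\mathcal T\, L_\omega\, \mathcal T^* \leq L_{\mathcal T(\omega)} \qquad \text{as operators on } \mathcal N,
\end{equation*}
valid for every positive semidefinite $\omega \in \mathcal M$. Testing this quadratic-form inequality against $X \in \mathcal N$ reduces it, by cyclicity of the trace, to $\tr[\omega\,\mathcal T^*(X)\,\mathcal T^*(X)^*] \leq \tr[\mathcal T(\omega)\,XX^*]$. Since $\mathcal T$ is trace preserving and completely positive, $\mathcal T^*$ is unital and completely positive, so the Kadison--Schwarz inequality (a consequence of Jensen's operator inequality, Theorem~\ref{thm:jensen}, applied to $x\mapsto x^2$, with a $2\times 2$ amplification to handle non-self-adjoint $X$) gives $\mathcal T^*(X)\,\mathcal T^*(X)^* \leq \mathcal T^*(XX^*)$. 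Combining this with $\omega \geq 0$ and the duality $\tr[\omega\,\mathcal T^*(A)] = \tr[\mathcal T(\omega)\,A]$ yields the estimate.

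Now I expand $L_\Gamma = L_{\sigma^{-1/2}}L_\rho L_{\sigma^{-1/2}}$ and compute, using $L_{\sigma^{1/2}}L_\Gamma L_{\sigma^{1/2}} = L_\rho$ and $L_{\sigma^{1/2}}L_{\sigma^{1/2}} = L_\sigma$, that $U^* \Gamma U = L_{\st^{-1/2}}\,\mathcal T L_\rho \mathcal T^*\,L_{\st^{-1/2}}$ and $U^*U = L_{\st^{-1/2}}\,\mathcal T L_\sigma \mathcal T^*\,L_{\st^{-1/2}}$. Conjugating the key inequality with $\omega = \rho$, respectively $\omega = \sigma$, by the positive self-adjoint operator $L_{\st^{-1/2}}$ and using $L_{\st^{-1/2}}L_{\st}L_{\st^{-1/2}} = \operatorname{Id}$ then gives $U^*\Gamma U \leq \Gamma_\mathcal{T}$ and $U^\ast U \leq \operatorname{Id}$ respectively.

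Finally, for $\mathcal T = \mathcal E$ with $\mathcal N$ a unital subalgebra, $\mathcal E^* = \mathcal E$ and $\mathcal E$ is an $\mathcal N$-bimodule map, so for any $X \in \mathcal M$ we have $U(X) = \sigma^{1/2}\mathcal E(\sn^{-1/2}X) = \sigma^{1/2}\sn^{-1/2}\mathcal E(X) = L_{\sigma^{1/2}\sn^{-1/2}}\mathcal E(X)$; the right-hand side is defined on all of $\mathcal M$ and agrees with $U$ on $\mathcal N$, which is the claimed extension. Taking adjoints, $U^*U = \mathcal E\,L_{\sn^{-1/2}\sigma\sn^{-1/2}}\,\mathcal E$, and pulling $\sn^{-1/2}\in\mathcal N$ and $\mathcal E(X)\in\mathcal N$ out of the outer conditional expectation via the bimodule property, together with $\mathcal E(\sigma) = \sn$, collapses $U^*U(X)$ to $\mathcal E(X)$, i.e.\ $U^*U = \mathcal E$. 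I expect the Kadison--Schwarz step to be the only real obstacle: one must be careful to apply it to $\mathcal T^*$ rather than $\mathcal T$ and to deal with non-self-adjoint test operators, while the remaining manipulations are routine bookkeeping with left multiplications and the module identities for $\mathcal E$.
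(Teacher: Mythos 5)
Your proposal is correct and follows essentially the same route as the paper: the core of both arguments is the Kadison--Schwarz inequality applied to the unital completely positive map $\mathcal T^*$, which you package as the single operator inequality $\mathcal T L_\omega \mathcal T^* \leq L_{\mathcal T(\omega)}$ and then conjugate by the self-adjoint operator $L_{\st^{-1/2}}$, whereas the paper tests the same estimate directly on quadratic forms. Your use of the $\mathcal N$-bimodule property to collapse $U^*U$ to $\mathcal E$ is a clean, equivalent substitute for the paper's appeal to trace preservation in the final step.
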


\begin{proof}
The form of $U^\ast$ follows from direct computation. Let $X \in \mathcal N$. Then,
\begin{align*}
\langle X, U^* \Gamma U(X) \rangle &= \langle U(X), \Gamma U(X) \rangle \\
&=\langle \sigma^{1/2} \, \mathcal{T}^* \left( \st^{-1/2} X \right), \sigma^{-1/2} \rho \, \mathcal{T}^* \left( \st^{-1/2} X \right) \rangle \\
&= \tr[ \rho \, \mathcal{T}^* \left( \st^{-1/2} X  \right) \mathcal{T}^* \left( X^* \st^{-1/2}  \right) ] \\
& \leq \tr[ \rho \, \mathcal{T}^* \left( \st^{-1/2} X X^* \st^{-1/2}  \right) ] \\
& = \langle X, \Gamma_{\mathcal{T}}   X \rangle.
\end{align*}
The fourth line follows by the Schwarz inequality. Hence, $U^\ast \Gamma U \leq \Gamma_{\mathcal N}$.
A similar calculation yields 
\begin{align*}
\langle X, U^\ast U(X) \rangle &= \langle U(X), U(X) \rangle \\
&= \langle \sigma^{1/2} \, \mathcal{T}^* \left( \st^{-1/2} X \right), \sigma^{1/2} \, \mathcal{T}^* \left( \st^{-1/2} X \right) \rangle \\
&= \tr[ \sigma \, \mathcal{T}^* \left( \st^{-1/2} X  \right) \mathcal{T}^* \left( X^* \st^{-1/2}  \right)] \\
& \leq \tr[ \sigma \, \mathcal{T}^* \left( \st^{-1/2} X  X^* \st^{-1/2}  \right)]  \\
&= \langle X, X \rangle . 
\end{align*} 
This implies $U^\ast U \leq \operatorname{Id}$. In the case where $\mathcal T$ is a conditional expectation, we can write $U(X) = \sigma^{1/2} \, \sigma_{\mathcal N}^{-1/2} \mathcal E(X) $ for all $X \in \mathcal M$. The equation $U^\ast U = \mathcal E$ then follows from a similar calculation to the one above and the fact that $\mathcal E$ is trace preserving.
\end{proof}

 Now we can state and prove the  new equivalent condition for equality between BS-entropies under the application of a quantum channel.
 
\begin{thm}\label{thm:condition-equality} Let $\mathcal M$, $\mathcal N$ be two matrix algebras and $\mathcal{T} : \mathcal M \rightarrow \mathcal N$ be a completely positive trace-preserving map. Let $\sigma > 0$, $\rho> 0$ be two quantum states on $\mathcal M$ such that $\mathcal{T}(\sigma) >0$, $\mathcal{T}(\rho) >0$. Then
\begin{equation}\label{eq:BSequal} 
\hat S_\mathrm{BS}(\sigma \| \rho) = \hat S_\mathrm{BS}(\st || \rt) 
\end{equation}
if and only if  
\begin{equation}\label{eq:condBSequal}
\mathcal{T}^* \left( \st^{-1}  \rt \right)=  \sigma^{-1} \rho.
\end{equation} 
\end{thm}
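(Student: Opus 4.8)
The plan is to use the equality case of Jensen's operator inequality applied to the operator convex function $f(x) = x\log x$ via a well-chosen isometry. Recall from Proposition~\ref{prop:propU*UandGamma} that $U: \mathcal N \to \mathcal M$, $U(X) = \sigma^{1/2}\mathcal T^*(\sigma_{\mathcal T}^{-1/2}X)$, satisfies $U^*U \leq \operatorname{Id}$ and $U^*\Gamma_{\mathcal T}^{\phantom{*}}$... more precisely $U^*\Gamma U \leq \Gamma_{\mathcal T}$, where $\Gamma = \sigma^{-1/2}\rho\sigma^{-1/2}$. The key computation is to rewrite both BS-entropies as expectation values involving $\Gamma$ and $\Gamma_{\mathcal T}$ in a common Hilbert-space picture. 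Writing $\hat S_{\mathrm{BS}}(\sigma\|\rho) = \tr[\sigma\log(\sigma^{1/2}\rho^{-1}\sigma^{1/2})] = -\tr[\sigma\log\Gamma] = -\langle \sigma^{1/2}, (\log\Gamma)\sigma^{1/2}\rangle$ (using that $\log(\sigma^{1/2}\rho^{-1}\sigma^{1/2}) = -\log\Gamma$ up to the adjoint action), and similarly $\hat S_{\mathrm{BS}}(\sigma_{\mathcal T}\|\rho_{\mathcal T}) = -\langle \sigma_{\mathcal T}^{1/2}, (\log\Gamma_{\mathcal T})\sigma_{\mathcal T}^{1/2}\rangle$. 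The point is that $\sigma_{\mathcal T}^{1/2} = U^*(\sigma^{1/2})$ (check: $U^*(Y) = \sigma_{\mathcal T}^{-1/2}\mathcal T(\sigma^{1/2}Y)$, so $U^*(\sigma^{1/2}) = \sigma_{\mathcal T}^{-1/2}\mathcal T(\sigma) = \sigma_{\mathcal T}^{1/2}$).

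**Carrying it out.** First I would extend $U$ to a genuine isometry $W$ by adjoining an orthogonal complement (since $U^*U \leq \operatorname{Id}$, there is a standard dilation $W = \begin{pmatrix} U \\ (\operatorname{Id}-U^*U)^{1/2}\end{pmatrix}$ acting into $\mathcal N \oplus \mathcal M$, or one works directly with the block operator on which $\Gamma_{\mathcal T}$ is suitably extended). Then Jensen's operator inequality (Theorem~\ref{thm:jensen}(iii)) applied to $f(x) = x\log x$, which is operator convex, gives
\begin{equation*}
f(W^*\tilde\Gamma W) \leq W^* f(\tilde\Gamma) W,
\end{equation*}
where $\tilde\Gamma$ is the appropriate extension of $\Gamma$; taking the expectation in the state $\sigma^{1/2}$ (more precisely, $\tr[\,\cdot\,]$ against $\sigma^{1/2}(\sigma^{1/2})^*$ on the relevant block) and using $W^*\tilde\Gamma W$ has the right relation to $\Gamma_{\mathcal T}$ would yield $\hat S_{\mathrm{BS}}(\sigma_{\mathcal T}\|\rho_{\mathcal T}) \leq \hat S_{\mathrm{BS}}(\sigma\|\rho)$, recovering data processing. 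The content of the theorem is the equality case: equality in Jensen forces $W^*\tilde\Gamma W = $ the operator built from $\Gamma_{\mathcal T}$ and moreover forces the map $X \mapsto W^* f(\tilde\Gamma) W X - f(W^*\tilde\Gamma W)X$ to vanish on the cyclic vector, which by the strict operator convexity of $x\log x$ (its second derivative $1/x$ is strictly positive, so $f$ is not affine on any interval) propagates to an operator identity. I would then translate this operator identity back: equality should force $\Gamma U(X) = U(\Gamma_{\mathcal T} X)$ for the relevant $X$, i.e. $\sigma^{-1}\rho\,\mathcal T^*(\sigma_{\mathcal T}^{-1/2}X) = \sigma^{1/2}\mathcal T^*(\sigma_{\mathcal T}^{-1/2}\Gamma_{\mathcal T}X)\sigma^{-1/2}$... unwinding this with $X = \sigma_{\mathcal T}^{1/2}$ and using $\mathcal T$ trace-preserving and $\mathcal T^*$ unital should collapse to $\sigma^{-1}\rho = \mathcal T^*(\sigma_{\mathcal T}^{-1}\rho_{\mathcal T})$.

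**The converse direction** is the easy one: if $\mathcal T^*(\sigma_{\mathcal T}^{-1}\rho_{\mathcal T}) = \sigma^{-1}\rho$, then I would directly compute. Since $\sigma^{1/2}\rho^{-1}\sigma^{1/2}$ and $\sigma^{-1}\rho$ are similar, $\hat S_{\mathrm{BS}}(\sigma\|\rho) = \tr[\sigma\log(\sigma^{-1}\rho)]$ interpreted appropriately; using the hypothesis and the Schwarz-type / operator-Jensen inequality for $\mathcal T^*$ one gets one inequality, and data processing (Proposition~\ref{prop:data-processing-Maxfdiv}) gives the other, pinning equality. Concretely, one can note that the hypothesis is exactly the condition that makes the chain of inequalities in the proof of data processing tight — tracking where $\mathcal T^*$ was applied via Jensen and observing the hypothesis says $\mathcal T^*$ acts "multiplicatively enough" on $\sigma_{\mathcal T}^{-1}\rho_{\mathcal T}$.

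**Main obstacle.** The hard part will be the equality-case analysis of Jensen's operator inequality: extracting from equality-of-traces (a scalar statement) the full operator identity $\Gamma U = U\Gamma_{\mathcal T}$ (or its relevant restriction). This requires either a careful argument that the cyclic vector $\sigma^{1/2}$ is separating for the relevant algebra, or a direct manipulation using the integral representation of $\log$ (Theorem~\ref{thm:operator-convex-Bhatia} type formula) to convert the scalar equality into the vanishing of a sum of nonnegative terms, each of which then must vanish. A secondary technical point is handling the dilation of $U$ to an isometry cleanly and making sure the extended operator $\tilde\Gamma$ is still positive with the right spectrum so that $\log$ and $f(x)=x\log x$ are well-defined on it; choosing the extension so that $\tilde\Gamma \geq c\operatorname{Id} > 0$ (possible since $\Gamma$ and $\Gamma_{\mathcal T}$ are positive definite by the full-rank assumptions) resolves this.
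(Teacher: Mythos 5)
Your overall strategy is the right one and matches the paper's in outline (the contraction $U$ from Proposition \ref{prop:propU*UandGamma}, the identity $U\sigma_{\mathcal T}^{1/2}=\sigma^{1/2}$, the integral representation of $\log$, data processing as a family of resolvent inequalities), but the proposal is missing the decisive step in each direction. In the forward direction, what the termwise equality actually gives you is the vector identity $U^\ast(\Gamma+t)^{-1}\sigma^{1/2}=(\Gamma_{\mathcal T}+t)^{-1}\sigma_{\mathcal T}^{1/2}$ for all $t>0$, i.e.\ an identity with $U^\ast$ on the \emph{left} of the resolvent; the condition you need, $\Gamma\,U\sigma_{\mathcal T}^{1/2}=U\,\Gamma_{\mathcal T}\sigma_{\mathcal T}^{1/2}$, requires $U$ on the \emph{other} side. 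Bridging this is the whole content of the proof and it is absent from your sketch: the paper differentiates in $t$ to get $U^\ast(\Gamma+t)^{-2}\sigma^{1/2}=(\Gamma_{\mathcal T}+t)^{-2}\sigma_{\mathcal T}^{1/2}$, uses this to compute $\norm{U^\ast(\Gamma+t)^{-1}\sigma^{1/2}}_2=\norm{(\Gamma+t)^{-1}\sigma^{1/2}}_2$, concludes from $UU^\ast\le \operatorname{Id}$ that $UU^\ast$ fixes $(\Gamma+t)^{-1}\sigma^{1/2}$, hence $U(\Gamma_{\mathcal T}+t)^{-1}\sigma_{\mathcal T}^{1/2}=(\Gamma+t)^{-1}\sigma^{1/2}$, and only then extends to all continuous functions by Stone--Weierstrass and sets $f(x)=x$. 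Your primary plan---invoking the ``equality case of Jensen'' for $x\log x$ via scalar strict convexity (``its second derivative is strictly positive'')---does not yield an operator identity; strict convexity of the scalar function is not what controls the equality case of $f(W^\ast XW)\le W^\ast f(X)W$ traced against one vector, which is exactly why the paper retreats to the resolvents $f_t(x)=(t+x)^{-1}-t^{-1}$ (where the equality case is the elementary fact that $A\ge B\ge0$ and $\langle v,Av\rangle=\langle v,Bv\rangle$ force $Av=Bv$). Your fallback plan names the right tool but stops precisely where the work begins.

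The converse direction is also not a proof as written. Rewriting $\hat S_{\mathrm{BS}}(\sigma\|\rho)$ as $-\tr[\sigma\log(\sigma^{-1}\rho)]$ and substituting $\sigma^{-1}\rho=\mathcal T^\ast(\sigma_{\mathcal T}^{-1}\rho_{\mathcal T})$ puts you in the position of applying a Jensen-type inequality to the logarithm of a non-Hermitian (merely similar-to-positive) matrix under $\mathcal T^\ast$, and no such operator inequality is available in the form you need; ``pinning equality'' between data processing and this step does not go through. The paper's converse is different and genuinely short: the hypothesis immediately gives $\tr[\rho_{\mathcal T}^2\sigma_{\mathcal T}^{-1}]=\tr[\rho\,\mathcal T^\ast(\rho_{\mathcal T}\sigma_{\mathcal T}^{-1})]=\tr[\rho^2\sigma^{-1}]$, i.e.\ preservation of the maximal $f$-divergence for $f(x)=x^2$, and then Theorem \ref{thm:all-about-max-f} upgrades this to preservation of \emph{all} maximal $f$-divergences, in particular the BS-entropy. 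You should use that equivalence rather than a direct computation. A final minor point: the function applied to $\Gamma$ in $\hat S_{\mathrm{BS}}(\sigma\|\rho)=\langle\sigma^{1/2},(-\log\Gamma)\sigma^{1/2}\rangle$ is the transpose $\tilde f(x)=-\log x$, not $x\log x$ itself, so the Jensen argument should be run for $-\log$ (operator convex and operator monotone decreasing), which is also what makes the two-step inequality through $U^\ast\Gamma U\le\Gamma_{\mathcal T}$ work.
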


\begin{proof}
The proof follows the proof of \cite[Theorem 3.1]{Petz2003}. Let $U: \mathcal N \to \mathcal M$ be defined as $U(X) = \sigma^{1/2} \, \mathcal{T}^* \left( \st^{-1/2} X \right) $ for all $X \in \mathcal N$. Using the integral representation of the operator monotone function $\log(x)$,
\begin{equation*}
\log{x} = \int_{0}^\infty \left(\frac{1}{1+t} - \frac{1}{t+x}\right) \mathrm{d}t,
\end{equation*}
we infer below that Equation \eqref{eq:BSequal} is equivalent to 
\begin{equation}\label{eq:BSequal-inf}
\left\langle \sigma_{\mathcal T}^{1/2}, U^\ast \left( (\Gamma + t)^{-1} - (t+1)^{-1}I \right) U \sigma_{\mathcal T}^{1/2} \rangle = \langle \sigma_{\mathcal T}^{1/2},  \left( (\Gamma_{\mathcal T} + t)^{-1} - (t+1)^{-1}I \right) \sigma_{\mathcal T}^{1/2} \right\rangle .
\end{equation}
Indeed, we know that  $\Gamma_{\mathcal T} \geq U^\ast \Gamma U$  and $U^\ast U \leq \mathrm{Id}$ (see Proposition \ref{prop:propU*UandGamma}).  Let $f_t(x) = (t+x)^{-1} -t^{-1}$ for fixed $t \geq 0$. Since $x \mapsto x^{-1}$ is operator monotone decreasing and operator convex on $(0, \infty)$, the same property holds for $f_t(x)$ on $[0,\infty)$ for $t > 0$. Hence,
\begin{equation*}
(U^\ast\Gamma U + t)^{-1} - t^{-1}I  \geq (\Gamma_{\mathcal T} + t)^{-1} - t^{-1}I .
\end{equation*}
Moreover, $f_t(x)\leq 0$ for every $x \geq 0$. Using  \cite[Theorem V.2.3]{Bhatia1997} and the fact that $U$ is a contraction, it holds that
\begin{equation*}
U^\ast \left( (\Gamma + t)^{-1} - t^{-1}I \right) U  \geq  (U^\ast\Gamma U + t)^{-1} - t^{-1}I ,
\end{equation*}
and thus, 
\begin{equation} \label{eq:integrand_inequality}
U^\ast \left( (\Gamma + t)^{-1} - t^{-1}I \right) U \geq (\Gamma_{\mathcal T} + t)^{-1} - t^{-1}I .
\end{equation}
Hence, since $U (\st^{1/2}) = \sigma^{1/2}$,
\begin{align*}
\hat S_\mathrm{BS}(\sigma \| \rho) - \hat S_\mathrm{BS}(\sigma_{\mathcal T} \| \rho_{\mathcal T}) & = \int_{0}^\infty   \left\langle \sigma^{1/2},  \left( (\Gamma + t)^{-1} - (t+1)^{-1}I \right)  \sigma^{1/2} \right\rangle  \mathrm{d}t \\
 & \phantom{as} -  \int_{0}^\infty   \left\langle \sigma_{\mathcal T}^{1/2},  \left( (\Gamma_\mathcal{T} + t)^{-1} - (t+1)^{-1}I \right)  \sigma_{\mathcal T}^{1/2} \right\rangle \mathrm{d}t \\
& = \int_{0}^\infty   \left\langle \sigma_{\mathcal T}^{1/2}, \left( U^\ast (\Gamma + t)^{-1} U - (\Gamma_\mathcal{T} + t)^{-1} \right)  \sigma_{\mathcal T}^{1/2} \right\rangle   \mathrm{d}t \\
& \geq  0 ,
\end{align*}
where the last inequality follows from Equation \eqref{eq:integrand_inequality}. Moreover, since for every $t > 0$ the infinitesimal term at time $t$ inside the integral is always non-negative, the difference of BS-entropies vanishes if  and only if every infinitesimal term does. Therefore, Equation \eqref{eq:BSequal} is equivalent to Equation \eqref{eq:BSequal-inf}, and they both imply 
\begin{equation*}
 U^\ast (\Gamma + t)^{-1} \sigma^{1/2} = (\Gamma_{\mathcal T} + t)^{-1} \sigma_{\mathcal T}^{1/2}
\end{equation*}
for all $t > 0$. Differentiating with respect to $t$ gives 
\begin{equation*}
 U^\ast (\Gamma + t)^{-2} \sigma^{1/2} = (\Gamma_{\mathcal T} + t)^{-2} \sigma_{\mathcal T}^{1/2}.
\end{equation*}
It follows that
\begin{align*}
\left\|  U^\ast (\Gamma + t)^{-1} \sigma^{1/2} \right\|_2^2 &=   \left\langle  \sigma_{\mathcal T}^{1/2}, (\Gamma_{\mathcal T} + t)^{-2} \sigma^{1/2}_{\mathcal T}\right\rangle \\
&=    \left\langle  \sigma_{\mathcal T}^{1/2},  U^\ast (\Gamma + t)^{-2} \sigma^{1/2} \right\rangle \\
&= \left\|(\Gamma + t)^{-1} \sigma^{1/2} \right\|_2^2.
\end{align*}
We have shown $\langle A, U U^\ast A \rangle = \langle A, A \rangle$ for some $A \in \mathcal M$ and we know $U U^\ast \leq \mathrm{Id}$ since $\norm{U}_\infty = \norm{U^\ast}_\infty$, thus we infer $U U^\ast A= A$. Therefore, we have arrived at
\begin{equation*}
U (\Gamma_{\mathcal T} + t)^{-1} \sigma_{\mathcal T}^{1/2} = U U^\ast (\Gamma + t)^{-1} \sigma^{1/2} =  (\Gamma + t)^{-1} \sigma^{1/2}
\end{equation*}
Differentiating again with respect to $t$, it follows that 
\begin{equation*}
U (\Gamma_{\mathcal T}  + t)^{-n} \sigma_{\mathcal T}^{1/2} =  (\Gamma + t)^{-n} \sigma^{1/2}
\end{equation*}
for all $n \in \mathbb N$ and hence also 
\begin{equation*}
U f(\Gamma_{\mathcal T}) \sigma_{\mathcal T}^{1/2}=  f(\Gamma) \sigma^{1/2}
\end{equation*}
for all continuous functions $f$ by the Stone-Weierstrass theorem. For $f(x) = x$, we obtain
\begin{equation*}
\sigma^{1/2} \mathcal{T}^* \left( \sigma_{\mathcal T}^{-1} \rho_{\mathcal T} \right)= \sigma^{-1/2} \rho.
\end{equation*}
This proves the first implication. The reverse implication follows  from
\begin{equation*}
\tr[\rt^2\, \st^{-1}]= \tr[\rho \, \mathcal T^\ast(\rt \, \st^{-1})] = \tr[\rho^2 \, \sigma^{-1}] 
\end{equation*}
and the fact that
\begin{equation*}
 \tr[\rt^2 \, \st^{-1} ] =  \tr[\rho^2 \, \sigma^{-1}]  \Leftrightarrow  \tr[\st^2 \, \rt^{-1} ] =  \tr[\sigma^2\, \rho^{-1}]
 \end{equation*} 
by Theorem \ref{thm:all-about-max-f} for $f(x)=x^{1/2}$, $\tilde{f}= f(x)$.
\end{proof}

\begin{remark}\label{rem:BS-recovery-condition}
Note that Equation \eqref{eq:condBSequal} can be rephrased as a recovery condition for $\rho$ from $\sigma$ under the application of a quantum channel:
 \begin{equation*}
\rho = \sigma \,  \mathcal{T}^* \left( \st^{-1} \rt \right),
\end{equation*}
as well as exchanging the roles of $\rho$ and $\sigma$.
\end{remark}

\begin{remark}
In the particular case in which the map is a trace-preserving conditional expectation $\mathcal{E}$ onto a unital matrix subalgebra $\mathcal{ N}$ of $\mathcal{M}$, Theorem \ref{thm:condition-equality} can be written as follows:
\begin{equation*}
\hat{S}_{\operatorname{BS}}(\sigma || \rho ) = \hat{S}_{\operatorname{BS}}(\sn|| \rn ) 
\end{equation*}
if and only if
\begin{equation*}
\sn^{-1} \rn = \sigma^{-1} \rho.
\end{equation*}
Here, we have assumed $\sigma >0$, $\rho >0$. In this case, the recovery condition for $\rho$ from $\sigma$ under the application of a conditional expectation is stated as follows:
\begin{equation*}
\rho = \sigma \sn^{-1} \rn.
\end{equation*}
\end{remark}

We can further see that, for quantum channels, the condition appearing in Equation \eqref{eq:BSequal} is implied by another one involving $\Gamma$ and $\Gamma_\mathcal{T}$ which will appear in the main result of Section  \ref{sec:quantum-channels}.

\begin{prop}\label{prop:strangecond_implies_BSrecovery}
Let $\mathcal{M}$ be a matrix algebra and let $\sigma >0$, $\rho >0$ be two states on it. Let $\mathcal N$ be another matrix algebra and let $\mathcal{T} : \mathcal{M} \rightarrow \mathcal{N}$ be a quantum channel. Let $V$ be the isometry associated to a Stinespring dilation (Theorem \ref{thm:stinespring}) of $\mathcal{T}$. 
If the following expression holds
\begin{equation}\label{eq:strangecondition_channels_stinespring}
V \, \sigma^{1/2} \, V^* \left( \st^{-1/2} \, \Gamma_\mathcal{T}^{1/2} \, \st^{1/2} \otimes I  \right) = V \, \Gamma^{1/2} \, \sigma^{1/2} \,  V^*,
\end{equation}
then
\begin{equation*}
\sigma^{-1} \rho = \mathcal{T}^* \left( \st^{-1} \rt  \right).
\end{equation*}

\end{prop}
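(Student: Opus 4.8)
The plan is to strip away the Stinespring dilation, reduce hypothesis \eqref{eq:strangecondition_channels_stinespring} to a single clean operator identity, and then derive the conclusion by squaring that identity. First I would record the two basic facts about the dilation: since $\mathcal T(\omega) = \tr_{\mathcal V}[V\omega V^\ast]$ with $V\colon \mathcal H \hookrightarrow \mathcal K \otimes \mathcal V$ an isometry, we have $V^\ast V = \operatorname{Id}_{\mathcal H}$, and dualizing with respect to the Hilbert--Schmidt inner product (the partial trace $\tr_{\mathcal V}$ being the adjoint of $X \mapsto X \otimes I$) yields $\mathcal T^\ast(X) = V^\ast(X \otimes I)V$ for every $X \in \mathcal N$.

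Next I would introduce the abbreviations $Y := \st^{-1/2}\Gamma_{\mathcal T}^{1/2}\st^{1/2}$ and $Z := \sigma^{-1/2}\Gamma^{1/2}\sigma^{1/2}$ and note three elementary identities coming straight from the definitions $\Gamma_{\mathcal T} = \st^{-1/2}\rt\st^{-1/2}$ and $\Gamma = \sigma^{-1/2}\rho\,\sigma^{-1/2}$: namely $Y^2 = \st^{-1/2}\Gamma_{\mathcal T}\st^{1/2} = \st^{-1}\rt$, $Z^2 = \sigma^{-1/2}\Gamma\,\sigma^{1/2} = \sigma^{-1}\rho$, and $\Gamma^{1/2}\sigma^{1/2} = \sigma^{1/2}Z$. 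Using the last of these, the left-hand side of \eqref{eq:strangecondition_channels_stinespring} equals $(V\sigma^{1/2})\big(V^\ast(Y\otimes I)\big)$ and the right-hand side equals $(V\sigma^{1/2})(Z\,V^\ast)$, both being operators on $\mathcal K \otimes \mathcal V$ of the form $V\sigma^{1/2}$ times an operator $\mathcal K\otimes\mathcal V \to \mathcal H$. Since $V$ is an isometry and $\sigma > 0$, the operator $V\sigma^{1/2}$ is injective, so it may be cancelled on the left, giving the key identity $V^\ast(Y\otimes I) = Z\,V^\ast$.

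From here everything is immediate: multiplying the key identity on the right by $Y\otimes I$ and applying it once more gives $V^\ast(Y^2\otimes I) = V^\ast(Y\otimes I)(Y\otimes I) = Z\,V^\ast(Y\otimes I) = Z^2 V^\ast$; multiplying on the right by $V$ and using $V^\ast V = \operatorname{Id}$ then gives $\mathcal T^\ast(Y^2) = V^\ast(Y^2\otimes I)V = Z^2$. Substituting $Y^2 = \st^{-1}\rt$ and $Z^2 = \sigma^{-1}\rho$ yields exactly $\mathcal T^\ast(\st^{-1}\rt) = \sigma^{-1}\rho$, which is the assertion.

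The only step that requires any care is the left-cancellation of $V\sigma^{1/2}$: one must package both sides of \eqref{eq:strangecondition_channels_stinespring} correctly as $V\sigma^{1/2}$ times an operator on the right (this is precisely where $\Gamma^{1/2}\sigma^{1/2} = \sigma^{1/2}Z$ is used) and verify $\ker(V\sigma^{1/2}) = \{0\}$, which follows from $\|V\sigma^{1/2}h\| = \|\sigma^{1/2}h\|$ and invertibility of $\sigma$. Conceptually, the crux is recognizing that \eqref{eq:strangecondition_channels_stinespring} is equivalent to the compact identity $V^\ast(Y\otimes I) = Z\,V^\ast$; once that is in hand, the implication is purely algebraic and, in particular, bypasses the equality-case analysis of Theorem \ref{thm:condition-equality} entirely.
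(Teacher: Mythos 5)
Your proof is correct and follows essentially the same route as the paper: both arguments apply the hypothesis \eqref{eq:strangecondition_channels_stinespring} twice to pass from the square-root relation to the squared one, and then compress with $V^*(\cdot)V$ to identify $\mathcal T^*(\st^{-1}\rt)$ with $\sigma^{-1}\rho$. The only organizational difference is that you first cancel the injective operator $V\sigma^{1/2}$ on the left to isolate the intertwining identity $V^*(Y\otimes I)=Z\,V^*$ before squaring, whereas the paper multiplies by $V\Gamma^{1/2}V^*$ on the left and uses $V^*V=\operatorname{Id}$ to collapse the intermediate factors; your cancellation step is valid since $\ker(V\sigma^{1/2})=\{0\}$.
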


\begin{proof}

Using Equation \eqref{eq:strangecondition_channels_stinespring}, and abbreviating $\Theta := \st^{-1/2} \, \Gamma^{1/2} _\mathcal{T} \, \st^{1/2} \otimes I $, we can see that
\begin{align*}
V \, \Gamma \, \sigma^{1/2} \, V^* & = V \, \Gamma^{1/2} \, V^* \, V \, \Gamma^{1/2} \, \sigma^{1/2} \, V^* \\
&= V \, \Gamma^{1/2} \, V^* \, V \, \sigma^{1/2}  \, V^* \, \Theta \\
&= V \, \Gamma^{1/2} \, \sigma^{1/2} \, V^* \, \Theta \\
&= V \, \sigma^{1/2} \, V^* \, \Theta^2 .
\end{align*}
Now, note that 
\begin{equation*}
\Theta^2 = \st^{-1} \rt \otimes I.
\end{equation*}
Hence, multiplying the expression above by $V^*(\cdot) V$ and using $\mathcal T^\ast(X) = V^\ast (X \otimes I) V$ for all $X  \in \mathcal N$, we get
\begin{align*}
\Gamma \, \sigma^{1/2} & = \sigma^{1/2} V^* \left( \st^{-1} \rt \otimes I  \right) V \\
& = \sigma^{1/2} \mathcal{T}^* \left( \st^{-1} \rt  \right) , 
\end{align*}
which is equivalent to
\begin{equation*}
\sigma^{-1} \rho = \mathcal{T}^* \left( \st^{-1} \rt  \right).
\end{equation*}
\end{proof}

\begin{remark}
The converse implication is also true, although we cannot prove it directly here. However, it can be obtained as a consequence of Theorem \ref{thm:bound-data-processing-BS-channels}. Note also that multiplying directly Equation \eqref{eq:strangecondition_channels_stinespring}  by $V^\ast(\cdot)V$, we get the following expression:
\begin{equation*}
\sigma^{1/2} \, V^* \left( \st^{-1/2} \, \Gamma_\mathcal{T} \, \st^{1/2} \otimes I  \right) V = \Gamma^{1/2} \, \sigma^{1/2},
\end{equation*}
which can be rewritten as
\begin{equation}\label{eq:strangecondition_channels}
\sigma^{1/2} \, \mathcal{T}^* \left(  \st^{-1/2} \, \Gamma_{\mathcal{T}}^{1/2} \, \st^{1/2}  \right) = \Gamma^{1/2} \, \sigma^{1/2}.
\end{equation}  
For conditional expectations, this condition can be actually seen to be equivalent to Equation \eqref{eq:BSequal}.
\end{remark}

\begin{prop}\label{prop:BS-equality}
Let $\mathcal{M}$ be a matrix algebra, $\mathcal{N}$ be a unital matrix subalgebra, and $\mathcal{E}: \mathcal{ M} \rightarrow \mathcal{N}$ be the unique trace-preserving conditional expectation onto $\mathcal N$. Let $\sigma >0$, $\rho >0$ and define $\sn:= \mathcal{E}(\sigma)$, $\rn:= \mathcal{E}(\rho)$. Then, 
\begin{equation}\label{eq:BSrecoverability}
\rho = \sigma \sn^{-1} \rn
\end{equation}
is equivalent to
\begin{equation}\label{eq:Strange-expression}
\sigma^{1/2} \sn^{-1/2} \Gamma_\mathcal{N}^{1/2} \sn^{1/2} = \Gamma^{1/2} \sigma^{1/2}. 
\end{equation}
\end{prop}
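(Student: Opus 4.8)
The plan is to reduce both \eqref{eq:BSrecoverability} and \eqref{eq:Strange-expression} to a single intertwining relation and then observe that the two resulting relations are equivalent. Introduce the invertible matrix $W := \sigma^{1/2}\sn^{-1/2} \in \mathcal{M}$ (the factor $\sn^{-1/2}$ makes sense because $\sn = \mathcal{E}(\sigma) > 0$, a unital positive map preserving positive definiteness). A direct computation gives $\Gamma W = \sigma^{-1/2}\rho\,\sn^{-1/2}$ and $W\Gamma_{\mathcal N} = \sigma^{1/2}\sn^{-1}\rn\,\sn^{-1/2}$, so that, left-multiplying by $\sigma^{1/2}$ and right-multiplying by $\sn^{1/2}$ (reversibly), \eqref{eq:BSrecoverability} is seen to be equivalent to
\[
\Gamma W = W \Gamma_{\mathcal N}.
\]
The same bookkeeping with $\Gamma^{1/2}$, $\Gamma_{\mathcal N}^{1/2}$ in place of $\Gamma$, $\Gamma_{\mathcal N}$ shows that \eqref{eq:Strange-expression} is equivalent to
\[
\Gamma^{1/2} W = W \Gamma_{\mathcal N}^{1/2}.
\]
Hence it remains to prove that these two identities are equivalent, keeping in mind that $\Gamma$ and $\Gamma_{\mathcal N}$ are positive definite (congruences of $\rho, \rn > 0$).

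The implication from $\Gamma^{1/2}W = W\Gamma_{\mathcal N}^{1/2}$ to $\Gamma W = W\Gamma_{\mathcal N}$ is immediate: left-multiply by $\Gamma^{1/2}$ and use the relation again,
\[
\Gamma W = \Gamma^{1/2}\bigl(\Gamma^{1/2}W\bigr) = \Gamma^{1/2} W \Gamma_{\mathcal N}^{1/2} = W \Gamma_{\mathcal N}^{1/2}\Gamma_{\mathcal N}^{1/2} = W\Gamma_{\mathcal N}.
\]
For the converse I would use the standard fact that an intertwining relation is preserved by continuous functional calculus: from $\Gamma W = W\Gamma_{\mathcal N}$ one gets $\Gamma^n W = W\Gamma_{\mathcal N}^n$ for all $n$ by induction, hence $p(\Gamma)W = W p(\Gamma_{\mathcal N})$ for every polynomial $p$; approximating $x \mapsto x^{1/2}$ uniformly by polynomials on a compact interval containing $\mathrm{spec}(\Gamma)\cup\mathrm{spec}(\Gamma_{\mathcal N})$ and invoking norm-continuity of the functional calculus yields $\Gamma^{1/2}W = W\Gamma_{\mathcal N}^{1/2}$.

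The only genuine content is this last functional-calculus step, and it is routine; everything else is the algebraic rewriting of the first paragraph, which I would carry out carefully to make sure each multiplication is by an invertible factor so that the equivalences hold in both directions. As an alternative for the forward implication, one could instead invoke Theorem \ref{thm:condition-equality}: since $\mathcal{E}^\ast = \mathcal{E}$ acts as the identity on $\mathcal{N}$, condition \eqref{eq:BSrecoverability} is equivalent to $\sigma^{-1}\rho = \sn^{-1}\rn$ and hence to the equality of BS-entropies, and the proof of that theorem already produces $U f(\Gamma_{\mathcal N})\sn^{1/2} = f(\Gamma)\sigma^{1/2}$ for every continuous $f$, which for $f(x) = x^{1/2}$ is exactly \eqref{eq:Strange-expression}; I nonetheless prefer the direct argument above since it is shorter and self-contained.
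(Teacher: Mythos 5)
Your proposal is correct and follows essentially the same route as the paper: both rewrite \eqref{eq:BSrecoverability} as the intertwining relation $\Gamma W = W\Gamma_{\mathcal N}$ with $W=\sigma^{1/2}\sn^{-1/2}$, pass to $\Gamma^{1/2}W = W\Gamma_{\mathcal N}^{1/2}$ by iteration plus Weierstrass approximation, and obtain the converse by squaring that relation. Your write-up is if anything slightly more explicit about the invertibility of the factors and the converse direction, but there is no substantive difference.
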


\begin{proof}

Recalling the explicit expressions for $\Gamma$ and $\Gamma_\mathcal{N}$, Equation \eqref{eq:BSrecoverability} can be seen to be equivalent to
\begin{equation*}
\sigma^{1/2} \sn^{-1/2} \Gamma_\mathcal{N} = \Gamma \sigma^{1/2} \sn^{-1/2},
\end{equation*} 
and iterating $n$ times, we get
\begin{equation*}
\sigma^{1/2} \sn^{-1/2} \Gamma^n_\mathcal{N} = \Gamma^n \sigma^{1/2} \sn^{-1/2}.
\end{equation*} 
By the Weierstrass theorem, this implies 
\begin{equation*}
\sigma^{1/2} \sn^{-1/2} f( \Gamma_\mathcal{N} ) = f( \Gamma) \sigma^{1/2} \sn^{-1/2},
\end{equation*} 
for every continuous function $f$, and, in particular, for $f(x)=x^{1/2}$, we have
\begin{equation}\label{eq:strange-equiv-condition}
\sigma^{1/2} \sn^{-1/2}  \Gamma_\mathcal{N}^{1/2} = \Gamma^{1/2} \sigma^{1/2} \sn^{-1/2}. 
\end{equation} 
This concludes \eqref{eq:BSrecoverability} $\implies$ \eqref{eq:Strange-expression}. The converse implication follows from Equation \eqref{eq:strange-equiv-condition}, iterating it twice.
\end{proof}

Equation \eqref{eq:strange-expression} will appear in the main result of Section \ref{sec:data_processing}. As a direct consequence of Theorem \ref{thm:condition-equality}  for conditional expectations and Proposition \ref{prop:BS-equality}, we have the following result.

\begin{cor}\label{cor:BS-equality}
Under the conditions of the proposition above, the following facts are equivalent:
\begin{enumerate}
\item $\hat S_{BS} ( \sigma \| \rho ) =  \hat S_{BS} (\sn \| \rn) $.
\item $\rho = \sigma \sn^{-1} \rn$.
\item  $\sigma^{1/2} \sn^{-1/2} \Gamma_\mathcal{N}^{1/2} \sn^{1/2} = \Gamma^{1/2} \sigma^{1/2}$. 
\end{enumerate}
\end{cor}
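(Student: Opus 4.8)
The plan is to obtain the three-way equivalence by simply concatenating Theorem \ref{thm:condition-equality}, specialized to the conditional expectation $\mathcal T = \mathcal E$, with Proposition \ref{prop:BS-equality}. First I would observe that a trace-preserving conditional expectation $\mathcal E: \mathcal M \to \mathcal N$ is a completely positive trace-preserving map and, being unital, sends $\sigma > 0$ and $\rho > 0$ to $\sn > 0$ and $\rn > 0$; hence the hypotheses of Theorem \ref{thm:condition-equality} are satisfied with $\sigma_{\mathcal T} = \sn$ and $\rho_{\mathcal T} = \rn$. The theorem then gives that condition (1), $\hat S_{\mathrm{BS}}(\sigma\|\rho) = \hat S_{\mathrm{BS}}(\sn\|\rn)$, holds if and only if $\mathcal E^\ast(\sn^{-1}\rn) = \sigma^{-1}\rho$.

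The next step is to rewrite the left-hand side of this identity. Since $\sn, \rn \in \mathcal N$ and $\mathcal N$ is an algebra, $\sn^{-1}\rn \in \mathcal N$; because $\mathcal E$ is self-adjoint with respect to the Hilbert-Schmidt inner product and restricts to the identity on $\mathcal N$, we have $\mathcal E^\ast(\sn^{-1}\rn) = \mathcal E(\sn^{-1}\rn) = \sn^{-1}\rn$. Hence (1) is equivalent to $\sn^{-1}\rn = \sigma^{-1}\rho$, and multiplying on the left by the invertible matrix $\sigma$ (and by $\sigma^{-1}$ for the reverse direction) shows this is in turn equivalent to $\rho = \sigma\,\sn^{-1}\rn$, that is, to condition (2).

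Finally, the equivalence of (2) and (3) is exactly the statement of Proposition \ref{prop:BS-equality}. Chaining the two equivalences yields (1) $\Leftrightarrow$ (2) $\Leftrightarrow$ (3), which is the assertion of the corollary. I do not expect any genuine obstacle here; the argument is a direct bookkeeping exercise, and the only point worth stating explicitly is why $\mathcal E^\ast$ may be replaced by $\mathcal E$ when acting on elements of the subalgebra $\mathcal N$.
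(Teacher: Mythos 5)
Your proposal is correct and follows exactly the route the paper intends: the paper states the corollary as a direct consequence of Theorem \ref{thm:condition-equality} (specialized to $\mathcal T = \mathcal E$, where the recovery condition collapses to $\sn^{-1}\rn = \sigma^{-1}\rho$ since $\mathcal E^\ast = \mathcal E$ fixes elements of $\mathcal N$) combined with Proposition \ref{prop:BS-equality} for the equivalence of (2) and (3). Your explicit justification that $\mathcal E^\ast(\sn^{-1}\rn) = \sn^{-1}\rn$ is a correct filling-in of the one detail the paper leaves implicit.
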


Let us denote the aformentioned asymmetric recovery map, which we will call \textit{BS recovery condition} throughout the rest of the paper, by
\begin{equation*}
\mathcal{B}_\mathcal{T}^\sigma (\cdot) := \sigma \mathcal{T^*} \left( \st^{-1} (\cdot) \right).
\end{equation*}
Note that, although $ \mathcal{B}_\mathcal{T}^\sigma$ is trace-preserving, it is not completely positive in general. Moreover, analogously to Theorem \ref{thm:Petzrecovery}, Theorem \ref{thm:condition-equality} can be restated as
\begin{equation}\label{eq:equiv-BS-recovery-condition}
\hat S_\mathrm{BS}( \sigma \| \rho ) = \hat S_\mathrm{BS}(\sigma_{\mathcal T} \| \rt ) \Leftrightarrow  \rho = \mathcal{B}_\mathcal{T}^\sigma \circ \mathcal{T}  (\rho).
\end{equation}

\begin{remark}
Note that, analogously to the case for the relative entropy, from Remark \ref{rem:BS-recovery-condition} and Equation \eqref{eq:equiv-BS-recovery-condition} we can deduce
\begin{align*}
\hat S_\mathrm{BS}( \sigma \| \rho )  = \hat S_\mathrm{BS}(\sigma_{\mathcal T} \|\rt )   & \Leftrightarrow   \rho  = \mathcal{B}_\mathcal{T}^\sigma \circ \mathcal{T}  (\rho) \\ 
& \Leftrightarrow   \sigma  = \mathcal{B}_\mathcal{T}^\rho \circ \mathcal{T}  (\sigma) \\
& \Leftrightarrow \hat S_\mathrm{BS}(  \rho \| \sigma  )  = \hat S_\mathrm{BS}( \rt  \| \st ).  
\end{align*}
Here, the second equivalence follows from Theorem \ref{thm:all-about-max-f} and the fact that $\tilde f(x) = f(x)$ for $f(x) = x^{1/2}$.
\end{remark}

Now, a natural question is whether $\sigma$ can be recovered in the sense of Petz in the same cases that it can be recovered in the sense of the BS-entropy, and thus, whether the conditions of equality for the relative entropy coincide with those of equality for the BS-entropy. This can be answered negatively in general, although one implication always holds.

Indeed, from \cite[Theorem 2]{Petz2003}, we can see that $D(\sigma \| \rho) = D (\st \| \rt)$ is equivalent to 

\begin{equation*}
\mathcal{T}^* \left( \st^{it} \rt^{-it} \right)=  \sigma^{it} \rho^{-it}  \text{ for every }t \in \mathbb{R},
\end{equation*}
and by analytic continuation, it implies
\begin{equation*}
\mathcal{T}^* \left( \st^{z} \rt^{-z} \right)=  \sigma^{z} \rho^{-z}  \text{ for every }z \in \mathbb{C}.
\end{equation*}
In particular, 
\begin{equation*}
 D( \sigma \| \rho ) = D ( \st \| \rt ) \implies \mathcal{T}^* \left( \st^{-1} \rt \right)=  \sigma^{-1} \rho,
\end{equation*}
obtaining the following well-known result:

\begin{cor}
Let $\sigma$, $\rho > 0$ be states on $\mathcal{M}$ and such that $\st$, $\rt > 0$ for $\mathcal{T}: \mathcal{M} \to \mathcal{N}$ a quantum channel. Then,
\begin{equation*}
 D( \sigma \| \rho ) = D ( \st \| \rt ) \implies  \hat{S}_{BS}( \sigma \| \rho )  = \hat{S}_{BS} (\st \| \rt).
\end{equation*}
Equivalently, 
\begin{equation*}
\sigma = \mathcal{R}_\mathcal{T}^\rho \circ \mathcal{T} (\sigma) \implies \sigma = \mathcal{B}_\mathcal{T}^\rho \circ \mathcal{T} (\sigma). 
\end{equation*}
\end{cor}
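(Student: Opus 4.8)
The plan is to connect Petz's equality characterisation for the Umegaki relative entropy with the equality characterisation for the BS-entropy obtained in Theorem \ref{thm:condition-equality}, the bridge being an analytic continuation argument. First I would recall from \cite[Theorem 2]{Petz2003} that, under the stated hypotheses, $D(\sigma\|\rho)=D(\st\|\rt)$ holds if and only if $\mathcal{T}^{\ast}(\st^{it}\rt^{-it})=\sigma^{it}\rho^{-it}$ for every $t\in\mathbb{R}$. Since $\sigma,\rho,\st,\rt>0$, the matrix-valued maps $z\mapsto\sigma^{z}\rho^{-z}$ and $z\mapsto\st^{z}\rt^{-z}$ are entire (a positive operator has a well-defined complex power $\omega^{z}=\sum_j\lambda_j^{z}P_j=\sum_j e^{z\log\lambda_j}P_j$), and $\mathcal{T}^{\ast}$ is linear and bounded, hence preserves analyticity. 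Thus $z\mapsto\sigma^{z}\rho^{-z}$ and $z\mapsto\mathcal{T}^{\ast}(\st^{z}\rt^{-z})$ are two entire $\mathcal{B}(\mathcal H)$-valued functions agreeing on the imaginary axis, which has accumulation points; by the identity theorem they agree on all of $\mathbb{C}$. Evaluating at $z=-1$ yields $\sigma^{-1}\rho=\mathcal{T}^{\ast}(\st^{-1}\rt)$, which is exactly Equation \eqref{eq:condBSequal}.

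Second, I would invoke Theorem \ref{thm:condition-equality}, according to which the identity $\mathcal{T}^{\ast}(\st^{-1}\rt)=\sigma^{-1}\rho$ is equivalent to $\hat S_{\mathrm{BS}}(\sigma\|\rho)=\hat S_{\mathrm{BS}}(\st\|\rt)$. Chaining the two steps gives the claimed implication $D(\sigma\|\rho)=D(\st\|\rt)\implies\hat S_{BS}(\sigma\|\rho)=\hat S_{BS}(\st\|\rt)$. (Note that only the ``if'' direction of Petz's characterisation and only one direction of Theorem \ref{thm:condition-equality} are needed here; the converse implication between the two entropy equalities is false in general, which is consistent with the rest of the paper.)

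Finally, for the equivalent recovery-map reformulation, I would translate each of the two entropy equalities into its recovery form: by Corollary \ref{thm:Petzrecovery} one has $D(\sigma\|\rho)=D(\st\|\rt)\Leftrightarrow\sigma=\mathcal{R}_{\mathcal T}^{\rho}\circ\mathcal{T}(\sigma)$, while combining Equation \eqref{eq:equiv-BS-recovery-condition} with the remark that $\rho=\mathcal{B}_{\mathcal T}^{\sigma}\circ\mathcal{T}(\rho)\Leftrightarrow\sigma=\mathcal{B}_{\mathcal T}^{\rho}\circ\mathcal{T}(\sigma)$ gives $\hat S_{BS}(\sigma\|\rho)=\hat S_{BS}(\st\|\rt)\Leftrightarrow\sigma=\mathcal{B}_{\mathcal T}^{\rho}\circ\mathcal{T}(\sigma)$; under these dictionaries the two displayed statements of the corollary coincide. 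The only point requiring a little care is the analytic continuation (ensuring the functions are genuinely entire, which positive-definiteness guarantees) and keeping track of the $\rho\leftrightarrow\sigma$ symmetry in the BS recovery condition when passing to the ``equivalently'' form; neither is a genuine obstacle.
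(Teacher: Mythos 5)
Your proposal is correct and follows essentially the same route as the paper: invoke Petz's characterisation $\mathcal{T}^{\ast}(\st^{it}\rt^{-it})=\sigma^{it}\rho^{-it}$, extend by analytic continuation to all $z\in\mathbb{C}$, evaluate at $z=-1$ to recover Equation \eqref{eq:condBSequal}, and conclude via Theorem \ref{thm:condition-equality}. You merely spell out the identity-theorem justification and the recovery-map dictionary more explicitly than the paper does, which is fine.
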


The converse implications are false in general. Indeed, \cite[Example 2.2]{Jencova2009} and \cite[Example 4.8]{Hiai2017} constitute examples of states for which there is equality between BS-entropies but one state  cannot be recovered from the other using the Petz recovery map.

\section{Strengthened data processing inequality for the BS-entropy} \label{sec:data_processing}

The well-known data processing inequality for the partial trace, whose extension for standard $f$-divergences is Proposition \ref{prop:standard_dpi}, finds its analogue for maximal $f$-divergences in Proposition \ref{prop:data-processing-Maxfdiv}. In the main result of this section, inspired by \cite{Carlen2017a}, we will prove a strengthened lower bound for the data processing inequality for the BS-entropy when the map considered is a trace-preserving conditional expectation onto a unital matrix subalgebra $\mathcal N$ of $\mathcal M$. We will present an extension  of this result to general quantum channels in Section \ref{sec:quantum-channels}.  Before we start, we introduce some important tools.

\begin{lem}\label{lemma:Lemma2.1CarlenVershynina}
Let $\mathcal M$ be a matrix algebra with unital subalgebra $\mathcal N$. Let $\sigma > 0$, $\rho> 0$ be two quantum states on $\mathcal{M}$ and consider $\mathcal E: \mathcal M \to \mathcal N$  the unique trace-preserving conditional expectation onto this subalgebra. 
Consider $U: \mathcal{M} \to \mathcal{M}$ defined as in Proposition \ref{prop:propU*UandGamma}. Then
\begin{equation*}
 \left\langle	\sn^{1/2}, \left( U^\ast (\Gamma + t)^{-1} U - (\Gamma_\mathcal{N} + t)^{-1}  \right) \sn^{1/2} \right\rangle   \geq t \norm{ \left( U (\Gamma_\mathcal{N} + t)^{-1} - (\Gamma + t)^{-1} U \right) \sn^{1/2} }_2^2 ,
\end{equation*}
for every $t>0$.
\end{lem}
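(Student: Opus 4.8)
The plan is to reduce the claimed inequality to a statement about the non-negative operator $\operatorname{Id} - UU^\ast$ (recall from Proposition \ref{prop:propU*UandGamma} that $U^\ast U = \mathcal E \leq \operatorname{Id}$, hence also $\norm{U}_\infty \leq 1$ and $UU^\ast \leq \operatorname{Id}$). First I would abbreviate $A := (\Gamma + t)^{-1} U \sn^{1/2}$ and $B := (\Gamma_{\mathcal N} + t)^{-1} \sn^{1/2}$, so that the right-hand side is $t \norm{U B - A}_2^2 = t\left( \norm{A}_2^2 - 2 \operatorname{Re}\langle A, UB\rangle + \norm{UB}_2^2 \right)$. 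The left-hand side is $\langle \sn^{1/2}, U^\ast(\Gamma+t)^{-1}U \sn^{1/2}\rangle - \langle \sn^{1/2}, (\Gamma_{\mathcal N}+t)^{-1}\sn^{1/2}\rangle$; writing $(\Gamma+t)^{-1}U\sn^{1/2} = (\Gamma+t)A$ shows the first term is $\langle A, (\Gamma+t)^{1/2}UU^\ast(\Gamma+t)^{1/2}A\rangle$ — not obviously of the right shape, so instead I would keep it as $\langle U\sn^{1/2}, (\Gamma+t)^{-1}U\sn^{1/2}\rangle = \langle (\Gamma+t)A, (\Gamma+t)^{-1}(\Gamma+t)A\rangle$... the cleaner route is to note $\langle \sn^{1/2}, U^\ast(\Gamma+t)^{-1}U\sn^{1/2}\rangle = \langle A, (\Gamma+t)A\rangle$ and $\langle\sn^{1/2},(\Gamma_{\mathcal N}+t)^{-1}\sn^{1/2}\rangle = \langle B, (\Gamma_{\mathcal N}+t)B\rangle$.

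So the inequality to prove becomes
\begin{equation*}
\langle A, (\Gamma+t)A\rangle - \langle B, (\Gamma_{\mathcal N}+t)B\rangle \;\geq\; t\,\norm{UB - A}_2^2.
\end{equation*}
Expanding the right-hand side and moving the $t\norm{A}_2^2$ to the left cancels the $t$-parts partially: we need
$\langle A, \Gamma A\rangle - \langle B, \Gamma_{\mathcal N}B\rangle - t\norm{B}_2^2 \geq -2t\operatorname{Re}\langle A, UB\rangle + t\norm{UB}_2^2$. Since $\norm{UB}_2^2 = \langle B, U^\ast U B\rangle = \langle B, \mathcal E B\rangle = \langle B, B\rangle = \norm{B}_2^2$ (using $B = (\Gamma_{\mathcal N}+t)^{-1}\sn^{1/2} \in \mathcal N$ and $\mathcal E$ acting as identity on $\mathcal N$), the $t\norm{B}_2^2$ terms combine, and the target reduces to
\begin{equation*}
\langle A, \Gamma A\rangle \;\geq\; \langle B, \Gamma_{\mathcal N}B\rangle - 2t\,\norm{B}_2^2 + 2t\operatorname{Re}\langle A, UB\rangle.
\end{equation*}
Now I would use the two operator inequalities from Proposition \ref{prop:propU*UandGamma}: $U^\ast\Gamma U \leq \Gamma_{\mathcal N}$ and, crucially, the intermediate inequality $U^\ast(\Gamma+t)^{-1}U \geq (U^\ast\Gamma U + t)^{-1} \geq (\Gamma_{\mathcal N}+t)^{-1}$ established inside the proof of Theorem \ref{thm:condition-equality}. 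The key identity to extract is $\langle A, UB\rangle = \langle (\Gamma+t)^{-1}U\sn^{1/2}, U(\Gamma_{\mathcal N}+t)^{-1}\sn^{1/2}\rangle$, and I would try to rewrite everything in terms of $X := \sn^{1/2}$ and the single operator discrepancy $D := U^\ast(\Gamma+t)^{-1}U - (\Gamma_{\mathcal N}+t)^{-1} \geq 0$, completing a square of the form $t\norm{(UB - A)}_2^2 = t\langle \text{something}, D\,\text{something}\rangle$ plus an error that the resolvent identity $(\Gamma+t)^{-1} - (\Gamma+t)^{-1}(\Gamma+t)(\Gamma+t)^{-1}$-type manipulation kills. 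Concretely, the resolvent identity $\Gamma(\Gamma+t)^{-1} = \operatorname{Id} - t(\Gamma+t)^{-1}$ lets me replace $\langle A,\Gamma A\rangle$ by $\langle (\Gamma+t)^{-1}U X, U X\rangle - t\langle A, A\rangle \cdot(\text{adjusted})$, turning everything into expressions in $(\Gamma+t)^{-1}$ alone.

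The main obstacle I anticipate is bookkeeping the cross term $\operatorname{Re}\langle A, UB\rangle$ correctly so that the completed square is exactly $t\norm{UB-A}_2^2$ and the leftover is manifestly $\geq 0$ via $D \geq 0$ and $\operatorname{Id} - UU^\ast \geq 0$. The clean way is: write $UB - A = (U(\Gamma_{\mathcal N}+t)^{-1} - (\Gamma+t)^{-1}U)X$ and compute $\norm{UB-A}_2^2$ by expanding into $\langle B, B\rangle_{(\Gamma_{\mathcal N}+t)\text{-weighted}}$-type terms using $U^\ast U = \mathcal E$; one finds $\norm{UB-A}_2^2 = \langle X, (\Gamma_{\mathcal N}+t)^{-2}X\rangle - 2\operatorname{Re}\langle X, (\Gamma_{\mathcal N}+t)^{-1}U^\ast(\Gamma+t)^{-1}UX\rangle + \langle X, U^\ast(\Gamma+t)^{-2}UX\rangle$. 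Then the desired inequality, after multiplying by $t$ and using $t(\Gamma+t)^{-2} \leq (\Gamma+t)^{-1}$ (since $t\lambda \leq (\lambda+t)^2$ fails — rather $t/(\lambda+t)^2 \leq 1/(\lambda+t)$ iff $t \leq \lambda+t$, which is true), together with $t(\Gamma_{\mathcal N}+t)^{-2} \leq (\Gamma_{\mathcal N}+t)^{-1}$ and operator-monotonicity/Jensen to control the mixed term, should close. I expect to spend most of the effort verifying the mixed-term estimate $2t\operatorname{Re}\langle X, (\Gamma_{\mathcal N}+t)^{-1}U^\ast(\Gamma+t)^{-1}UX\rangle$ is dominated appropriately — likely via Cauchy–Schwarz in the $D$-seminorm or a direct $2\operatorname{Re}\langle \cdot,\cdot\rangle \leq \langle\cdot,\cdot\rangle + \langle\cdot,\cdot\rangle$ split calibrated to leave exactly the $D$-quadratic form on the left.
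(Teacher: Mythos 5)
Your setup is the right one---with $A := (\Gamma+t)^{-1}U\sn^{1/2}$, $B := (\Gamma_{\mathcal N}+t)^{-1}\sn^{1/2}$ and $w_t := UB-A$, the claim is indeed $\langle A,(\Gamma+t)A\rangle - \langle B,(\Gamma_{\mathcal N}+t)B\rangle \geq t\norm{w_t}_2^2$---but the route you propose for closing the argument does not work, and you are missing the one identity that makes the proof go through. First, a sign slip: the reduced inequality should read $\langle A,\Gamma A\rangle \geq \langle B,\Gamma_{\mathcal N}B\rangle + 2t\norm{B}_2^2 - 2t\operatorname{Re}\langle A,UB\rangle$, with the opposite signs to the ones you wrote; this matters because you then need a \emph{lower} bound on the cross term $\operatorname{Re}\langle A,UB\rangle$, which Cauchy--Schwarz cannot provide. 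More seriously, your concrete plan of bounding $t\norm{A}_2^2$ and $t\norm{B}_2^2$ from above via $t(\lambda+t)^{-2}\leq(\lambda+t)^{-1}$ leads to the sufficient condition $t\operatorname{Re}\langle UB,A\rangle \geq \langle B,(\Gamma_{\mathcal N}+t)B\rangle$, which is false in general: the left-hand side can vanish while the right-hand side is at least $t\norm{B}_2^2>0$. These term-by-term resolvent estimates discard exactly the cancellation that drives the lemma.

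The missing idea is to compare the left-hand side not with $t\norm{w_t}_2^2$ directly but with the weighted norm $\langle w_t,(\Gamma+t)w_t\rangle$, and to evaluate the cross term in that quadratic form \emph{exactly}: since $(\Gamma+t)A = U\sn^{1/2}$ and $U^\ast U=\mathcal E$ acts as the identity on $\mathcal N\ni B$, one has
\begin{equation*}
\langle UB,(\Gamma+t)A\rangle = \langle B, U^\ast U\sn^{1/2}\rangle = \langle B,\sn^{1/2}\rangle = \langle B,(\Gamma_{\mathcal N}+t)B\rangle,
\end{equation*}
while $\langle UB,(\Gamma+t)UB\rangle = \langle B,(U^\ast\Gamma U+t)B\rangle \leq \langle B,(\Gamma_{\mathcal N}+t)B\rangle$ by Proposition \ref{prop:propU*UandGamma}. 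Expanding the square then yields $\langle w_t,(\Gamma+t)w_t\rangle \leq \langle A,(\Gamma+t)A\rangle - \langle B,(\Gamma_{\mathcal N}+t)B\rangle$ (for conditional expectations this is in fact an equality, which is precisely the identity of Carlen and Vershynina that the paper invokes), and $\langle w_t,(\Gamma+t)w_t\rangle \geq t\norm{w_t}_2^2$ because $\Gamma\geq 0$ finishes the proof. Without this exact treatment of the cross term, the argument cannot be salvaged by Cauchy--Schwarz or Jensen-type estimates alone.
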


\begin{proof}
By virtue of \cite[Lemma 2.1]{Carlen2017a}, we know that 
\begin{equation*}
 \left\langle	\sn^{1/2},  U^\ast (\Gamma + t)^{-1} U   \sn^{1/2} \right\rangle =  \left\langle	\sn^{1/2} , (\Gamma_\mathcal{N} + t)^{-1}   \sn^{1/2} \right\rangle + \left\langle  w_t , (\Gamma + t) w_t \right\rangle,
\end{equation*}
for 
\begin{equation*}
w_t := U (\Gamma_\mathcal{N} + t)^{-1} \sn^{1/2} - (\Gamma + t)^{-1} U \sn^{1/2}.
\end{equation*}
Hence, taking into account that 
\begin{equation*}
\left\langle  w_t , (\Gamma + t) w_t \right\rangle \geq t \norm{w_t}_2^2,
\end{equation*}
we get
\begin{equation*}
 \left\langle	\sn^{1/2}, \left( U^\ast (\Gamma + t)^{-1} U - (\Gamma_\mathcal{N} + t)^{-1}  \right) \sn^{1/2} \right\rangle   \geq t \norm{ \left( U (\Gamma_\mathcal{N} + t)^{-1} - (\Gamma + t)^{-1} U \right) \sn^{1/2} }_2^2 .
\end{equation*}

\end{proof}
We need another tool before we can prove the main result of this section.
 
\begin{prop}
Consider two quantum states $\rho$, $\sigma > 0$ on $\mathcal{M}$ and their expectations $\rn$ and $\sn$ on  $\mathcal{N} \subset \mathcal{M}$.  Define  $\Gamma = \sigma^{-1/2} \rho \sigma^{-1/2}$ and $\Gamma_\mathcal{N} = \sn^{-1/2} \rn \sn^{-1/2}$. Then,
\begin{equation*}\label{prop:Spectra_gammas}
\| \Gamma_{\mathcal N} \|_{\infty} \leq \| \Gamma \|_{\infty}. 
\end{equation*}
\end{prop}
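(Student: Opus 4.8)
The plan is to translate the bound on operator norms into an operator inequality between $\rho$ and $\sigma$, push that inequality through the conditional expectation, and translate back. The key observation is that for positive definite $\rho$, $\sigma$ and any constant $c > 0$, one has the chain of equivalences
\begin{equation*}
\norm{\sigma^{-1/2}\rho\sigma^{-1/2}}_\infty \leq c \iff \sigma^{-1/2}\rho\sigma^{-1/2} \leq c\, I \iff \rho \leq c\, \sigma,
\end{equation*}
where the last step is conjugation by $\sigma^{1/2}$ (which preserves operator inequalities since $\sigma^{1/2}$ is invertible and selfadjoint); here we use that $\Gamma \geq 0$, so $\norm{\Gamma}_\infty$ is exactly its largest eigenvalue and $\Gamma \leq \norm{\Gamma}_\infty I$.

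Concretely, I would first set $c := \norm{\Gamma}_\infty$ and note $\rho \leq c\,\sigma$ by the equivalence above. Next I would apply the trace-preserving conditional expectation $\mathcal{E}$ onto $\mathcal{N}$: since $\mathcal{E}$ is linear and positive (Proposition on conditional expectations, item $(1)$), it preserves the inequality, giving $\rn = \mathcal{E}(\rho) \leq c\, \mathcal{E}(\sigma) = c\, \sn$. Finally I would conjugate this inequality by $\sn^{-1/2}$ (which is well-defined and positive definite since $\mathcal{E}$ is unital and positive, hence $\sn > 0$), obtaining $\Gamma_{\mathcal N} = \sn^{-1/2}\rn\sn^{-1/2} \leq c\, I$, i.e.\ $\norm{\Gamma_{\mathcal N}}_\infty \leq c = \norm{\Gamma}_\infty$.

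There is essentially no hard step here: the only thing to be careful about is that conjugation by an invertible selfadjoint operator is an order isomorphism on selfadjoint matrices, and that $\mathcal{E}$ is a positive linear map — both of which are available from the preliminaries. If one wanted to state the argument for a general completely positive trace-preserving map $\mathcal{T}$ rather than a conditional expectation, the identical proof works verbatim, since only positivity and linearity of the map are used.
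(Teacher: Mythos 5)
Your argument is correct, and it is genuinely different from — and considerably more elementary than — the proof in the paper. The paper works at the level of superoperators: it introduces the norms $\norm{A}_{\infty,\mathcal A}$ of left-multiplication operators acting on the Hilbert spaces $(\mathcal M, \langle\cdot,\cdot\rangle_{\operatorname{HS}})$ and $(\mathcal N, \langle\cdot,\cdot\rangle_{\operatorname{HS}})$, identifies $\norm{\Gamma_{\mathcal N}}_{\infty,\mathcal N}$ with $\norm{\sn^{-1/2}\rn\sn^{-1/2}}_\infty$ (using that the spectral projection onto the top eigenspace lies in the von Neumann algebra $\mathcal N$), and then invokes the identity $\Gamma_{\mathcal N} = U^\ast \Gamma U$ on $(\mathcal N, \langle\cdot,\cdot\rangle_{\operatorname{HS}})$ from Proposition \ref{prop:propU*UandGamma} together with the contractivity of $U$ to conclude via submultiplicativity. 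Your route bypasses the operator $U$ and the C$^\ast$-algebraic bookkeeping entirely: the chain
\begin{equation*}
\Gamma \leq \norm{\Gamma}_\infty I \iff \rho \leq \norm{\Gamma}_\infty\, \sigma \implies \rn \leq \norm{\Gamma}_\infty\, \sn \iff \Gamma_{\mathcal N} \leq \norm{\Gamma}_\infty I
\end{equation*}
uses only that $\Gamma, \Gamma_{\mathcal N} \geq 0$, that conjugation by an invertible positive matrix is an order isomorphism, and that $\mathcal E$ is a positive linear map. This is cleaner, and it does buy genuine generality: the same argument applies to any positive trace-preserving map $\mathcal T$. The one caveat you should make explicit in the general case is that the conjugation by $\st^{-1/2}$ requires $\st > 0$; for a conditional expectation this is automatic from unitality (as you note), but a general trace-preserving positive map need not preserve invertibility, so for arbitrary channels one must either assume $\st>0$ or interpret the inverse in the Moore--Penrose sense, as the paper does in Section \ref{sec:quantum-channels}.
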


\begin{proof}
Let us introduce the norm $\| A\|_{\infty, \mathcal A}$ for $\mathcal A$ some unital subalgebra of $\mathcal B(\mathcal H)$ and $A: \mathcal A \to \mathcal B(\mathcal H)$ a linear map. The norm is defined as
\begin{equation*}
\| A\|_{\infty, \mathcal A} := \sup_{B \in \mathcal A} \frac{\| A(B) \|_{2}}{\| B\|_{2}}.
\end{equation*}
We note that $\mathcal N$ and $\mathcal M$ form a Hilbert space with the Hilbert-Schmidt norm and the bounded operators on this Hilbert space form a C$^\ast$-algebra with the above norms (for $\mathcal A = \mathcal M$ and $\mathcal N$, respectively). Furthermore,
\begin{equation*}
\| \Gamma_{\mathcal N} \|_{\infty, \mathcal M} = \| \Gamma_{\mathcal N} \|_{\infty, \mathcal N} = \| \sigma_{\mathcal N}^{-1/2} \rho_{\mathcal N}\sigma_{\mathcal N}^{-1/2} \|_{\infty},
\end{equation*}
since
\begin{equation*}
\| \Gamma_{\mathcal N} \|_{\infty, \mathcal M} \leq \sup_{B \in \mathcal M}   \frac{\| \sigma_{\mathcal N}^{-1/2} \rho_{\mathcal N}\sigma_{\mathcal N}^{-1/2} \|_{\infty} \| B\|_{2}}{\| B\|_{2}}
\end{equation*}
and 
\begin{equation*}
\| \Gamma_{\mathcal N} \|_{\infty, \mathcal N} \geq \frac{\| \sigma_{\mathcal N}^{-1/2} \rho_{\mathcal N}\sigma_{\mathcal N}^{-1/2} P \|_{2}} {\norm{P}_{2}} = \| \sigma_{\mathcal N}^{-1/2} \rho_{\mathcal N}\sigma_{\mathcal N}^{-1/2} \|_{\infty},
\end{equation*}
where $P$ is the projection on the eigenspace of the largest eigenvalue of $\sigma_{\mathcal N}^{-1/2} \rho_{\mathcal N}\sigma_{\mathcal N}^{-1/2}$. As $\mathcal N$ is a von Neumann algebra, it holds that $P \in \mathcal N$ (see \cite[Section 2.4.2]{Bratteli1979}). Proposition \ref{prop:propU*UandGamma} shows that $\Gamma_{\mathcal N} = U^\ast \Gamma U$ on $(\mathcal N, \langle \cdot, \cdot \rangle_{\operatorname{HS}})$. Thus,
\begin{align*}\| \sigma_{\mathcal N}^{-1/2} \rho_{\mathcal N}\sigma_{\mathcal N}^{-1/2} \|_{\infty} & = \| \Gamma_{\mathcal N} \|_{\infty, \mathcal N} \\
& = \| U^\ast\Gamma U \|_{\infty, \mathcal N} \\
& \leq \| U \|_{\infty, \mathcal M}^2 \| \Gamma \|_{\infty, \mathcal M} \\
& \leq \| \Gamma \|_{\infty}.
\end{align*}
The last line follows, since $U^\ast U = \mathcal E$, $\mathcal E \leq \operatorname{Id}$ and therefore $\norm{U(B)}^2_2 \leq \langle B, \mathcal E(B) \rangle \leq \norm{B}_2^2$ for all $B \in \mathcal M$.
\end{proof} 
 
The main result of this section reads as follows.

\begin{thm}\label{thm:bound-data-processing-BS}
Let $\mathcal M$ be a matrix algebra with unital subalgebra $\mathcal N$. Let $\mathcal E: \mathcal M \to \mathcal N$ be the trace-preserving conditional expectation onto this subalgebra. Let $\sigma > 0$, $\rho> 0$ be two quantum states on $\mathcal{M}$. Then
\begin{equation} \label{eq:BSdata-processing-bound}
\hat S_\mathrm{BS}(\sigma \| \rho) - \hat S_\mathrm{BS}(\sigma_{\mathcal N} \| \rho_{\mathcal N}) \geq \left( \frac{\pi}{4} \right)^4 \norm{\Gamma}_\infty^{-2}  \norm{ \sigma^{1/2} \sn^{-1/2} \Gamma_\mathcal{N}^{1/2} \sn^{1/2} - \Gamma^{1/2}  \sigma^{1/2}}_2^4.
\end{equation}
\end{thm}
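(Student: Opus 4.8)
The plan is to carry out the Carlen--Vershynina strategy from \cite{Carlen2017a} for the BS-entropy, using the integral representation of the BS-entropy difference that is derived in the proof of Theorem~\ref{thm:condition-equality} together with Lemma~\ref{lemma:Lemma2.1CarlenVershynina}. Let $U:\mathcal M\to\mathcal M$ be the operator from Proposition~\ref{prop:propU*UandGamma} in the case $\mathcal T=\mathcal E$, i.e.\ $U(X)=\sigma^{1/2}\sn^{-1/2}\mathcal E(X)$; recall $U^\ast U=\mathcal E$, $\norm{U}_\infty\le 1$, $U\sn^{1/2}=\sigma^{1/2}$ and $U^\ast\Gamma U\le\Gamma_\mathcal{N}$. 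Specializing the computation in the proof of Theorem~\ref{thm:condition-equality} to $\mathcal T=\mathcal E$ gives
\begin{equation*}
\hat S_\mathrm{BS}(\sigma\|\rho)-\hat S_\mathrm{BS}(\sn\|\rn)=\int_0^\infty\left\langle\sn^{1/2},\left(U^\ast(\Gamma+t)^{-1}U-(\Gamma_\mathcal{N}+t)^{-1}\right)\sn^{1/2}\right\rangle\mathrm{d}t ,
\end{equation*}
and applying Lemma~\ref{lemma:Lemma2.1CarlenVershynina} to the integrand for each $t>0$ yields
\begin{equation*}
\hat S_\mathrm{BS}(\sigma\|\rho)-\hat S_\mathrm{BS}(\sn\|\rn)\ \ge\ J:=\int_0^\infty t\,\norm{w_t}_2^2\,\mathrm{d}t , \qquad w_t:=\left(U(\Gamma_\mathcal{N}+t)^{-1}-(\Gamma+t)^{-1}U\right)\sn^{1/2} .
\end{equation*}

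The crux is to express the quantity $A:=\sigma^{1/2}\sn^{-1/2}\Gamma_\mathcal{N}^{1/2}\sn^{1/2}-\Gamma^{1/2}\sigma^{1/2}$ whose norm appears in \eqref{eq:BSdata-processing-bound} as a weighted integral of exactly the $w_s$ controlled by $J$. Since $\Gamma_\mathcal{N}^{1/2}\sn^{1/2}\in\mathcal N$ (so $U$ sends it to $\sigma^{1/2}\sn^{-1/2}\Gamma_\mathcal{N}^{1/2}\sn^{1/2}$) and $\Gamma^{1/2}U\sn^{1/2}=\Gamma^{1/2}\sigma^{1/2}$, one has $A=\bigl(U\Gamma_\mathcal{N}^{1/2}-\Gamma^{1/2}U\bigr)\sn^{1/2}$. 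Inserting the integral representation $x^{1/2}=\tfrac1\pi\int_0^\infty x(x+s)^{-1}s^{-1/2}\,\mathrm{d}s$ for $\Gamma$ and for $\Gamma_\mathcal{N}$, and using $\Gamma(\Gamma+s)^{-1}=I-s(\Gamma+s)^{-1}$, $\Gamma_\mathcal{N}(\Gamma_\mathcal{N}+s)^{-1}=I-s(\Gamma_\mathcal{N}+s)^{-1}$, the two $I$-contributions cancel and one is left with
\begin{equation*}
A=-\frac1\pi\int_0^\infty s^{1/2}\,w_s\,\mathrm{d}s , \qquad\text{whence}\qquad \norm{A}_2\ \le\ \frac1\pi\int_0^\infty s^{1/2}\,\norm{w_s}_2\,\mathrm{d}s .
\end{equation*}
I expect recognizing this exact cancellation — that the ``recovery gap'' $A$ equals $-\tfrac1\pi\int_0^\infty s^{1/2}w_s\,\mathrm{d}s$ — to be the main conceptual obstacle, and one must take some care with convergence, since the two $I$-pieces are individually divergent.

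Finally, to bound $\int_0^\infty s^{1/2}\norm{w_s}_2\,\mathrm{d}s$ by a power of $J$: the resolvent identity $w_s=(\Gamma+s)^{-1}(\Gamma U-U\Gamma_\mathcal{N})(\Gamma_\mathcal{N}+s)^{-1}\sn^{1/2}$ together with $\norm{(\Gamma+s)^{-1}}_\infty\le 1/s$, $\norm{(\Gamma_\mathcal{N}+s)^{-1}}_\infty\le 1/s$ and $\norm{\sn^{1/2}}_2=1$ gives the tail estimate $\norm{w_s}_2\le C_0\,s^{-2}$, where $C_0:=\norm{\Gamma U-U\Gamma_\mathcal{N}}_\infty\le\norm{U}_\infty\norm{\Gamma}_\infty+\norm{U}_\infty\norm{\Gamma_\mathcal{N}}_\infty\le 2\norm{\Gamma}_\infty$ by $\norm{U}_\infty\le 1$ and $\norm{\Gamma_\mathcal{N}}_\infty\le\norm{\Gamma}_\infty$ (the proposition preceding this theorem). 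Splitting at an arbitrary $R>0$, Cauchy--Schwarz gives $\int_0^R s^{1/2}\norm{w_s}_2\,\mathrm{d}s=\int_0^R\bigl(s\,\norm{w_s}_2^2\bigr)^{1/2}\,\mathrm{d}s\le R^{1/2}J^{1/2}$, while $\int_R^\infty s^{1/2}\norm{w_s}_2\,\mathrm{d}s\le C_0\int_R^\infty s^{-3/2}\,\mathrm{d}s=2C_0R^{-1/2}$; minimizing $R^{1/2}J^{1/2}+2C_0R^{-1/2}$ over $R>0$ (at $R=2C_0J^{-1/2}$) yields
\begin{equation*}
\norm{A}_2\ \le\ \frac1\pi\left(R^{1/2}J^{1/2}+2C_0R^{-1/2}\right)\ \le\ \frac{2\sqrt2}{\pi}\sqrt{C_0}\,J^{1/4}\ \le\ \frac4\pi\norm{\Gamma}_\infty^{1/2}J^{1/4} .
\end{equation*}
Rearranging gives $J\ge(\pi/4)^4\norm{\Gamma}_\infty^{-2}\norm{A}_2^4$, which combined with the lower bound for $\hat S_\mathrm{BS}(\sigma\|\rho)-\hat S_\mathrm{BS}(\sn\|\rn)$ from the first step is exactly \eqref{eq:BSdata-processing-bound} (the case $J=0$ is covered by the same estimate, forcing $\norm{A}_2=0$).
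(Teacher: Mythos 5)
Your proof is correct and follows essentially the same Carlen--Vershynina strategy as the paper: the same operator $U$, the same integral representation of the square root producing $A=-\tfrac1\pi\int_0^\infty s^{1/2}w_s\,\mathrm ds$, the same Cauchy--Schwarz head estimate, and the same optimization over the splitting point, yielding the identical constant $(\pi/4)^4$. The only (harmless) deviation is in the tail: you use the resolvent identity to get the pointwise decay $\norm{w_s}_2\le 2\norm{\Gamma}_\infty s^{-2}$, whereas the paper bounds $\int_T^\infty t^{1/2}\bigl(t^{-1}I-(\Gamma+t)^{-1}\bigr)\mathrm dt\le 2\norm{\Gamma}_\infty T^{-1/2}I$ directly; both give the same tail bound $4\norm{\Gamma}_\infty T^{-1/2}$.
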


\begin{proof}
The first part of the proof follows the first part of the one of Theorem \ref{thm:condition-equality}. Consider $U: \mathcal{M} \to \mathcal{M}$ as defined in Proposition \ref{prop:propU*UandGamma}. Then, the following inequality holds as operators on $(\mathcal N, \langle \cdot, \cdot \rangle_{\operatorname{HS}})$
\begin{equation*}
U^\ast \left( (\Gamma + t)^{-1} - (t+1)^{-1}I \right) U \geq (\Gamma_{\mathcal N} + t)^{-1} - (t+1)^{-1}I .
\end{equation*}
Therefore, 
\begin{align*}
\hat S_\mathrm{BS}(\sigma \| \rho) & = \int_{0}^\infty   \left\langle \sigma_{\mathcal N}^{1/2}, U^\ast \left( (\Gamma + t)^{-1} - (t+1)^{-1}I \right) U \sigma_{\mathcal N}^{1/2} \right\rangle   \mathrm{d}t \\
& \geq \int_{0}^\infty   \left\langle \sigma_{\mathcal N}^{1/2},  \left( (\Gamma_\mathcal{N} + t)^{-1} - (t+1)^{-1}I \right)  \sigma_{\mathcal N}^{1/2} \right\rangle \mathrm{d}t \\
&= \hat S_\mathrm{BS}(\sigma_{\mathcal N} \| \rho_{\mathcal N}). 
\end{align*}
Consider the infinitesimal expressions in the previous integrals. Hence,  given $0<T< \infty$, following the proof of \cite[Theorem 1.7]{Carlen2017a} and by virtue of the Cauchy-Schwarz inequality
\begin{align*}
\hat S_\mathrm{BS} (\sigma \| \rho)  - \hat S_\mathrm{BS}(\sigma_{\mathcal N} \| \rho_{\mathcal N}) & \geq \int_0^T \left\langle	\sn^{1/2}, \left( U^\ast (\Gamma + t)^{-1} U - (\Gamma_\mathcal{N} + t)^{-1}  \right) \sn^{1/2} \right\rangle \mathrm{d}t \\
& \geq \int_0^T t \norm{ \left( U (\Gamma_\mathcal{N} + t)^{-1} - (\Gamma + t)^{-1} U \right) \sn^{1/2} }_2^2 \mathrm{d}t \\
& \geq \frac{1}{T} \left( \int_0^T t^{1/2}  \norm{ \left( U (\Gamma_\mathcal{N} + t)^{-1} - (\Gamma + t)^{-1} U \right) \sn^{1/2} }_2 \mathrm{d}t   \right)^2.
\end{align*}
Here, we have used Lemma \ref{lemma:Lemma2.1CarlenVershynina} in the second line. Let us study the expression appearing in the last integral. For that, recall the integral representation of the operator monotone square root function,
\begin{equation*}
x^{1/2} = \frac{1}{\pi} \int_0^\infty t^{1/2} \left( \frac{1}{t} - \frac{1}{t + x}   \right) \mathrm{d}t, 
\end{equation*}
which clearly yields
\begin{equation*}
U \Gamma_\mathcal{N}^{1/2} \sn^{1/2} - \Gamma^{1/2} U \sn^{1/2} = \frac{1}{\pi} \int_0^\infty t^{1/2} \left(   (\Gamma + t)^{-1} U  - U (\Gamma_\mathcal{N} + t)^{-1} \right) \sn^{1/2} \mathrm{d}t. 
\end{equation*}
The left hand side can be simplified as
\begin{equation*}
U \Gamma_\mathcal{N}^{1/2} \sn^{1/2} - \Gamma^{1/2} U \sn^{1/2} = \sigma^{1/2} \sn^{-1/2} \Gamma_\mathcal{N}^{1/2} \sn^{1/2} - \Gamma^{1/2}  \sigma^{1/2} ,
\end{equation*}
and thus
\begin{align*}
\norm{\sigma^{1/2} \sn^{-1/2} \Gamma_\mathcal{N}^{1/2} \sn^{1/2} - \Gamma^{1/2}  \sigma^{1/2}}_2  & = \frac{1}{\pi}  \norm{ \int_0^\infty t^{1/2} \left( U (\Gamma_\mathcal{N} + t)^{-1} - (\Gamma + t)^{-1} U \right)\sn^{1/2} \mathrm{d}t}_2 \\
& \leq \frac{1}{\pi}  \int_0^T t^{1/2} \norm{ \left( U (\Gamma_\mathcal{N} + t)^{-1} - (\Gamma + t)^{-1} U \right) \sn^{1/2} }_2  \mathrm{d}t\\
& \phantom{as}+ \frac{1}{\pi}  \norm{ \int_T^\infty t^{1/2} \left( U (\Gamma_\mathcal{N} + t)^{-1} - (\Gamma + t)^{-1} U \right)\sn^{1/2} \mathrm{d}t}_2
\end{align*}
for any $0< T < \infty$. We present now an upper bound for the last term on the right hand side. As shown in the proof of \cite[Theorem 1.7]{Carlen2017a}, 
\begin{align*}
&  \norm{\int_T^\infty t^{1/2} \left( U (\Gamma_\mathcal{N} + t)^{-1} - (\Gamma + t)^{-1} U \right)\sn^{1/2} \mathrm{d}t }_2 \\
& \phantom{ad} \leq \norm{ \int_T^\infty t^{1/2} \left( U (\Gamma_\mathcal{N} + t)^{-1} - t^{-1} U \right)\sn^{1/2} \mathrm{d}t}_2 +  \norm{\int_T^\infty t^{1/2} \left( U  t^{-1} - (\Gamma + t)^{-1} U \right)\sn^{1/2}\mathrm{d}t }_2
\end{align*}
Moreover, we have 
\begin{equation*}
 \int_T^\infty t^{1/2} \left(t^{-1} I -  (\Gamma_\mathcal{N} + t)^{-1} \right) \mathrm{d}t \leq  \frac{2 \norm{\Gamma_\mathcal{N}}_\infty}{T^{1/2}} I 
\end{equation*}
and
\begin{equation*}
\int_T^\infty t^{1/2} \left(  t^{-1} I - (\Gamma + t)^{-1}\right) \mathrm{d}t  \leq \frac{2 \norm{\Gamma}_\infty}{T^{1/2}} I.
\end{equation*}
Thus, 
\begin{equation*}
\norm{\int_T^\infty t^{1/2} \left( U (\Gamma_\mathcal{N} + t)^{-1} - (\Gamma + t)^{-1} U \right)\sn^{1/2} \mathrm{d}t }_2 \leq \frac{4 \norm{\Gamma}_\infty}{T^{1/2}} , 
\end{equation*}
since $U^\ast U \leq \mathrm{Id}$ by Proposition \ref{prop:propU*UandGamma}, $\norm{\sigma_\mathcal{N}^{1/2}}_2 = 1$, and $\norm{\Gamma_{\mathcal N}}_\infty \leq \norm{\Gamma}_\infty$ by Proposition \ref{prop:Spectra_gammas}. Therefore,
\begin{equation*}
\norm{ \sigma^{1/2} \sn^{-1/2} \Gamma_\mathcal{N}^{1/2} \sn^{1/2} - \Gamma^{1/2}  \sigma^{1/2}}_2  \leq \frac{1}{\pi} T^{1/2}  \left( \hat S_\mathrm{BS} (\sigma\| \rho)  - \hat S_\mathrm{BS}(\sn \| \rn) \right)^{1/2}  + \frac{4 \norm{\Gamma}_\infty}{\pi T^{1/2}} .
\end{equation*}
Optimizing this expression with respect to $T$, we find the optimal bound 
\begin{equation*}\label{eq:strange-expression}
\norm{ \sigma^{1/2} \sn^{-1/2} \Gamma_\mathcal{N}^{1/2} \sn^{1/2} - \Gamma^{1/2}  \sigma^{1/2}}_2  \leq \frac{4 \norm{\Gamma}^{1/2}_\infty}{\pi}  \left( \hat S_\mathrm{BS} (\sigma\| \rho)  - \hat S_\mathrm{BS}(\sn \| \rn) \right)^{1/4} . 
\end{equation*}
Finally, rearranging the terms, we obtain Equation \eqref{eq:BSdata-processing-bound}.
\end{proof}

We have obtained a lower bound for the difference of BS-entropies in terms of one expression that already appeared in the previous section, in Corollary \ref{cor:BS-equality}. Furthermore, we can find another lower bound for it  with an expression that provides a measure of the recoverability of $\rho$ in terms of the relation found in Theorem \ref{thm:condition-equality}.

\begin{lem}\label{lem:BS_data_processing_bound}
Let $\mathcal M$ be a matrix algebra with unital subalgebra $\mathcal N$. Let $\mathcal E: \mathcal M \to \mathcal N$ be the trace-preserving conditional expectation onto this subalgebra. Let $\rho> 0$, $\sigma > 0$ be two quantum states on $\mathcal{M}$. Then,
\begin{equation*}
\norm{ \sigma^{1/2} \sn^{-1/2} \Gamma_\mathcal{N}^{1/2} \sn^{1/2} - \Gamma^{1/2}  \sigma^{1/2}}_2  \geq \frac{1}{2}  \norm{\Gamma}_\infty^{-1/2}   \norm{\sigma^{-1}}_\infty^{-1/2}    \norm{\sigma \sn^{-1} \rn - \rho}_2 .
\end{equation*}
\end{lem}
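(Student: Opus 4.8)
The plan is to rewrite both the quantity $\sigma^{1/2}\sn^{-1/2}\Gamma_\mathcal{N}^{1/2}\sn^{1/2} - \Gamma^{1/2}\sigma^{1/2}$ on the left and the recovery error $\sigma\sn^{-1}\rn - \rho$ on the right in terms of a single ``commutator-type'' expression, and then pass from a degree-one version to a degree-one-half version by a telescoping identity. Abbreviate $D := \sigma^{1/2}\sn^{-1/2}$, so that $D\sn^{1/2} = \sigma^{1/2}$. Using $\rho = \sigma^{1/2}\Gamma\sigma^{1/2}$, $\rn = \sn^{1/2}\Gamma_\mathcal{N}\sn^{1/2}$, as well as $\sigma^{-1/2}\rho = \Gamma\sigma^{1/2}$ and $\sigma^{1/2}\sn^{-1}\rn = D\Gamma_\mathcal{N}\sn^{1/2}$, a short direct computation should give
\begin{align*}
\sigma^{1/2}\sn^{-1/2}\Gamma_\mathcal{N}^{1/2}\sn^{1/2} - \Gamma^{1/2}\sigma^{1/2} &= \left( D\Gamma_\mathcal{N}^{1/2} - \Gamma^{1/2}D \right)\sn^{1/2}, \\
\sigma^{-1/2}\left( \sigma\sn^{-1}\rn - \rho \right) &= \left( D\Gamma_\mathcal{N} - \Gamma D \right)\sn^{1/2}.
\end{align*}
Write $A := \left( D\Gamma_\mathcal{N}^{1/2} - \Gamma^{1/2}D \right)\sn^{1/2}$ for the quantity on the first line.

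Next I would factor $\Gamma_\mathcal{N} = \Gamma_\mathcal{N}^{1/2}\Gamma_\mathcal{N}^{1/2}$ and $\Gamma = \Gamma^{1/2}\Gamma^{1/2}$ and use the elementary identity
\begin{equation*}
D\Gamma_\mathcal{N} - \Gamma D = \left( D\Gamma_\mathcal{N}^{1/2} - \Gamma^{1/2}D \right)\Gamma_\mathcal{N}^{1/2} + \Gamma^{1/2}\left( D\Gamma_\mathcal{N}^{1/2} - \Gamma^{1/2}D \right).
\end{equation*}
Multiplying on the right by $\sn^{1/2}$ and using $D\Gamma_\mathcal{N}^{1/2} - \Gamma^{1/2}D = A\sn^{-1/2}$ then produces the key relation
\begin{equation*}
\sigma^{-1/2}\left( \sigma\sn^{-1}\rn - \rho \right) = A\,\sn^{-1/2}\Gamma_\mathcal{N}^{1/2}\sn^{1/2} + \Gamma^{1/2}A.
\end{equation*}
Taking $\norm{\cdot}_2$ and applying $\norm{MN}_2 \le \norm{M}_\infty\norm{N}_2$ (and its mirror version) bounds $\norm{\sigma^{-1/2}(\sigma\sn^{-1}\rn - \rho)}_2$ by $\norm{A}_2\left(\norm{\sn^{-1/2}\Gamma_\mathcal{N}^{1/2}\sn^{1/2}}_\infty + \norm{\Gamma}_\infty^{1/2}\right)$.

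To finish, I would estimate the conjugated operator by submultiplicativity, $\norm{\sn^{-1/2}\Gamma_\mathcal{N}^{1/2}\sn^{1/2}}_\infty \le \norm{\sn^{-1}}_\infty^{1/2}\norm{\Gamma_\mathcal{N}}_\infty^{1/2}\norm{\sn}_\infty^{1/2}$, and simplify using $\norm{\sn}_\infty \le 1$ (as $\sn$ is a state), $\norm{\sn^{-1}}_\infty \le \norm{\sigma^{-1}}_\infty$ (which follows from $\sn = \mathcal E(\sigma) \ge \lambda_{\min}(\sigma)\,I$ since $\mathcal E$ is unital and positive), and $\norm{\Gamma_\mathcal{N}}_\infty \le \norm{\Gamma}_\infty$ by Proposition \ref{prop:Spectra_gammas}; this gives $\norm{\sn^{-1/2}\Gamma_\mathcal{N}^{1/2}\sn^{1/2}}_\infty \le \norm{\sigma^{-1}}_\infty^{1/2}\norm{\Gamma}_\infty^{1/2}$. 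Since $\norm{\sigma^{-1}}_\infty \ge 1$, this yields $\norm{\sigma^{-1/2}(\sigma\sn^{-1}\rn - \rho)}_2 \le 2\,\norm{\sigma^{-1}}_\infty^{1/2}\norm{\Gamma}_\infty^{1/2}\,\norm{A}_2$, and finally $\norm{\sigma\sn^{-1}\rn - \rho}_2 \le \norm{\sigma}_\infty^{1/2}\norm{\sigma^{-1/2}(\sigma\sn^{-1}\rn - \rho)}_2 \le \norm{\sigma^{-1/2}(\sigma\sn^{-1}\rn - \rho)}_2$ because $\norm{\sigma}_\infty \le 1$. Rearranging gives the claimed bound.

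The step I expect to be the main obstacle is precisely the treatment of the non-normal operator $\sn^{-1/2}\Gamma_\mathcal{N}^{1/2}\sn^{1/2}$: one cannot simply replace its operator norm by $\norm{\Gamma_\mathcal{N}}_\infty^{1/2}$, and this conjugation is exactly what forces the factor $\norm{\sigma^{-1}}_\infty^{-1/2}$ into the statement. Beyond that, one only has to be a little careful that the auxiliary bounds ($\norm{\sn}_\infty \le 1$, $\norm{\sn^{-1}}_\infty \le \norm{\sigma^{-1}}_\infty$, $\norm{\sigma^{-1}}_\infty \ge 1$, $\norm{\Gamma_\mathcal{N}}_\infty \le \norm{\Gamma}_\infty$) combine to give precisely the prefactor $1/2$ and the stated exponents rather than something weaker.
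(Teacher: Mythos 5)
Your proposal is correct and follows essentially the same route as the paper: the identity $D\Gamma_\mathcal{N}-\Gamma D=\bigl(D\Gamma_\mathcal{N}^{1/2}-\Gamma^{1/2}D\bigr)\Gamma_\mathcal{N}^{1/2}+\Gamma^{1/2}\bigl(D\Gamma_\mathcal{N}^{1/2}-\Gamma^{1/2}D\bigr)$ is exactly the add-and-subtract step in the paper's proof, and the auxiliary bounds $\norm{\sn^{-1}}_\infty\le\norm{\sigma^{-1}}_\infty$, $\norm{\Gamma_\mathcal{N}}_\infty\le\norm{\Gamma}_\infty$ and $\norm{\sigma^{1/2}}_\infty,\norm{\sn^{1/2}}_\infty\le 1$ are used there in the same way. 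The only (cosmetic) difference is that the paper works with $\sigma^{-1/2}(\sigma\sn^{-1}\rn-\rho)\sn^{-1/2}$ rather than your $\sigma^{-1/2}(\sigma\sn^{-1}\rn-\rho)$, so the factor $\norm{\sn^{-1/2}}_\infty$ enters via $A\sn^{-1/2}$ instead of via $\norm{\sn^{-1/2}\Gamma_\mathcal{N}^{1/2}\sn^{1/2}}_\infty$ together with $\norm{\sigma^{-1}}_\infty\ge 1$.
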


\begin{proof}

Let us define
\begin{equation*}
A := \sigma^{1/2} \sn^{-1/2} \Gamma_\mathcal{N}^{1/2} \sn^{1/2} - \Gamma^{1/2}  \sigma^{1/2}.
\end{equation*}
It holds that $\norm{\sn^{-1/2}}_\infty \leq \norm{\sigma^{-1}}_\infty^{1/2}$ by Jensen's operator inequality and the Russo-Dye theorem. Using the facts that $\norm{\sigma^{1/2}}_2 = \norm{\sn^{1/2}}_2=1$, on the one side we have
\begin{align*}
& \norm{ \sigma^{1/2}  \sn^{-1/2}  \Gamma_\mathcal{N} - \Gamma \sigma^{1/2} \sn^{-1/2}}_2   = \\
& \phantom{adadadasadad}  = \norm{ \sigma^{1/2}  \sn^{-1/2}  \Gamma_\mathcal{N} -  \Gamma^{1/2} \sigma^{1/2} \sn^{-1/2} \Gamma_\mathcal{N}^{1/2} +  \Gamma^{1/2} \sigma^{1/2} \sn^{-1/2} \Gamma_\mathcal{N}^{1/2} - \Gamma \sigma^{1/2} \sn^{-1/2}}_2 \\
& \phantom{adadadasadad} \leq \norm{ \sigma^{1/2}  \sn^{-1/2}  \Gamma_\mathcal{N} -  \Gamma^{1/2} \sigma^{1/2} \sn^{-1/2} \Gamma_\mathcal{N}^{1/2} }_2 + \norm{ \Gamma^{1/2} \sigma^{1/2} \sn^{-1/2} \Gamma_\mathcal{N}^{1/2} - \Gamma \sigma^{1/2} \sn^{-1/2}}_2   \\
& \phantom{adadadasadad}  \leq \norm{A}_2 \norm{\sigma_{\mathcal N}^{-1/2}}_\infty \left(\norm{\Gamma_\mathcal{N}^{1/2}}_\infty + \norm{\Gamma^{1/2}}_\infty\right) \\
& \phantom{adadadasadad} \leq  2  \norm{A}_2 \norm{\sigma^{-1}}_\infty^{1/2}  \norm{\Gamma}_\infty^{1/2},
\end{align*} 
where we have used Hölder's inequality  and Proposition \ref{prop:Spectra_gammas}. On the other side, we get
\begin{align*}
\norm{ \sigma^{1/2}  \sn^{-1/2}  \Gamma_\mathcal{N} - \Gamma \sigma^{1/2} \sn^{-1/2}}_2  & =  \norm{ \sigma^{1/2}  \sn^{-1}  \rn \sn^{-1/2} - \sigma^{-1/2} \rho  \sn^{-1/2}}_2  \\
 & \geq \norm{\sigma \sn^{-1} \rn - \rho}_2.
\end{align*} 
Therefore, 
\begin{equation*}
\norm{\sigma \sn^{-1} \rn - \rho}_2 \leq 2 \norm{\Gamma}^{1/2}_\infty  \norm{\sigma^{-1}}_\infty^{1/2}  \norm{\sigma^{1/2} \sn^{-1/2} \Gamma_\mathcal{N}^{1/2} \sn^{1/2} - \Gamma^{1/2}  \sigma^{1/2}}_2.
\end{equation*}

\end{proof}

Notice that $\norm{\sigma^{-1}}_\infty$ is nothing but the inverse of the minimum eigenvalue of $\sigma$. Finally, as a consequence of Theorem \ref{thm:bound-data-processing-BS} and Lemma \ref{lem:BS_data_processing_bound}, we get the following corollary.

\begin{cor} \label{cor:nicer_lower_bound}
Let $\mathcal M$ be a matrix algebra with unital subalgebra $\mathcal N$. Let $\mathcal E: \mathcal M \to \mathcal N$ be the trace-preserving conditional expectation onto this subalgebra. Let  $\sigma > 0$, $\rho> 0$ be two quantum states on $\mathcal{M}$. Then,
\begin{equation} \label{eq:BSdata-processing-bound2}
\hat S_\mathrm{BS}(  \sigma \| \rho) - \hat S_\mathrm{BS}(\sn \| \rn)  \geq \left( \frac{\pi}{8} \right)^4 \norm{\Gamma}_\infty^{-4}  \norm{\sigma^{-1}}_\infty^{-2}   \norm{\rho- \sigma \sn^{-1} \rn  }_2^4 .
\end{equation}
\end{cor}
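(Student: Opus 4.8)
The plan is straightforward: \textbf{Corollary \ref{cor:nicer_lower_bound} is an immediate consequence of chaining the two preceding results}, namely Theorem \ref{thm:bound-data-processing-BS} and Lemma \ref{lem:BS_data_processing_bound}. The only work involved is bookkeeping of constants and exponents, so the ``main obstacle'' is essentially nil — this is a corollary in the literal sense.

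Here is how I would carry it out. First, recall from Theorem \ref{thm:bound-data-processing-BS} that
\begin{equation*}
\hat S_\mathrm{BS}(\sigma \| \rho) - \hat S_\mathrm{BS}(\sigma_{\mathcal N} \| \rho_{\mathcal N}) \geq \left( \frac{\pi}{4} \right)^4 \norm{\Gamma}_\infty^{-2}  \norm{ \sigma^{1/2} \sn^{-1/2} \Gamma_\mathcal{N}^{1/2} \sn^{1/2} - \Gamma^{1/2}  \sigma^{1/2}}_2^4 .
\end{equation*}
Next, by Lemma \ref{lem:BS_data_processing_bound},
\begin{equation*}
\norm{ \sigma^{1/2} \sn^{-1/2} \Gamma_\mathcal{N}^{1/2} \sn^{1/2} - \Gamma^{1/2}  \sigma^{1/2}}_2  \geq \frac{1}{2}  \norm{\Gamma}_\infty^{-1/2}   \norm{\sigma^{-1}}_\infty^{-1/2}    \norm{\sigma \sn^{-1} \rn - \rho}_2 .
\end{equation*}
Both sides of this last inequality are non-negative, so I may raise it to the fourth power, obtaining
\begin{equation*}
\norm{ \sigma^{1/2} \sn^{-1/2} \Gamma_\mathcal{N}^{1/2} \sn^{1/2} - \Gamma^{1/2}  \sigma^{1/2}}_2^4  \geq \frac{1}{16}  \norm{\Gamma}_\infty^{-2}   \norm{\sigma^{-1}}_\infty^{-2}    \norm{\sigma \sn^{-1} \rn - \rho}_2^4 .
\end{equation*}

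Finally, substituting this into the bound from Theorem \ref{thm:bound-data-processing-BS} gives
\begin{equation*}
\hat S_\mathrm{BS}(\sigma \| \rho) - \hat S_\mathrm{BS}(\sigma_{\mathcal N} \| \rho_{\mathcal N}) \geq \left( \frac{\pi}{4} \right)^4 \frac{1}{16} \norm{\Gamma}_\infty^{-4}  \norm{\sigma^{-1}}_\infty^{-2}   \norm{\rho - \sigma \sn^{-1} \rn  }_2^4 ,
\end{equation*}
and since $(\pi/4)^4 / 16 = (\pi/8)^4$ and $\norm{\sigma \sn^{-1} \rn - \rho}_2 = \norm{\rho - \sigma \sn^{-1} \rn}_2$, this is exactly Equation \eqref{eq:BSdata-processing-bound2}. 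The only thing to double-check is the constant arithmetic $(\pi/4)^4 \cdot 2^{-4} = \pi^4 \cdot 4^{-4} \cdot 2^{-4} = \pi^4 \cdot 8^{-4} = (\pi/8)^4$, which is immediate. There is no genuine obstacle here; the substance of the argument lives entirely in Theorem \ref{thm:bound-data-processing-BS} and Lemma \ref{lem:BS_data_processing_bound}.
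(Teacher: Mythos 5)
Your proposal is correct and is exactly the derivation the paper intends: the corollary is stated as a direct consequence of Theorem \ref{thm:bound-data-processing-BS} and Lemma \ref{lem:BS_data_processing_bound}, obtained by raising the lemma's inequality to the fourth power and substituting, and your constant bookkeeping $(\pi/4)^4\cdot 2^{-4}=(\pi/8)^4$ and exponent count $\norm{\Gamma}_\infty^{-2-2}=\norm{\Gamma}_\infty^{-4}$ are both right.
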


\begin{remark}
This result, in particular, constitutes another proof for the implication
\begin{equation*}
\hat S_\mathrm{BS}(  \sigma \| \rho) = \hat S_\mathrm{BS}(\sn \| \rn)  \implies \rho = \sigma \sn^{-1} \rn,
\end{equation*}
from Theorem \ref{thm:condition-equality}. Indeed, we can deduce from the proof of this corollary the implications $(1) \implies (3) \implies (2) $ in Corollary \ref{cor:BS-equality}.

\end{remark}

\section{On the data processing inequality for maximal $f$-divergences}\label{sec:maximal_f_divergences}

 In this section, we consider a more general setting than in the previous ones and, following the lines of \cite{Carlen2018}, we provide a strengthened data processing inequality for maximal $f$-divergences. 
We consider operator convex functions $f: (0, \infty) \to \mathbb R$ whose transpose $\tilde f$ is operator monotone decreasing. The transpose is operator convex by Proposition \ref{prop:transpose} and it is also monotone decreasing if $f$ maps $(0, \infty) $ to itself by Theorem \ref{thm:operator_montone_iff_concave}. Since the functions we consider here belong to a more general framework, we have to further assume the latter, although the aforementioned theorem shows that it is a reasonable assumption.

Moreover, we demand that the measure $\mu_{- \tilde f}$ of the transpose with negative sign is absolutely continuous with respect to Lebesgue measure and  assume that there are $C>0$, $\alpha \geq 0$ such that for every $T \geq 1$, the Radon-Nikod{\'y}m derivative satisfies
\begin{equation*}
\frac{\mathrm{d}\mu_{ - \tilde f}(t)}{\mathrm{d}t} \geq \left(C T^{2\alpha}\right)^{-1}
\end{equation*}
 almost everywhere (with respect to Lebesgue measure) for all $t \in [1/T, T]$. Moreover, we impose the condition that our states $\sigma$, $\rho > 0$ are such that
\begin{equation} \label{eq:not_too_far_from_dpi}
\left(\frac{(2\alpha + 1)\sqrt{C}}{4} \frac{( \hat{S}_f (\sigma\| \rho) -\hat{S}_f (\sn\| \rn))^{1/2}}{1+\norm{\Gamma}_\infty}\right)^{\frac{1}{1+\alpha}} \leq 1.
\end{equation}
The main result of this section is the following:

\begin{thm}\label{thm:bound-data-processing-max}
Let $\mathcal M$ be a matrix algebra with unital subalgebra $\mathcal N$. Let $\mathcal E: \mathcal M \to \mathcal N$ be the trace-preserving conditional expectation onto this subalgebra. Let  $\sigma > 0$, $\rho> 0$ be two quantum states on $\mathcal{M}$ and let $f:(0,\infty) \to \mathbb R$ be an operator convex function with transpose $\tilde f$. We assume that $\tilde f$ is operator monotone decreasing and such that the measure $\mu_{- \tilde f}$ that appears in Theorem \ref{thm:operator-convex-Bhatia} is absolutely continuous with respect to  Lebesgue measure. Moreover, we assume that  for every $T \geq 1$, there exist constants $\alpha \geq 0$, $C > 0$ satisfying $\mathrm{d}\mu_{ - \tilde f} (t) /\mathrm{d}t \geq (C T^{2 \alpha})^{-1}  $ for all $t \in \left[ 1/T , T \right]$ and such that Equation \eqref{eq:not_too_far_from_dpi} holds. Then, there is a constant $K_{\alpha} > 0$ such that 
\begin{equation} \label{eq:Max-data-processing-bound}
\hat S_f(  \sigma \| \rho) - \hat S_f(\sn \| \rn)   \geq \frac{K_\alpha}{C} \left( 1 +  \norm{\Gamma}_\infty\right)^{-(4 \alpha + 2)}  \norm{ \sigma^{1/2} \sn^{-1/2} \Gamma_\mathcal{N}^{1/2} \sn^{1/2} - \Gamma^{1/2}  \sigma^{1/2}}_2^{4(\alpha + 1)}.
\end{equation}
\end{thm}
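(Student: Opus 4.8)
The plan is to run the argument of Theorem \ref{thm:bound-data-processing-BS} — which is the present statement for $f(x)=x\log x$, $\alpha=0$, $C=1$ — with the integral representation of $\log$ replaced by the canonical representation of $-\tilde f$ from Theorem \ref{thm:operator-convex-Bhatia}; this is the maximal-$f$-divergence analogue of \cite{Carlen2018}. First I would use Proposition \ref{prop:max_transpose} to rewrite $\hat S_f(\sigma\|\rho)=\hat S_{\tilde f}(\rho\|\sigma)=\langle\sigma^{1/2},\tilde f(\Gamma)\sigma^{1/2}\rangle$ and $\hat S_f(\sn\|\rn)=\langle\sn^{1/2},\tilde f(\Gamma_\mathcal{N})\sn^{1/2}\rangle$. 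Since $-\tilde f$ is operator monotone on $(0,\infty)$, it has the form $-\tilde f(\lambda)=\alpha_0+\beta_0\lambda+\int_0^\infty\big(\tfrac{t}{1+t^2}-\tfrac1{\lambda+t}\big)\,\mathrm d\mu_{-\tilde f}(t)$ with $\beta_0\geq0$. Substituting and subtracting, the terms $\alpha_0$ and $\beta_0\lambda$ (whose contributions to the two divergences agree, using $\tr\sigma=\tr\sn=1$ and $\tr\rho=\tr\rn=1$) and the $\tfrac{t}{1+t^2}$ part of the integrand (similarly, using $\norm{\sigma^{1/2}}_2=\norm{\sn^{1/2}}_2=1$) all cancel; this cancellation must be performed before the integrand is split, so that each integral stays convergent via $\int_0^\infty(1+t^2)^{-1}\mathrm d\mu_{-\tilde f}<\infty$. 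One is left with
\begin{equation*}
\hat S_f(\sigma\|\rho)-\hat S_f(\sn\|\rn)=\int_0^\infty\Big(\langle\sigma^{1/2},(\Gamma+t)^{-1}\sigma^{1/2}\rangle-\langle\sn^{1/2},(\Gamma_\mathcal{N}+t)^{-1}\sn^{1/2}\rangle\Big)\,\mathrm d\mu_{-\tilde f}(t).
\end{equation*}

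With $U$ as in Proposition \ref{prop:propU*UandGamma} (so $U^\ast U=\mathcal E$, $U\sn^{1/2}=\sigma^{1/2}$, $U^\ast\Gamma U\leq\Gamma_\mathcal{N}$), the first inner product equals $\langle\sn^{1/2},U^\ast(\Gamma+t)^{-1}U\sn^{1/2}\rangle$, and Lemma \ref{lemma:Lemma2.1CarlenVershynina} bounds the integrand below by $t\,\norm{(U(\Gamma_\mathcal{N}+t)^{-1}-(\Gamma+t)^{-1}U)\sn^{1/2}}_2^2$. I would then restrict the integral to $t\in[1/T,T]$ for $T\geq1$, insert the hypothesis $\mathrm d\mu_{-\tilde f}(t)/\mathrm dt\geq(CT^{2\alpha})^{-1}$, and apply Cauchy--Schwarz exactly as in Theorem \ref{thm:bound-data-processing-BS} to obtain, with $J:=\int_{1/T}^T t^{1/2}\norm{(U(\Gamma_\mathcal{N}+t)^{-1}-(\Gamma+t)^{-1}U)\sn^{1/2}}_2\,\mathrm dt$, the bound $\hat S_f(\sigma\|\rho)-\hat S_f(\sn\|\rn)\geq (C\,T^{2\alpha+1})^{-1}J^2$, equivalently $J\leq\sqrt C\,T^{\alpha+1/2}(\hat S_f(\sigma\|\rho)-\hat S_f(\sn\|\rn))^{1/2}$.

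To connect $J$ with $N:=\norm{\sigma^{1/2}\sn^{-1/2}\Gamma_\mathcal{N}^{1/2}\sn^{1/2}-\Gamma^{1/2}\sigma^{1/2}}_2=\norm{U\Gamma_\mathcal{N}^{1/2}\sn^{1/2}-\Gamma^{1/2}U\sn^{1/2}}_2$, I would use $x^{1/2}=\tfrac1\pi\int_0^\infty t^{1/2}\big(\tfrac1t-\tfrac1{t+x}\big)\mathrm dt$ to write $\pi N=\norm{\int_0^\infty t^{1/2}\big((\Gamma+t)^{-1}U-U(\Gamma_\mathcal{N}+t)^{-1}\big)\sn^{1/2}\,\mathrm dt}_2$, split as $\int_0^{1/T}+\int_{1/T}^T+\int_T^\infty$, and estimate the two tails. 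The large-$t$ tail is controlled verbatim as in Theorem \ref{thm:bound-data-processing-BS} (add and subtract $t^{-1}U\sn^{1/2}$, use $\int_T^\infty t^{1/2}(t^{-1}-(\Gamma+t)^{-1})\,\mathrm dt\leq2\norm{\Gamma}_\infty T^{-1/2}$ and the analogue for $\Gamma_\mathcal{N}$, together with $U^\ast U\leq\operatorname{Id}$, $\norm{\sn^{1/2}}_2=1$, and $\norm{\Gamma_\mathcal{N}}_\infty\leq\norm{\Gamma}_\infty$ from Proposition \ref{prop:Spectra_gammas}), yielding $4\norm{\Gamma}_\infty T^{-1/2}$; the small-$t$ tail — absent in the $\log$ case because the measure of $\log$ has density $1$ down to $t=0$ — is bounded crudely using $\norm{(\Gamma+t)^{-1}}_\infty,\norm{(\Gamma_\mathcal{N}+t)^{-1}}_\infty\leq1/t$, so that $\norm{\int_0^{1/T}\cdots}_2\leq\int_0^{1/T}t^{1/2}\cdot2t^{-1}\mathrm dt=4T^{-1/2}$. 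Altogether $\pi N\leq J+4(1+\norm{\Gamma}_\infty)T^{-1/2}$, hence with $u=T^{1/2}\geq1$,
\begin{equation*}
\pi N\leq\sqrt C\,u^{2\alpha+1}\big(\hat S_f(\sigma\|\rho)-\hat S_f(\sn\|\rn)\big)^{1/2}+4(1+\norm{\Gamma}_\infty)u^{-1}.
\end{equation*}
Minimizing the right-hand side over $u$, the optimizer satisfies $u_\ast^{2\alpha+2}=4(1+\norm{\Gamma}_\infty)\big((2\alpha+1)\sqrt C\,(\hat S_f(\sigma\|\rho)-\hat S_f(\sn\|\rn))^{1/2}\big)^{-1}$, and the constraint $u_\ast\geq1$ is exactly Equation \eqref{eq:not_too_far_from_dpi}. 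Substituting $u_\ast$, rearranging, and raising to the power $4(\alpha+1)$ yields Equation \eqref{eq:Max-data-processing-bound} with a constant $K_\alpha$ depending only on $\alpha$.

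The main obstacle is the interplay between the measure and the cutoff. Unlike the BS-entropy case, where $\mathrm d\mu_{-\tilde f}$ has full, constant density and one integrates over all of $[0,T]$, here the Radon--Nikod\'ym derivative is lower bounded only on $[1/T,T]$ and with a $T^{2\alpha}$ loss, forcing one both to truncate the entropy-side integral at $1/T$ and to separately absorb the new small-$t$ tail on the $N$-side; the ensuing optimization over $T$ must then respect the constraint $T\geq1$, which is precisely where the hypothesis \eqref{eq:not_too_far_from_dpi} enters and cannot be dropped. The remaining steps are a careful but routine bookkeeping exercise modeled on Theorem \ref{thm:bound-data-processing-BS}.
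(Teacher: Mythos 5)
Your proposal is correct and follows essentially the same route as the paper's proof: the same canonical representation of $-\tilde f$, the same lower bound on the integrand via Lemma \ref{lemma:Lemma2.1CarlenVershynina}, the same three-way split of the $x^{1/2}$-integral at $1/T$ and $T$ with the identical tail bounds $4T^{-1/2}$ and $4\norm{\Gamma}_\infty T^{-1/2}$, and the same optimization over $T$ constrained by Equation \eqref{eq:not_too_far_from_dpi}. The only (harmless) omission is the degenerate case $\hat S_f(\sigma\|\rho)-\hat S_f(\sn\|\rn)=0$, which is handled by letting $T\to\infty$ in your upper bound on $\pi N$.
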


\begin{proof}
Recall that, given an operator convex function $f$ with transpose $\tilde f$, 
\begin{equation*}
 \hat{S}_f (\sigma\| \rho)  =  \hat{S}_{\tilde f} (\rho\| \sigma) = \tr[\sigma^{1/2} \tilde f( \Gamma) \sigma^{1/2}]
\end{equation*}
by Proposition \ref{prop:max_transpose}. By assumption, $\tilde f$ is operator monotone decreasing. Thus, by virtue of Theorem \ref{thm:operator-convex-Bhatia}, $- \tilde f$ can be written as
\begin{equation*}
- \tilde f(\lambda) = \alpha + \beta \lambda + \int_0^\infty \left(  \frac{t}{t^2 + 1} - \frac{1}{t + \lambda}  \right) \mathrm{d}\mu_{- \tilde f} (t),
\end{equation*}
with $\alpha \in \mathbb R$, $\beta \geq 0$ and $\mu_{- \tilde f}$ a positive measure on $(0, \infty)$ such that
\begin{equation*}
\int_0^\infty  \frac{1}{t^2 + 1} \mathrm{d}\mu_{- \tilde f} (t) < \infty.
\end{equation*}
Hence, it is clear that 
\begin{align*}
 \hat{S}_f (\sigma\| \rho)  &= \left\langle  \sigma^{1/2}, { \tilde f}(\Gamma) \sigma^{1/2}    \right\rangle \\
& =  \left\langle  \sigma^{1/2},   \left( - \alpha -\beta \Gamma + \int_0^\infty \left( (\Gamma +t)^{-1} - \frac{t}{t^2 + 1} I   \right) \mathrm{d}\mu_{- \tilde f}(t)  \right) \sigma^{1/2}    \right\rangle \\
& = - \alpha - \beta + \int_0^\infty \left\langle  \sigma^{1/2}, \left( ( \Gamma + t)^{-1} - \frac{t}{t^2 + 1} I   \right)  \sigma^{1/2}    \right\rangle \mathrm{d}\mu_{- \tilde f}(t) \\
& \geq - \alpha - \beta + \int_0^\infty \left\langle  \sn^{1/2}, \left( (\Gamma_\mathcal{N} + t)^{-1} - \frac{t}{t^2 + 1} I   \right)  \sn^{1/2}    \right\rangle \mathrm{d}\mu_{- \tilde f}(t) \\
&= \hat{S}_f (\sn\| \rn) ,
\end{align*}
 where the inequality in the fourth line follows from Proposition \ref{prop:propU*UandGamma} and Jensen's operator inequality (point (3) in Theorem \ref{thm:jensen}).
Notice that the difference of maximal $f$-divergences is given by 
\begin{equation*}
\hat{S}_f (\sigma\| \rho) -\hat{S}_f (\sn\| \rn) = \int_0^\infty \left( \left\langle  \sigma^{1/2}, ( \Gamma + t)^{-1} \sigma^{1/2} \right \rangle - \left \langle \sn^{-1/2}, ( \Gamma_\mathcal{N} + t)^{-1}  \sn^{1/2}\right\rangle \right) \mathrm{d}\mu_{- \tilde f}(t), 
\end{equation*} 
and, recalling that $U \sigma_\mathcal{N}^{1/2} = \sigma^{1/2}$, the difference between the infinitesimal terms in the integrals was studied in Lemma \ref{lemma:Lemma2.1CarlenVershynina}, obtaining
\begin{equation*}
 \left\langle	\sn^{1/2}, \left( U^\ast (\Gamma + t)^{-1} U - (\Gamma_\mathcal{N} + t)^{-1}  \right) \sn^{1/2} \right\rangle   \geq t \norm{ \left( U (\Gamma_\mathcal{N} + t)^{-1} - (\Gamma + t)^{-1} U \right) \sn^{1/2} }_2^2 ,
\end{equation*}
Following the proof of Theorem \ref{thm:bound-data-processing-BS}, we infer that 
\begin{equation*}
\norm{\sigma^{1/2} \sn^{-1/2} \Gamma_\mathcal{N}^{1/2} \sn^{1/2} - \Gamma^{1/2}  \sigma^{1/2}}_2   = \frac{1}{\pi}  \norm{ \int_0^\infty t^{1/2} \left( U (\Gamma_\mathcal{N} + t)^{-1} - (\Gamma + t)^{-1} U \right)\sn^{1/2} \mathrm{d}t}_2, 
\end{equation*}
and we can now split the right hand side into three parts (contrary to the proof of Theorem \ref{thm:bound-data-processing-BS}, where we only split it in two):
\begin{align*}
\norm{\sigma^{1/2} \sn^{-1/2} \Gamma_\mathcal{N}^{1/2} \sn^{1/2} - \Gamma^{1/2}  \sigma^{1/2}}_2   &\leq \underbrace{\frac{1}{\pi} \int_0^{1/T} t^{1/2}  \norm{  \left( U (\Gamma_\mathcal{N} + t)^{-1} - (\Gamma + t)^{-1} U \right)\sn^{1/2} }_2 \mathrm{d}t }_{(\ast 1)} \\
& +  \underbrace{\frac{1}{\pi} \int_{1/T}^T t^{1/2}  \norm{  \left( U (\Gamma_\mathcal{N} + t)^{-1} - (\Gamma + t)^{-1} U \right)\sn^{1/2} }_2 \mathrm{d}t }_{(\ast 2)} \\ &+  \underbrace{\frac{1}{\pi}   \norm{  \int_T^\infty t^{1/2}  \left( U (\Gamma_\mathcal{N} + t)^{-1} - (\Gamma + t)^{-1} U \right)\sn^{1/2} \mathrm{d}t}_2 }_{(\ast 3)} .
\end{align*}
Here, we assume that $T \geq 1$. Let us study each one of these terms separately: For the first one, we have
\begin{equation*}
(\ast 1) \leq \frac{2}{\pi}  \int_0^{1/T} t^{-1/2} \mathrm{d}t = \frac{4}{\pi T^{1/2}},
\end{equation*}
since 
\begin{equation*}
\norm{  \left( U (\Gamma_\mathcal{N} + t)^{-1} - (\Gamma + t)^{-1} U \right)\sn^{1/2} }_2 \leq 2 t^{-1}.
\end{equation*}
The last term is bounded using the same reasoning as in the proof of Theorem \ref{thm:bound-data-processing-BS}. Thus, we have
\begin{equation*}
(\ast 3) \leq \frac{4 \norm{\Gamma}_\infty}{\pi T^{1/2}}.
\end{equation*}
The second term, however, introduces something that had not appeared on the main result of the previous section. Indeed, by the Cauchy-Schwarz inequality, we have
\begin{align*}
( \ast 2 ) & \leq \frac{1}{\pi} \left( T - \frac{1}{T} \right)^{1/2}  \left( \int_{1/T}^T t \norm{  \left( U (\Gamma_\mathcal{N} + t)^{-1} - (\Gamma + t)^{-1} U \right)\sn^{1/2} }^2_2 \mathrm{d}t \right)^{1/2} \\
& \leq \frac{T^{1/2 }}{\pi} \sqrt{C}T^\alpha   \left( \int_{1/T}^T t \norm{  \left( U (\Gamma_\mathcal{N} + t)^{-1} - (\Gamma + t)^{-1} U \right)\sn^{1/2} }^2_2 \mathrm{d}\mu_{ - \tilde f}(t)  \right)^{1/2} \\
& \leq \frac{T^{1/2}}{\pi} \sqrt{C}T^\alpha  \left( \hat{S}_f (\sigma\| \rho) -\hat{S}_f (\sn\| \rn) \right)^{1/2} .
\end{align*}
Let us assume that $ \hat{S}_f (\sigma\| \rho) -\hat{S}_f (\sn\| \rn)> 0$. Considering the three bounds together and optimizing over $T$, we  find that the minimum is achieved for
\begin{equation*}
T = \left(\frac{4}{(2 \alpha +1)\sqrt{C}} \frac{1+\norm{\Gamma}_\infty}{( \hat{S}_f (\sigma\| \rho) -\hat{S}_f (\sn\| \rn))^{1/2}}\right)^{\frac{1}{1+\alpha}}.
\end{equation*}
We note that indeed $T \geq 1$ by Equation \eqref{eq:not_too_far_from_dpi}. Inserting the optimal $T$, we obtain

\begin{align*}
&\norm{\sigma^{1/2} \sn^{-1/2} \Gamma_\mathcal{N}^{1/2} \sn^{1/2} - \Gamma^{1/2}  \sigma^{1/2}}_2  \\ &\leq (K_\alpha)^{-\frac{1}{4(\alpha + 1)}} \left( 1 + \norm{\Gamma}_\infty \right)^{\frac{2 \alpha + 1}{2 \alpha + 2}}  C^{\frac{1}{4(\alpha + 1)}}  \left( \hat{S}_f (\sigma\| \rho) -\hat{S}_f (\sn\| \rn) \right)^{\frac{1}{4(\alpha + 1)}} ,
\end{align*}
and rearranging the terms, we get Equation \eqref{eq:Max-data-processing-bound}. Here,
\begin{equation*}
K_\alpha = \left(\frac{2 \alpha + 1}{2 \alpha + 2}\right)^{4(\alpha + 1)} (2 \alpha +1)^{-2} 4^{-(4 \alpha + 2)} \pi^{4 (\alpha+1)}.
\end{equation*}
This bound is also valid for $\hat{S}_f(\sigma\| \rho) -\hat{S}_f (\sn\| \rn) = 0$, since in this case we can make the upper bound arbitrarily small by choosing $T$ arbitrarily large. 
\end{proof}

Lemma \ref{lem:BS_data_processing_bound} can also be used to get another bound for the difference of maximal $f$-divergences in terms of the BS recovery map applied to $\rho$. 

\begin{cor} \label{cor:nicer_lower_bound_max}
Let $\mathcal M$ be a matrix algebra with unital subalgebra $\mathcal N$. Let $\mathcal E: \mathcal M \to \mathcal N$ be the trace-preserving conditional expectation onto this subalgebra. Let  $\sigma > 0$, $\rho> 0$ be two quantum states on $\mathcal{M}$ and let $f:(0,\infty) \to \mathbb R$ be an operator convex function with transpose $\tilde f$. We assume that $\tilde f$ is operator monotone decreasing and such that the measure $\mu_{ - \tilde f}$ that appears in Theorem \ref{thm:operator-convex-Bhatia} is absolutely continuous with respect to  Lebesgue measure. Moreover, we assume that  for every $T \geq 1$, there exist constants $\alpha \geq 0$, $C > 0$ satisfying $\mathrm{d}\mu_{ - \tilde f} (t) /\mathrm{d}t \geq (C T^{2 \alpha})^{-1}  $ for all $t \in \left[ 1/T , T \right]$ and such that Equation \eqref{eq:not_too_far_from_dpi} holds. Then, there is a constant $L_{\alpha} > 0$ such that 
\begin{equation} 
 \hat{S}_f (\sigma\| \rho) -\hat{S}_f (\sn\| \rn)   \geq \frac{L_\alpha}{C} \left( 1 +  \norm{\Gamma}_\infty \right)^{-(4 \alpha + 2)}  \norm{\Gamma}_\infty^{-(2 \alpha + 2)} \norm{\sigma^{-1}}_\infty^{-(2 \alpha + 2)} \norm{ \rho - \sigma  \sn^{-1} \rn  }_2^{4(\alpha +1)}.
\end{equation}
\end{cor}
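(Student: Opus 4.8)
The plan is to obtain the corollary by chaining the two preceding results, with no new analysis required. First I would invoke Theorem \ref{thm:bound-data-processing-max}, whose hypotheses are identical to those assumed here, to get
\[
\hat S_f(\sigma\|\rho) - \hat S_f(\sn\|\rn) \geq \frac{K_\alpha}{C}\bigl(1+\norm{\Gamma}_\infty\bigr)^{-(4\alpha+2)}\norm{\sigma^{1/2}\sn^{-1/2}\Gamma_{\mathcal N}^{1/2}\sn^{1/2} - \Gamma^{1/2}\sigma^{1/2}}_2^{4(\alpha+1)},
\]
where $K_\alpha$ is the explicit constant displayed in that proof. The only task is then to lower bound the quantity $\norm{\sigma^{1/2}\sn^{-1/2}\Gamma_{\mathcal N}^{1/2}\sn^{1/2} - \Gamma^{1/2}\sigma^{1/2}}_2$ appearing on the right.

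This is exactly what Lemma \ref{lem:BS_data_processing_bound} provides: for any $\sigma,\rho>0$ it gives
\[
\norm{\sigma^{1/2}\sn^{-1/2}\Gamma_{\mathcal N}^{1/2}\sn^{1/2} - \Gamma^{1/2}\sigma^{1/2}}_2 \geq \tfrac12\,\norm{\Gamma}_\infty^{-1/2}\,\norm{\sigma^{-1}}_\infty^{-1/2}\,\norm{\sigma\sn^{-1}\rn - \rho}_2.
\]
Since the right-hand side of the Theorem \ref{thm:bound-data-processing-max} bound is a nonnegative constant times the $4(\alpha+1)$-th power of this Schatten-$2$ norm, it is monotone increasing in that norm, so I may raise the lemma's inequality to the power $4(\alpha+1)$ and substitute it directly. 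Collecting exponents, the factor $\norm{\Gamma}_\infty^{-1/2}$ contributes $\norm{\Gamma}_\infty^{-(2\alpha+2)}$ and the factor $\norm{\sigma^{-1}}_\infty^{-1/2}$ contributes $\norm{\sigma^{-1}}_\infty^{-(2\alpha+2)}$, exactly as claimed, while the numerical prefactors merge into $L_\alpha := 2^{-4(\alpha+1)}K_\alpha > 0$.

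Finally I would record that all the quantities involved are well-defined and finite: because $\sigma,\rho>0$ we have $\Gamma>0$ and $\sn,\rn>0$, hence $\norm{\Gamma}_\infty$ and $\norm{\sigma^{-1}}_\infty$ are strictly positive and finite, so the negative powers cause no trouble; and the degenerate case $\hat S_f(\sigma\|\rho)=\hat S_f(\sn\|\rn)$ is already handled inside Theorem \ref{thm:bound-data-processing-max}, where both sides are seen to vanish, so nothing extra is needed. There is essentially no obstacle in this argument; the one point that requires a moment's care is the exponent bookkeeping, making sure the half-powers in Lemma \ref{lem:BS_data_processing_bound} pair correctly with the $4(\alpha+1)$-th power coming from Theorem \ref{thm:bound-data-processing-max}, after which the result is a one-line substitution.
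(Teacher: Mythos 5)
Your proposal is correct and is exactly the paper's intended argument: the corollary is stated immediately after the remark that Lemma \ref{lem:BS_data_processing_bound} can be combined with Theorem \ref{thm:bound-data-processing-max}, and your exponent bookkeeping ($\tfrac12\cdot 4(\alpha+1)=2\alpha+2$ for each of $\norm{\Gamma}_\infty$ and $\norm{\sigma^{-1}}_\infty$, with $L_\alpha=2^{-4(\alpha+1)}K_\alpha$) is consistent with the explicit constant the paper later reports in the power-function corollary. Nothing further is needed.
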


 As a consequence of Theorem \ref{thm:bound-data-processing-max}, we get the following strengthening of the data processing inequality for maximal $f$-divergences for particular operator convex functions. The first one concerns the BS-entropy. In this case, $f(x) = x \log x$, $\tilde f(x) = - \log x$, $\alpha = 0$ and $C = 1$.

\begin{cor} 
Let $\mathcal M$ be a matrix algebra with unital subalgebra $\mathcal N$. Let $\mathcal E: \mathcal M \to \mathcal N$ be the trace-preserving conditional expectation onto this subalgebra. Let $\sigma > 0$, $\rho> 0$ be two quantum states on $\mathcal{M}$ such that $ \hat{S}_{\operatorname{BS}} (\sigma\| \rho) -\hat{S}_{\operatorname{BS}}  (\sn\| \rn) \leq 4(\norm{\Gamma}_\infty +1)$. Then,
\begin{equation}\label{eq:cor-BS-data-processing-max-f-divergences} 
 \hat{S}_{\operatorname{BS}} (\sigma\| \rho) -\hat{S}_{\operatorname{BS}}  (\sn\| \rn)   \geq \left( \frac{\pi}{8} \right)^4 \left( 1 +  \norm{\Gamma}_\infty \right)^{-2}  \norm{\Gamma}_\infty^{-2} \norm{\sigma^{-1}}_\infty^{-2}  \norm{ \rho - \sigma  \sn^{-1} \rn   }_2^4.
\end{equation}
\end{cor}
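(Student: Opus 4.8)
The plan is to read off this corollary from Corollary~\ref{cor:nicer_lower_bound_max} (equivalently, from Theorem~\ref{thm:bound-data-processing-max} followed by Lemma~\ref{lem:BS_data_processing_bound}), specialized to the operator convex function $f(x) = x\log x$, whose transpose is $\tilde f(x) = x f(1/x) = -\log x$. So the first step is to check that $f$ meets the hypotheses of Theorem~\ref{thm:bound-data-processing-max}. Since $\log$ maps $(0,\infty)$ onto itself and is operator monotone (Theorem~\ref{thm:operator_montone_iff_concave}), its negative $\tilde f = -\log$ is operator monotone decreasing. Next I would identify the associated measure: evaluating the integral in the representation of Theorem~\ref{thm:operator-convex-Bhatia} gives $\int_0^\infty\left(\tfrac{t}{1+t^2} - \tfrac{1}{\lambda+t}\right)\mathrm dt = \log\lambda$, so $\mu_{-\tilde f} = \mu_{\log}$ is Lebesgue measure; in particular it is absolutely continuous and $\mathrm d\mu_{-\tilde f}(t)/\mathrm dt = 1 \geq (C T^{2\alpha})^{-1}$ for every $T \geq 1$ and every $t$ with the choice $\alpha = 0$, $C = 1$.

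The second step is to verify that the standing hypothesis~\eqref{eq:not_too_far_from_dpi} of Theorem~\ref{thm:bound-data-processing-max} follows from the assumption of the corollary. With $\alpha = 0$ and $C = 1$, condition~\eqref{eq:not_too_far_from_dpi} reduces to $\hat S_{\mathrm{BS}}(\sigma\|\rho) - \hat S_{\mathrm{BS}}(\sn\|\rn) \leq 16\,(1+\norm{\Gamma}_\infty)^2$. Since $\norm{\Gamma}_\infty > 0$ one has $4\,(\norm{\Gamma}_\infty+1) \leq 16\,(1+\norm{\Gamma}_\infty)^2$, so the assumed bound $\hat S_{\mathrm{BS}}(\sigma\|\rho) - \hat S_{\mathrm{BS}}(\sn\|\rn) \leq 4\,(\norm{\Gamma}_\infty+1)$ is in fact stronger than what is needed.

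The last step is to apply Corollary~\ref{cor:nicer_lower_bound_max} with these data and keep track of the constant. Substituting $\alpha = 0$ into the formula for $K_\alpha$ from the proof of Theorem~\ref{thm:bound-data-processing-max} gives $K_0 = (1/2)^4\,4^{-2}\,\pi^4 = (\pi/4)^4$; raising the estimate of Lemma~\ref{lem:BS_data_processing_bound} to the fourth power contributes a factor $2^{-4}$, so the resulting constant is $L_0 = 2^{-4}K_0 = (\pi/8)^4$. Plugging $\alpha = 0$, $C = 1$, $L_0 = (\pi/8)^4$ into the conclusion of Corollary~\ref{cor:nicer_lower_bound_max} reproduces exactly~\eqref{eq:cor-BS-data-processing-max-f-divergences}. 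There is essentially no obstacle here: all the analytic content is already in Theorem~\ref{thm:bound-data-processing-max}, and the only points requiring a little care are the evaluation of $\mu_{\log}$ (to justify the choice $\alpha = 0$, $C = 1$), the elementary estimate turning the assumed entropy bound into~\eqref{eq:not_too_far_from_dpi}, and the bookkeeping of the numerical constant.
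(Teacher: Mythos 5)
Your proposal is correct and matches the paper's own (very terse) derivation: the paper likewise obtains this corollary by specializing Corollary \ref{cor:nicer_lower_bound_max} (i.e.\ Theorem \ref{thm:bound-data-processing-max} combined with Lemma \ref{lem:BS_data_processing_bound}) to $f(x)=x\log x$, $\tilde f(x)=-\log x$, $\mu_{-\tilde f}=$ Lebesgue measure, $\alpha=0$, $C=1$, and your constant bookkeeping ($K_0=(\pi/4)^4$, $L_0=2^{-4}K_0=(\pi/8)^4$) as well as your check that the hypothesis $\hat S_{\mathrm{BS}}(\sigma\|\rho)-\hat S_{\mathrm{BS}}(\sn\|\rn)\le 4(1+\norm{\Gamma}_\infty)$ implies \eqref{eq:not_too_far_from_dpi} are both accurate. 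One immaterial slip: $\log$ does not map $(0,\infty)$ onto itself, so Theorem \ref{thm:operator_montone_iff_concave} does not literally apply as you invoke it, but operator monotonicity of $\log$ follows anyway from the integral representation you compute in the next sentence (Theorem \ref{thm:operator-convex-Bhatia} with a positive measure).
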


Notice that Equation \eqref{eq:cor-BS-data-processing-max-f-divergences} is a bit less tight than Equation \eqref{eq:BSdata-processing-bound2}, although the results are comparable. The next corollary deals with the data processing inequality of maximal $f$-divergences associated to power functions.

\begin{cor}
Let $\mathcal M$ be a matrix algebra with unital subalgebra $\mathcal N$. Let $\mathcal E: \mathcal M \to \mathcal N$ be the trace-preserving conditional expectation onto this subalgebra. Let $\sigma > 0$, $\rho> 0$ be two quantum states on $\mathcal{M}$ and take $f_\beta(x)=  - x^{1-\beta}$, for $0 < \beta <1$. Then, $ C= \frac{\pi}{\sin{\pi \beta}}$, $ \alpha = \beta/2$  and if Equation \eqref{eq:not_too_far_from_dpi} holds, we have: 
\begin{equation}\label{eq:cor-powers-data-processing-max-f-divergences} 
 \hat{S}_{f_{\beta}} (\sigma\| \rho) -\hat{S}_{f_\beta}  (\sn\| \rn)   \geq L_{\beta/2} \frac{\sin{\pi \beta}}{\pi}  \left( 1 +  \norm{\Gamma}_\infty \right)^{-2(\beta +1)}  \norm{\Gamma}_\infty^{-(\beta + 2)} \norm{\sigma^{-1}}_\infty^{-(\beta + 2)}  \norm{ \rho - \sigma  \sn^{-1} \rn  }_2^{2 \beta + 4}.
\end{equation}
\end{cor}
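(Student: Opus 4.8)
The plan is to obtain this statement as an immediate specialization of Corollary~\ref{cor:nicer_lower_bound_max} (equivalently, of Theorem~\ref{thm:bound-data-processing-max} combined with Lemma~\ref{lem:BS_data_processing_bound}); all that is needed is to check that $f_\beta(x)=-x^{1-\beta}$ falls into that framework and to pin down the constants $C$ and $\alpha$.

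First I would verify the structural hypotheses. Since $0<\beta<1$, the map $x\mapsto x^{1-\beta}$ is operator monotone on $(0,\infty)$, hence operator concave by Theorem~\ref{thm:operator_montone_iff_concave}, so $f_\beta=-x^{1-\beta}$ is operator convex. Its transpose is $\tilde f_\beta(x)=x f_\beta(1/x)=-x^{\beta}$, so that $-\tilde f_\beta(x)=x^{\beta}$ is operator monotone increasing (again $0<\beta<1$); equivalently $\tilde f_\beta$ is operator monotone decreasing, as required.

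Next I would compute the measure $\mu_{-\tilde f_\beta}$ appearing in Theorem~\ref{thm:operator-convex-Bhatia}. Writing $x^{\beta}=a+bx+\int_0^\infty\bigl(\tfrac{t}{1+t^2}-\tfrac{1}{x+t}\bigr)\,\mathrm{d}\mu(t)$ and differentiating in $x$ gives $\beta x^{\beta-1}=b+\int_0^\infty (x+t)^{-2}\,\mathrm{d}\mu(t)$; the ansatz $\mathrm{d}\mu(t)=\tfrac{\sin\pi\beta}{\pi}\,t^{\beta}\,\mathrm{d}t$ then reproduces $\beta x^{\beta-1}$ with $b=0$, using the substitution $t=xs$ together with the Beta integral $\int_0^\infty \tfrac{s^{\beta}}{(1+s)^2}\,\mathrm{d}s=\Gamma(\beta+1)\Gamma(1-\beta)=\tfrac{\beta\pi}{\sin\pi\beta}$, while $a$ is fixed by evaluation at a single point; note $\int_0^\infty\tfrac{t^\beta}{1+t^2}\,\mathrm{d}t<\infty$ since $-1<\beta<1$, so the measure is admissible. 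Thus $\mu_{-\tilde f_\beta}$ is absolutely continuous, and for every $T\ge 1$ and all $t\in[1/T,T]$ one has $\tfrac{\mathrm{d}\mu_{-\tilde f_\beta}}{\mathrm{d}t}(t)=\tfrac{\sin\pi\beta}{\pi}t^{\beta}\ge\tfrac{\sin\pi\beta}{\pi}T^{-\beta}=\bigl(C\,T^{2\alpha}\bigr)^{-1}$ with $C=\tfrac{\pi}{\sin\pi\beta}>0$ and $\alpha=\beta/2$.

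Finally, granting Equation~\eqref{eq:not_too_far_from_dpi} (which is a hypothesis of the corollary), I would substitute $C=\pi/\sin\pi\beta$ and $\alpha=\beta/2$ into the conclusion of Corollary~\ref{cor:nicer_lower_bound_max}: the exponents become $-(4\alpha+2)=-2(\beta+1)$, $-(2\alpha+2)=-(\beta+2)$ and $4(\alpha+1)=2\beta+4$, and the prefactor $L_\alpha/C$ becomes $L_{\beta/2}\,\tfrac{\sin\pi\beta}{\pi}$, which is exactly Equation~\eqref{eq:cor-powers-data-processing-max-f-divergences}. The only step with genuine content is the explicit identification of the Radon--Nikod\'ym density of $\mu_{-\tilde f_\beta}$ and the resulting lower bound on $[1/T,T]$, since this is what fixes $C$ and $\alpha$; the remainder is bookkeeping of constants and exponents.
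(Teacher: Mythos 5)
Your proposal is correct and follows essentially the same route as the paper: identify $\tilde f_\beta(x)=-x^\beta$, read off $\mathrm{d}\mu_{-\tilde f_\beta}(t)=\tfrac{\sin\pi\beta}{\pi}t^\beta\,\mathrm{d}t$ (which the paper simply cites from Carlen--Vershynina, Example 3, while you verify it via the Beta integral), deduce $C=\pi/\sin\pi\beta$ and $\alpha=\beta/2$ from the bound $t^\beta\ge T^{-\beta}$ on $[1/T,T]$, and plug into Theorem \ref{thm:bound-data-processing-max} combined with Lemma \ref{lem:BS_data_processing_bound} (i.e.\ Corollary \ref{cor:nicer_lower_bound_max}). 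The exponent and constant bookkeeping matches the paper's statement exactly.
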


\begin{proof}
This follows straight from the facts $\tilde f(x) = - x^\beta$ and that \cite[Example 3]{Carlen2018}
\begin{equation*}
\mathrm{d}\mu_{f_\beta} (t) = \frac{\sin{\pi \beta}}{\pi} t^{\beta} \mathrm{d}t .
\end{equation*}
An application of Theorem \ref{thm:bound-data-processing-max} and Lemma \ref{lem:BS_data_processing_bound} yields
\begin{equation*}
L_{\beta/2} = \frac{1}{4} \left(\frac{\beta + 1}{\beta + 2}\right)^{2 \beta + 4} (\beta +1)^{-2} 8^{-2(\beta + 1)} \pi^{2 \beta+4}.
\end{equation*}
\end{proof}

\section{Extension of the previous results to general quantum channels} \label{sec:quantum-channels}

The purpose of this section is to present an extension of the main results obtained in Sections \ref{sec:data_processing} and \ref{sec:maximal_f_divergences} to general quantum channels.  To this end, we will adopt the following strategy: First, we will see that our results extend to states which are not full rank. Then, we will use a Stinespring dilation to lift our results to arbitrary quantum channels. In this case, the theorem corresponding to  the main result of Section \ref{sec:data_processing} reads as follows:

\begin{thm}\label{thm:bound-data-processing-BS-channels}
Let $\mathcal M$ and $\mathcal{N}$ be two matrix algebras and let $\mathcal T: \mathcal M \to \mathcal N$ be a completely positive trace-preserving map with $V$ the isometry from a Stinespring dilation of $\mathcal T$ (Theorem \ref{thm:stinespring}). Let $\sigma$, $\rho$ be two quantum states on $\mathcal{M}$ such that $\rho^0 = \sigma^0$. Then
\begin{equation} \label{eq:BSdata-processing-bound-channels}
\hat S_\mathrm{BS}(\sigma \| \rho) - \hat S_\mathrm{BS}(\sigma_{\mathcal T} \| \rho_{\mathcal T}) \geq \left( \frac{\pi}{4} \right)^4 \norm{\Gamma}_\infty^{-2}  \norm{V \sigma^{1/2} V^\ast  \st^{-1/2} \Gamma_\mathcal{T}^{1/2} \st^{1/2} \otimes I - V\Gamma^{1/2}  \sigma^{1/2}V^\ast}_2^4.
\end{equation}
Here, $\sigma^{-1}$ and $\sigma_{\mathcal T}^{-1}$ are the Moore-Penrose inverses if the states are not invertible. Moreover, we have
\begin{equation}\label{eq:BSdata-processing-bound-channels-2}
\hat S_\mathrm{BS}(\sigma \| \rho) - \hat S_\mathrm{BS}(\sigma_{\mathcal T}  \| \rho_{\mathcal T}) \geq   \left( \frac{\pi}{8} \right)^4 \norm{\Gamma}_\infty^{-4}  \norm{\st^{-1}}_\infty^{-2} \norm{ \sigma \mathcal{T}^* \left( \st^{-1} \rt \right) -  \rho}_2^4.
\end{equation}
\end{thm}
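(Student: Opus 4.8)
The plan is to deduce both displayed inequalities from the conditional-expectation result Theorem~\ref{thm:bound-data-processing-BS} by means of a Stinespring dilation (Theorem~\ref{thm:stinespring}). Writing $\mathcal T(\cdot)=\tr_{\mathcal V}[V(\cdot)V^\ast]$ with $V\colon\mathcal H\hookrightarrow\mathcal K\otimes\mathcal V$ an isometry and $s=\dim\mathcal V$, the map $\mathcal E'\colon X\mapsto\frac1s\tr_{\mathcal V}[X]\otimes I_{\mathcal V}$ is exactly the trace-preserving conditional expectation onto $\mathcal N':=\mathcal B(\mathcal K)\otimes I_{\mathcal V}$, and $\mathcal E'(V\omega V^\ast)=\mathcal T(\omega)\otimes\frac1sI$. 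The obstruction to invoking Theorem~\ref{thm:bound-data-processing-BS} directly is that $\tilde\sigma:=V\sigma V^\ast$ is supported on $V\mathcal H$, hence rank deficient even if $\sigma$ is not; so the first step is to extend Theorem~\ref{thm:bound-data-processing-BS}, together with the estimate $\norm{\Gamma_{\mathcal N}}_\infty\le\norm{\Gamma}_\infty$ that enters its proof, to two states $\sigma,\rho$ with $\sigma^0=\rho^0$, with all inverses read as Moore--Penrose inverses, under the additional hypothesis $\sigma^0\le\mathcal E(\sigma)^0$. That hypothesis is automatic for a partial trace: since $\tr_{\mathcal V}[\tilde\sigma]=\st$, the state $\tilde\sigma$ is supported in $\st^0\otimes\mathcal V=\mathcal E'(\tilde\sigma)^0$.

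The extension I would establish by letting $\varepsilon\searrow0$ in \eqref{eq:BSdata-processing-bound} applied to $\sigma+\varepsilon I$, $\rho+\varepsilon I$. On the left, $\hat S_\mathrm{BS}(\sigma+\varepsilon I\|\rho+\varepsilon I)\to\hat S_\mathrm{BS}(\sigma\|\rho)$ by definition, and $\mathcal E$ being unital, $\mathcal E(\sigma+\varepsilon I)=\mathcal E(\sigma)+\varepsilon I$, so the subtracted term converges as well. On the right, working in the block decomposition along $\sigma^0\oplus(\sigma^0)^\perp$ and using $\sigma^0=\rho^0$, one sees that $(\sigma+\varepsilon I)^{-1/2}(\rho+\varepsilon I)(\sigma+\varepsilon I)^{-1/2}\to\Gamma\oplus I$, so its operator norm tends to $\max(\norm{\Gamma}_\infty,1)=\norm{\Gamma}_\infty$, where $\norm{\Gamma}_\infty\ge1$ holds because $\tr[\sigma\Gamma]=\tr\rho=\tr\sigma=1$; the remaining products of the form $\Gamma^{1/2}\sigma^{1/2}$ and $\sigma^{1/2}\mathcal E(\sigma)^{-1/2}\Gamma_{\mathcal N}^{1/2}\mathcal E(\sigma)^{1/2}$ are then read off from the same block picture, the spurious identity block being annihilated precisely because $\sigma^0\le\mathcal E(\sigma)^0$. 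The same limiting argument upgrades $\norm{\Gamma_{\mathcal N}}_\infty\le\norm{\Gamma}_\infty$.

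Next I would apply the extended theorem to $\tilde\sigma=V\sigma V^\ast$, $\tilde\rho=V\rho V^\ast$ and $\mathcal E'$. Using $V^\ast V=I$ one computes $\tilde\sigma^{-1/2}=V\sigma^{-1/2}V^\ast$, $\tilde\Gamma:=\tilde\sigma^{-1/2}\tilde\rho\tilde\sigma^{-1/2}=V\Gamma V^\ast$ (so $\norm{\tilde\Gamma}_\infty=\norm{\Gamma}_\infty$), $\mathcal E'(\tilde\sigma)=\st\otimes\frac1sI$ and $\mathcal E'(\tilde\sigma)^{-1/2}\mathcal E'(\tilde\rho)\mathcal E'(\tilde\sigma)^{-1/2}=\Gamma_{\mathcal T}\otimes I$, whence the operator appearing in the extended form of \eqref{eq:BSdata-processing-bound} equals $\tilde\sigma^{1/2}\mathcal E'(\tilde\sigma)^{-1/2}(\Gamma_{\mathcal T}\otimes I)^{1/2}\mathcal E'(\tilde\sigma)^{1/2}-\tilde\Gamma^{1/2}\tilde\sigma^{1/2}=V\sigma^{1/2}V^\ast(\st^{-1/2}\Gamma_{\mathcal T}^{1/2}\st^{1/2}\otimes I)-V\Gamma^{1/2}\sigma^{1/2}V^\ast$. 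This yields \eqref{eq:BSdata-processing-bound-channels} once the left-hand side is identified: invariance of $\hat S_\mathrm{BS}$ under conjugation by the isometry $V$ (which follows from the data processing inequality, Proposition~\ref{prop:data-processing-Maxfdiv}, applied to $\operatorname{Ad}_V$ and to a trace-preserving extension of $X\mapsto V^\ast X V$) gives $\hat S_\mathrm{BS}(\tilde\sigma\|\tilde\rho)=\hat S_\mathrm{BS}(\sigma\|\rho)$, while $\hat S_\mathrm{BS}(\mathcal E'(\tilde\sigma)\|\mathcal E'(\tilde\rho))=\hat S_\mathrm{BS}(\st\otimes\frac1sI\,\|\,\rt\otimes\frac1sI)=\hat S_\mathrm{BS}(\st\|\rt)$ by additivity of the BS-entropy together with $\hat S_\mathrm{BS}(\frac1sI\|\frac1sI)=0$.

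Finally, \eqref{eq:BSdata-processing-bound-channels-2} follows from \eqref{eq:BSdata-processing-bound-channels} by rerunning the computation of Lemma~\ref{lem:BS_data_processing_bound} in the dilated picture. Denoting by $A$ the operator in \eqref{eq:BSdata-processing-bound-channels} and setting $A':=A\,\mathcal E'(\tilde\sigma)^{-1/2}$, one checks that $A'(\Gamma_{\mathcal T}\otimes I)^{1/2}+\tilde\Gamma^{1/2}A'=\tilde\sigma^{-1/2}(\tilde\sigma\,\mathcal E'(\tilde\sigma)^{-1}\mathcal E'(\tilde\rho)-\tilde\rho)\,\mathcal E'(\tilde\sigma)^{-1/2}$ (using the equal support of $\tilde\sigma,\tilde\rho$ and $\tilde\sigma^0\le\mathcal E'(\tilde\sigma)^0$), bounds its Hilbert--Schmidt norm by $2\norm{\Gamma}_\infty^{1/2}\norm{A'}_2$ via $\norm{\Gamma_{\mathcal T}\otimes I}_\infty=\norm{\Gamma_{\mathcal T}}_\infty\le\norm{\Gamma}_\infty$, and observes that the factor $\sqrt s$ in $\norm{\mathcal E'(\tilde\sigma)^{-1/2}}_\infty=\sqrt s\,\norm{\st^{-1}}_\infty^{1/2}$ is cancelled by the factor $1/\sqrt s$ in $\norm{\mathcal E'(\tilde\sigma)^{1/2}}_\infty$, leaving $\norm{\tilde\sigma\,\mathcal E'(\tilde\sigma)^{-1}\mathcal E'(\tilde\rho)-\tilde\rho}_2\le2\norm{\Gamma}_\infty^{1/2}\norm{\st^{-1}}_\infty^{1/2}\norm{A}_2$. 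Sandwiching by $V^\ast(\cdot)V$, and using $V^\ast V=I$ and $\mathcal T^\ast(Y)=V^\ast(Y\otimes I)V$, turns $\tilde\sigma\,\mathcal E'(\tilde\sigma)^{-1}\mathcal E'(\tilde\rho)-\tilde\rho$ into $\sigma\,\mathcal T^\ast(\st^{-1}\rt)-\rho$ with non-increasing Hilbert--Schmidt norm, so that $\norm{A}_2\ge\frac12\norm{\Gamma}_\infty^{-1/2}\norm{\st^{-1}}_\infty^{-1/2}\norm{\sigma\,\mathcal T^\ast(\st^{-1}\rt)-\rho}_2$; substituting into \eqref{eq:BSdata-processing-bound-channels} and collecting the constant $(\pi/4)^4/16=(\pi/8)^4$ gives \eqref{eq:BSdata-processing-bound-channels-2}. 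I expect the main obstacle to be the rank-deficient extension --- tracking supports, the Moore--Penrose conventions, and the limit of $(\sigma+\varepsilon I)^{-1/2}(\rho+\varepsilon I)(\sigma+\varepsilon I)^{-1/2}$ --- rather than any conceptual issue; the dilation step is essentially bookkeeping once that extension is secured.
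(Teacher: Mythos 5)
Your proposal is correct and follows essentially the same route as the paper: extend Theorem \ref{thm:bound-data-processing-BS} (and Lemma \ref{lem:BS_data_processing_bound}) to states with equal, possibly deficient, support by the $\varepsilon$-regularization limit, then transfer through the Stinespring dilation with the conditional expectation $\tr_{\mathcal V}[\cdot]/s\otimes I$. You in fact spell out several points the paper leaves implicit — the support condition $\sigma^0\le\mathcal E(\sigma)^0$ that kills the spurious identity block, the cancellation of the $\sqrt{s}$ factors that produces $\norm{\st^{-1}}_\infty$, and the two-sided data-processing argument for $\hat S_{\mathrm{BS}}(V\omega V^\ast\|V\tau V^\ast)=\hat S_{\mathrm{BS}}(\omega\|\tau)$ — all of which check out.
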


\begin{proof}
Let us first justify that the quantities that appear in Equation \eqref{eq:BSdata-processing-bound-channels} are well-defined for non full-rank states for the case in which the map considered is a trace-preserving conditional expectation. Let us recall that the BS-entropy for  non full-rank states $\sigma$, $\rho$ is given by 
\begin{equation*}
\hat S_{\operatorname{BS}} (\sigma \| \rho ) = \underset{\varepsilon \searrow 0}{\operatorname{lim}} \, \hat S_{\operatorname{BS}} (\sigma + \varepsilon I  \| \rho  + \varepsilon I ). 
\end{equation*}
By virtue of Douglas' lemma \cite[Theorem 1]{Douglas1966} $\rho^0 = \sigma^0$ implies $\mathcal{T}(\rho)^0 = \mathcal{T}(\sigma)^0$ for every positive map $\mathcal{T}$.  Hence, it follows from \cite[Proposition 3.29]{Hiai2017} that the left-hand side of Equation \eqref{eq:BSdata-processing-bound-channels} is also finite for non full-rank states. Furthermore, given $a,b >0$ and $\sigma>0$, $\rho >0$, we can easily see that
\begin{equation*}
\hat{S}_{\operatorname{BS}} (a \sigma \| b \rho ) = a \hat{S}_{\operatorname{BS}} (\sigma \|  \rho ) + a \log \left( \frac{a}{b} \right). 
\end{equation*}
Given a conditional expectation $\mathcal{E} : \mathcal{M} \rightarrow \mathcal{N}$, we define 
\begin{equation*}
\sigma^\varepsilon:= \frac{\sigma + \varepsilon I}{1 + \varepsilon d}, \phantom{asda} \rho^\varepsilon:= \frac{\rho + \varepsilon I}{1 + \varepsilon d}, \phantom{asda} \sigma_\mathcal{N}^\nu := \frac{\sn + \nu I}{1 + \nu d}, \phantom{asda}  \rho_\mathcal{N}^\nu := \frac{\rn + \nu I}{1 + \nu d}.
\end{equation*}
Here, we have assumed that the identity in $\mathcal M$ has trace $d \in \mathbb N$. By the above, we have
\begin{align*}
\hat{S}_{\operatorname{BS}} (\sigma \| \rho ) - \hat{S}_{\operatorname{BS}} (\sn\| \rn ) & =  \underset{\varepsilon \searrow 0}{\operatorname{lim}} \, \hat S_{\operatorname{BS}} (\sigma + \varepsilon I  \| \rho  + \varepsilon I ) - \underset{\nu \searrow 0}{\operatorname{lim}} \, \hat S_{\operatorname{BS}} (\sn + \nu I  \| \rn  + \nu I ) \\
& =  \underset{\varepsilon \searrow 0}{\operatorname{lim}} \, \underset{\nu \searrow 0}{\operatorname{lim}} \, \left[ (1+d\varepsilon)\hat S_{\operatorname{BS}} (\sigma^\varepsilon  \| \rho^\varepsilon )- (1+d\nu) \hat S_{\operatorname{BS}} (\sn^\nu  \| \rn^\nu ) \right],
\end{align*}
where we can choose $\varepsilon = \nu$  in particular.   For $\sigma^\varepsilon, \rho^\varepsilon$, Equation \eqref{eq:BSdata-processing-bound} reads as
\begin{equation*}
\hat S_\mathrm{BS}(\sigma^\varepsilon \| \rho^\varepsilon) - \hat S_\mathrm{BS}(\sn^\varepsilon \| \rn^\varepsilon) \geq \left( \frac{\pi}{4} \right)^4 \norm{\Gamma^\varepsilon}_\infty^{-2}  \norm{ \left( \sigma^\varepsilon \right)^{1/2} \left( \sn^\varepsilon \right)^{-1/2} \left( \Gamma_\mathcal{N}^\varepsilon \right)^{1/2} \left( \sn^\varepsilon \right)^{1/2} - \left( \Gamma^\varepsilon \right)^{1/2} \left( \sigma^\varepsilon \right)^{1/2}}_2^4, 
\end{equation*}
where $\Gamma^\varepsilon := \left( \sigma^\varepsilon \right)^{-1/2} \rho^\varepsilon \left( \sigma^\varepsilon \right)^{-1/2}$ and $\Gamma_\mathcal{N}^\varepsilon := \left( \sn^\varepsilon \right)^{-1/2} \rn^\varepsilon \left( \sn^\varepsilon \right)^{-1/2}$. The only thing left to do is to write the right-hand side of the expression above in terms of $\sigma$ and $\rho$. However, expanding $\sigma^\varepsilon$ and $\rho^\varepsilon$ in the right basis, if we write $P = \sigma^0$, one can show that $\Gamma^\varepsilon$ converges to $\Gamma_P \oplus I$, where $\Gamma_P = \left( \sigma |_{P} \right)^{-1/2} \rho |_{P} \left( \sigma |_{P} \right)^{-1/2}$ and we identify $P$ with the subspace it projects onto. Moreover, we can see using the spectral decomposition  of $\sigma$ that $\norm{\Gamma_P}_\infty \geq 1$, such that
\begin{equation*}
\norm{\Gamma_P \oplus I}_\infty = \norm{\Gamma_P}_\infty
\end{equation*}
and
\begin{equation*}
\underset{\varepsilon \searrow 0}{\operatorname{lim}} \, (\Gamma^\epsilon)^{1/2} = \Gamma_P^{1/2} \oplus I.
\end{equation*}
Similar considerations lead to
\begin{equation} \label{eq:singular_states}
\hat S_\mathrm{BS}(\sigma \| \rho) - \hat S_\mathrm{BS}(\sigma_{\mathcal N} \| \rho_{\mathcal N}) \geq \left( \frac{\pi}{4} \right)^4 \norm{\Gamma}_\infty^{-2} \norm{  \sigma^{1/2}  \sn^{-1/2}  \Gamma_\mathcal{N}^{1/2}  \sn^{1/2} -  \Gamma^{1/2}  \sigma^{1/2}}_2^4,
\end{equation}
where the states $\sigma$ and $\rho$ are not necessarily full-rank anymore, and thus the inverses are now Moore-Penrose inverses. 
Now, following the steps of \cite{Wilde2018},  we are in position to apply Stinespring's dilation theorem (Theorem \ref{thm:stinespring}).

 Given $\omega$ and $\tau$ states on $\mathcal M$ such that $\omega^0 = \tau^0$, let us define 
\begin{equation*}
\sigma := V \omega V^*,
\end{equation*}
\begin{equation*}
\rho := V \tau V^*.
\end{equation*}
Then, it is clear that $\mathcal E(\sigma) = \mathcal{T}(\omega) \otimes I/s$ and $\mathcal E(\rho)= \mathcal{T}(\tau) \otimes I/s$ for $\mathcal{E}= \tr_{\mathcal V}[\cdot] \otimes I/s$ and $\dim \mathcal V = s$.  Since $\mathcal{E}$ is a conditional expectation, the Inequality \eqref{eq:singular_states} holds for it and $\sigma$ and $\rho$ defined as above, yielding:
\begin{equation*}
\hat S_\mathrm{BS}(\omega \| \tau) - \hat S_\mathrm{BS}(\mathcal{T}(\omega) \| \mathcal{T}(\tau)) \geq \left( \frac{\pi}{4} \right)^4 \norm{\Gamma}_\infty^{-2}  \norm{ V \omega^{1/2} V^*  \omega_{\mathcal{T}}^{-1/2} \Gamma_\mathcal{T}^{1/2} \omega_{\mathcal{T}}^{1/2}  \otimes I  - V  \Gamma^{1/2}  \omega^{1/2} V^*  }_2^4,
\end{equation*}
where here we define $\Gamma:= \omega^{-1/2} \tau \omega^{-1/2} $ and $\Gamma_\mathcal{T}:= \omega^{-1/2}_{\mathcal{T}} \tau_\mathcal{T} \omega^{-1/2}_{\mathcal{T}} $ for $\omega_{\mathcal{T}} := \mathcal{T}(\omega) $ and $ \tau_\mathcal{T}:= \mathcal{T}(\omega)$,  since
\begin{equation*}
\hat S_\mathrm{BS}(\sigma \| \rho)=\hat S_\mathrm{BS}(\omega \| \tau),
\end{equation*}
\begin{equation*}
\hat S_\mathrm{BS}(\mathcal E(\sigma) \| \mathcal E(\rho)) = \hat S_\mathrm{BS}(\mathcal{T}(\omega) \| \mathcal{T}(\tau)),
\end{equation*}
 for the terms in the left-hand side. Moreover,
\begin{align*}
\norm{\sigma^{-1/2}\rho \sigma^{-1/2}}_\infty & = \norm{ V \omega^{-1/2} V^* V\tau V^* V \omega^{-1/2} V^*}_\infty \\
& = \norm{\Gamma}_\infty
\end{align*}
and 
\begin{align*}
& \norm{\sigma^{1/2} \mathcal E(\sigma)^{-1/2} \left(\mathcal E(\sigma)^{-1/2} \mathcal E(\rho) \mathcal E(\sigma)^{-1/2} \right)^{1/2} \mathcal E(\sigma)^{1/2}  - \left( \sigma^{-1/2} \rho \sigma^{-1/2} \right)^{1/2} \sigma^{1/2} }_2 \\
& \phantom{asdasdasda} = \norm{V \omega^{1/2} V^* \omega_{\mathcal{T}}^{-1/2} \left( \omega_{\mathcal{T}}^{-1/2} \tau_\mathcal{T} \omega_{\mathcal{T}}^{-1/2} \right)^{1/2} \omega_{\mathcal{T}}^{1/2} \otimes I - V \left( \omega^{-1/2} \tau \omega^{-1/2} \right)^{1/2} \omega^{1/2} V^* }_2.
\end{align*}
for the terms in the right hand-side, where we have only used the fact that $V$ is an isometry.
The second assertion follows from minor adjustments to the proof of Lemma \ref{lem:BS_data_processing_bound}.
\end{proof}

Before we can continue, we need to prove that we obtain $\hat S_f(\sigma || \rho)$ for non-invertible $\sigma$, $\rho$ from a limit of states.
\begin{prop} \label{prop:different_sequence}
Let $\mathcal M \subseteq \mathcal B(\mathcal H)$ be a matrix algebra for a Hilbert space $\mathcal H$ of dimension $d$ and let $\sigma$, $\rho$ be states on $\mathcal M$ such that $\rho^0 = \sigma^0$. Then, 
\begin{equation*}
\hat S_f(\sigma || \rho) = \underset{\varepsilon \searrow 0}{\operatorname{lim}} \, \hat S_f((\sigma + \epsilon I)/(1+d \epsilon) || (\rho + \epsilon I)/(1+d \epsilon))
\end{equation*}
for every operator convex function $f:(0,\infty) \to \mathbb R$.
\end{prop}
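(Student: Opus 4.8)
The plan is to reduce the statement to the elementary fact that the maximal $f$-divergence is positively homogeneous of degree one under \emph{joint} rescaling of its two arguments. Indeed, for positive definite $\omega$, $\tau$ on $\mathcal M$ and any scalar $c > 0$ we have $(c\tau)^{-1/2}(c\omega)(c\tau)^{-1/2} = \tau^{-1/2}\omega\tau^{-1/2}$, so the self-adjoint operator at which $f$ is evaluated in the definition $\hat S_f(\omega\|\tau) = \tr[\tau^{1/2} f(\tau^{-1/2}\omega\tau^{-1/2})\tau^{1/2}]$ is unchanged by the rescaling, while the outer factors produce $(c\tau)^{1/2}(\cdot)(c\tau)^{1/2} = c\,\tau^{1/2}(\cdot)\tau^{1/2}$. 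Hence $\hat S_f(c\omega\|c\tau) = c\,\hat S_f(\omega\|\tau)$. First I would record this identity; it specializes, for $f(x) = x\log x$ and $c = a = b$, to the scaling relation for the BS-entropy already used in the proof of Theorem~\ref{thm:bound-data-processing-BS-channels}.

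Next I would apply it with $c = (1+d\varepsilon)^{-1}$, $\omega = \sigma + \varepsilon I$, $\tau = \rho + \varepsilon I$, which are positive definite for every $\varepsilon > 0$ and satisfy $\tr[(\sigma+\varepsilon I)/(1+d\varepsilon)] = \tr[(\rho+\varepsilon I)/(1+d\varepsilon)] = 1$ since $\tr[I] = d$. This yields
\[
\hat S_f\big((\sigma+\varepsilon I)/(1+d\varepsilon)\,\big\|\,(\rho+\varepsilon I)/(1+d\varepsilon)\big) = \frac{1}{1+d\varepsilon}\,\hat S_f(\sigma+\varepsilon I\,\|\,\rho+\varepsilon I)
\]
for all $\varepsilon > 0$.

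Finally I would let $\varepsilon \searrow 0$. By the definition of the maximal $f$-divergence for possibly singular states, $\hat S_f(\sigma+\varepsilon I\,\|\,\rho+\varepsilon I) \to \hat S_f(\sigma\|\rho)$, and under the hypothesis $\rho^0 = \sigma^0$ this limit is finite by \cite[Proposition~3.29]{Hiai2017}. Since $(1+d\varepsilon)^{-1} \to 1$, the right-hand side above converges to $\hat S_f(\sigma\|\rho)$, which is the claim. There is no real obstacle here; the only point deserving a word of care is the finiteness of $\hat S_f(\sigma\|\rho)$, which legitimizes passing to the limit in the product $\tfrac{1}{1+d\varepsilon}\hat S_f(\sigma+\varepsilon I\|\rho+\varepsilon I)$, and this is precisely where the assumption $\rho^0 = \sigma^0$ enters (if one dropped it, the same computation would still show that both sides tend to the same element of $[-\infty,+\infty]$).
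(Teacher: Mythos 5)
Your proof is correct, but it takes a different route from the paper's. The paper invokes Corollary~3.28 of Hiai--Mosonyi, which states that for $\rho^0=\sigma^0$ the value $\hat S_f(\sigma\|\rho)$ is recovered as $\lim_n \hat S_f(\sigma+K_n\|\rho+K_n)$ for \emph{any} sequence $K_n\geq 0$ with $K_n\to 0$ and $\sigma+K_n,\rho+K_n>0$, and then picks $K_n=\epsilon_n I/(1+\epsilon_n d)$. You instead observe the joint homogeneity $\hat S_f(c\omega\|c\tau)=c\,\hat S_f(\omega\|\tau)$ of the non-commutative perspective, which reduces the normalized perturbation $(\sigma+\epsilon I)/(1+d\epsilon)$ exactly to the unnormalized one $\sigma+\epsilon I$ appearing in the paper's own definition of $\hat S_f$ for singular states, so that only the defining limit and the finiteness statement of \cite[Proposition~3.29]{Hiai2017} are needed. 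Your argument is the more elementary and self-contained of the two, and it also sidesteps a small imprecision in the paper's proof: the choice $K_n=\epsilon_n I/(1+\epsilon_n d)$ yields $\sigma+\epsilon_n I/(1+\epsilon_n d)$, which is not literally equal to $(\sigma+\epsilon_n I)/(1+d\epsilon_n)$ (the two differ by $\tfrac{d\epsilon_n}{1+d\epsilon_n}\sigma$), so the paper implicitly needs either the homogeneity you prove or a slightly more general form of the approximation lemma. What the paper's route buys in exchange is generality, since Corollary~3.28 covers arbitrary approximating sequences, not just scalar multiples of the canonical one. Your parenthetical remark about the role of finiteness is also accurate: the hypothesis $\rho^0=\sigma^0$ guarantees a finite limit, though the identity between elements of $(-\infty,+\infty]$ would survive without it.
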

\begin{proof}
Proposition 3.29 of \cite{Hiai2017} asserts that $\hat S_f(\sigma || \rho)$ is finite. Let 
\begin{equation*}
P_f(A,B) :=  B^{1/2} f(B^{-1/2}AB^{-1/2}) B^{1/2} 
\end{equation*}
for positive definite $A$, $B \in \mathcal B(\mathcal H)$. This is the non-commutative perspective function defined in \cite[Equation 2.7]{Hiai2017}. Corollary 3.28 of \cite{Hiai2017} shows that for states such that $\rho^0 = \sigma^0$
\begin{equation*}
\hat S_f(\sigma || \rho) = \underset{n \to \infty}{\operatorname{lim}} \, \hat S_f(\sigma + K_n || \rho + K_n),
\end{equation*}
where $K_n \in \mathcal B(\mathcal H)$ is any sequence with $K_n \to 0$ such that $K_n \geq 0$ and $\sigma + K_n$, $\rho + K_n > 0$ for every $n \in \mathbb N$. Thus, in particular we can choose $K_n = \epsilon_n I/(1 + \epsilon_n d)$ and $\epsilon_n \to 0$.
\end{proof}

Using the same ideas that appear in the proof of the previous result together with Proposition \ref{prop:different_sequence}, we can also obtain from Theorem \ref{thm:bound-data-processing-max} the following analogous result for general quantum channels.

\begin{thm}\label{thm:bound-data-processing-max-channels}
Let $\mathcal M$ and $\mathcal{N}$ be two matrix algebras and let $\mathcal T: \mathcal M \to \mathcal N$ be a completely positive trace-preserving map with $V$ the isometry from its Stinespring dilation (Theorem \ref{thm:stinespring}). Let  $\sigma$, $\rho$ be two quantum states on $\mathcal{M}$ such that $\rho^0 = \sigma^0$ and let $f:(0,\infty) \to \mathbb R$ be an operator convex function with transpose $\tilde f$. We assume that $\tilde f$ is operator monotone decreasing and such that the measure $\mu_{- \tilde f}$ that appears in Theorem \ref{thm:operator-convex-Bhatia} is absolutely continuous with respect to  Lebesgue measure. Moreover, we assume that  for every $T \geq 1$, there exist constants $\alpha \geq 0$, $C > 0$ satisfying $\mathrm{d}\mu_{ - \tilde f} (t) /\mathrm{d}t \geq (C T^{2 \alpha})^{-1}  $ for all $t \in \left[ 1/T , T \right]$ and such that Equation \eqref{eq:not_too_far_from_dpi} holds. Then, there is a constant $K_{\alpha} > 0$ such that 
\begin{equation*} \label{eq:Max-data-processing-bound-channels}
\hat S_f(  \sigma \| \rho) - \hat S_f(\st \| \rt)   \geq \frac{K_\alpha}{C} \left( 1 +  \norm{\Gamma}_\infty\right)^{-(4 \alpha + 2)}  \norm{ V \sigma^{1/2} V^\ast  \st^{-1/2} \Gamma_\mathcal{T}^{1/2} \st^{1/2} \otimes I - V\Gamma^{1/2}  \sigma^{1/2}V^\ast}_2^{4(\alpha + 1)}.
\end{equation*}
Furthermore,  there is another constant $L_\alpha > 0$ such that
\begin{equation*} 
 \hat{S}_f (\sigma\| \rho) -\hat{S}_f (\st\| \rt)   \geq \frac{L_\alpha}{C} \left( 1 +  \norm{\Gamma}_\infty \right)^{-(4 \alpha + 2)}  \norm{\Gamma}_\infty^{-(2 \alpha + 2)} \norm{\st^{-1}}_\infty^{-(2 \alpha + 2)} \norm{ \rho - \sigma \, \mathcal{T}^* \left(  \st^{-1} \rt \right)  }_2^{4(\alpha +1)}.
\end{equation*}
Here, we consider again Moore-Penrose inverses if the states are not invertible.
\end{thm}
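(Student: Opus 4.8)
The plan is to follow the two-stage strategy already used for the BS-entropy in the proof of Theorem~\ref{thm:bound-data-processing-BS-channels}: first drop the full-rank hypothesis in the conditional-expectation case of Theorem~\ref{thm:bound-data-processing-max}, and then reduce a general channel to a conditional expectation by Stinespring's theorem. \emph{Step 1 (states with equal support, $\mathcal T=\mathcal E$).} Fix the trace-preserving conditional expectation $\mathcal E:\mathcal M\to\mathcal N$ and states $\sigma,\rho$ with $\rho^0=\sigma^0$, and regularize by $\sigma^\varepsilon:=(\sigma+\varepsilon I)/(1+d\varepsilon)$ and likewise $\rho^\varepsilon,\sn^\varepsilon,\rn^\varepsilon$, where $d=\tr[I]$. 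Since $\mathcal E$ is unital and trace preserving, $\mathcal E(\sigma^\varepsilon)=\sn^\varepsilon$ and $\mathcal E(\rho^\varepsilon)=\rn^\varepsilon$, so Theorem~\ref{thm:bound-data-processing-max} applies to $\sigma^\varepsilon,\rho^\varepsilon$. By Proposition~\ref{prop:different_sequence} the two $f$-divergences converge to $\hat S_f(\sigma\|\rho)$ and $\hat S_f(\sn\|\rn)$ as $\varepsilon\searrow0$, and, exactly as in the proof of Theorem~\ref{thm:bound-data-processing-BS-channels} (writing $P=\sigma^0$, using $\Gamma^\varepsilon\to\Gamma_P\oplus I$ with $\norm{\Gamma_P}_\infty\ge1$, so that $\norm{\Gamma^\varepsilon}_\infty\to\norm{\Gamma}_\infty$ and $(\Gamma^\varepsilon)^{1/2}\to\Gamma_P^{1/2}\oplus I$, and similarly for $\Gamma_{\mathcal N}^\varepsilon$), the right-hand side of \eqref{eq:Max-data-processing-bound} converges to the same expression with Moore-Penrose inverses. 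One extra point to verify is that hypothesis \eqref{eq:not_too_far_from_dpi} is not lost in the limit: since $\norm{\Gamma^\varepsilon}_\infty$ and $\hat S_f(\sigma^\varepsilon\|\rho^\varepsilon)-\hat S_f(\sn^\varepsilon\|\rn^\varepsilon)$ converge to their $\varepsilon=0$ values, the hypothesis holds for $\sigma^\varepsilon,\rho^\varepsilon$ up to a slack that tends to $0$, which suffices because the resulting bound is continuous in those quantities and trivial when the entropy difference vanishes.

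\emph{Step 2 (Stinespring reduction).} Given the general channel $\mathcal T:\mathcal M\to\mathcal N$ and states $\omega,\tau$ on $\mathcal M$ with $\omega^0=\tau^0$, take the Stinespring isometry $V:\mathcal H\hookrightarrow\mathcal K\otimes\mathcal V$ of Theorem~\ref{thm:stinespring}, set $s=\dim\mathcal V$ and $\sigma:=V\omega V^*$, $\rho:=V\tau V^*$. Then $\mathcal E(\sigma)=\mathcal T(\omega)\otimes I/s$ and $\mathcal E(\rho)=\mathcal T(\tau)\otimes I/s$ for the conditional expectation $\mathcal E=\tr_{\mathcal V}[\cdot]\otimes I/s$, and $\hat S_f$ is invariant under the isometric embedding by $V$ and under tensoring both arguments with the fixed state $I/s$, so $\hat S_f(\sigma\|\rho)=\hat S_f(\omega\|\tau)$ and $\hat S_f(\mathcal E(\sigma)\|\mathcal E(\rho))=\hat S_f(\mathcal T(\omega)\|\mathcal T(\tau))$. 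Applying Step 1 to $\sigma,\rho$ and rewriting both sides using $V^*V=I$ — exactly as in the derivation of \eqref{eq:BSdata-processing-bound-channels}, so that $\norm{\sigma^{-1/2}\rho\sigma^{-1/2}}_\infty=\norm{\Gamma}_\infty$ for $\Gamma:=\omega^{-1/2}\tau\omega^{-1/2}$ and the recovery term turns into $V\omega^{1/2}V^*\,\st^{-1/2}\Gamma_\mathcal{T}^{1/2}\st^{1/2}\otimes I-V\Gamma^{1/2}\omega^{1/2}V^*$ — yields the first displayed inequality; relabeling $(\omega,\tau)$ as $(\sigma,\rho)$ finishes it, with the constant $K_\alpha$ of Theorem~\ref{thm:bound-data-processing-max}.

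\emph{Step 3 (the bound via the BS recovery map).} The second inequality follows from the first one by chaining it with the estimate
\[
\norm{V\sigma^{1/2}V^*\,\st^{-1/2}\Gamma_\mathcal{T}^{1/2}\st^{1/2}\otimes I-V\Gamma^{1/2}\sigma^{1/2}V^*}_2\ \ge\ \frac12\,\norm{\Gamma}_\infty^{-1/2}\,\norm{\st^{-1}}_\infty^{-1/2}\,\norm{\sigma\,\mathcal T^*(\st^{-1}\rt)-\rho}_2 ,
\]
which is precisely the estimate used to pass from \eqref{eq:BSdata-processing-bound-channels} to \eqref{eq:BSdata-processing-bound-channels-2} in the proof of Theorem~\ref{thm:bound-data-processing-BS-channels}. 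It is a short variant of Lemma~\ref{lem:BS_data_processing_bound}: writing $A$ for the operator on the left and $\Theta:=\st^{-1/2}\Gamma_\mathcal{T}^{1/2}\st^{1/2}\otimes I$, one uses $V^*V=I$ to get the identity $V\sigma^{1/2}V^*\Theta^2-V\Gamma\sigma^{1/2}V^*=A\Theta+V\Gamma^{1/2}V^*A$, bounds its Hilbert-Schmidt norm above by $2\norm{A}_2\norm{\st^{-1}}_\infty^{1/2}\norm{\Gamma}_\infty^{1/2}$ using $\norm{\Theta}_\infty\le\norm{\st^{-1}}_\infty^{1/2}\norm{\Gamma}_\infty^{1/2}$ and $\norm{\st^{-1}}_\infty\ge1$, and bounds it below, after conjugating by $V^*(\cdot)V$ and using $V^*\Theta^2V=\mathcal T^*(\st^{-1}\rt)$, by $\norm{\sigma\,\mathcal T^*(\st^{-1}\rt)-\rho}_2$; here $\norm{\Gamma_\mathcal{T}}_\infty\le\norm{\Gamma}_\infty$ since $\rho\le\norm{\Gamma}_\infty\sigma$ and $\mathcal T$ is positive. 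Inserting this into the first bound and collecting constants gives $L_\alpha=K_\alpha\,2^{-4(\alpha+1)}$, and the non-invertible case is handled with Moore-Penrose inverses by the limiting argument of Step 1 applied to the (in general non-full-rank) states $V\omega V^*$, $V\tau V^*$ together with Proposition~\ref{prop:different_sequence}. The step I expect to be the genuine obstacle is Step 1: one must control $\Gamma^\varepsilon$, $\Gamma_{\mathcal N}^\varepsilon$ and the whole right-hand side — including the fractional power $4(\alpha+1)$ and the prefactor $(1+\norm{\Gamma}_\infty)^{-(4\alpha+2)}$ — through the regularization and check the stability of \eqref{eq:not_too_far_from_dpi}; once the non-full-rank conditional-expectation case is in place, Steps 2 and 3 are essentially bookkeeping with the isometry $V$.
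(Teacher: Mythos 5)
Your proposal is correct and follows exactly the route the paper intends: the paper's own "proof" of this theorem is a one-line reference to repeating the argument of Theorem \ref{thm:bound-data-processing-BS-channels} (regularization to non-full-rank states, Stinespring reduction to the conditional expectation $\tr_{\mathcal V}[\cdot]\otimes I/s$, and a Lemma \ref{lem:BS_data_processing_bound}-type estimate) together with Proposition \ref{prop:different_sequence}. In fact your write-up supplies more detail than the paper does, in particular the explicit operator identity behind the "minor adjustments" to Lemma \ref{lem:BS_data_processing_bound} and the stability of hypothesis \eqref{eq:not_too_far_from_dpi} under the regularization.
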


\begin{remark}
Notice that the procedure followed to extend Theorems \ref{thm:bound-data-processing-BS} and \ref{thm:bound-data-processing-max} to Theorems \ref{thm:bound-data-processing-BS-channels} and \ref{thm:bound-data-processing-max-channels}, respectively, consists of two main ingredients: The extension of our previous results to not necessarily full-rank states followed by Stinespring's dilation theorem. Analogously to what we have done in this section, this procedure can be also applied to the setting presented in \cite{Carlen2017a} and \cite{Carlen2018} to extend the main results therein to general quantum channels. In particular, Equation \eqref{eq:vc_main_result} holds for general quantum channels.
\end{remark}

\bigskip

\noindent {\it Acknowledgments.} 
A.B.\ acknowledges support from the ISAM Graduate Center at
the Technische Universit{\"a}t M{\"u}nchen and would like to thank the group of David P{\'e}rez-Garc{\'i}a for their hospitality. A.C.\ is partially supported by a La Caixa-Severo Ochoa grant (ICMAT Severo Ochoa project SEV-2011-0087, MINECO) and acknowledges support from MINECO (grant MTM2017-88385-P) and from Comunidad de Madrid (grant QUITEMAD-CM, ref. P2018/TCS-4342). This project has received funding from the European Research Council (ERC) under the European Union’s Horizon 2020 research and innovation programme (grant agreement No 648913). Moreover, A.C.\ would like to thank the group of Michael M.\ Wolf for their hospitality. Both authors would like to thank Mil{\'a}n Mosonyi for clarifying some aspects of maximal $f$-divergences for them and Mark Wilde for bringing \cite{Wilde2018} to their attention, which inspired them to extend their results to general quantum channels. Finally, both authors would like to thank the Institut Henri Poincar{\'e} in Paris and the organizers of the trimester on ``Analysis in Quantum Information Theory", during which the idea for this project was born.

\bibliographystyle{alpha}
\bibliography{petzlit}

\vspace{1cm}
\end{document}